\renewcommand{\Re}{\mathop{\rm Re}\nolimits}
\renewcommand{\Im}{\mathop{\rm Im}\nolimits}
\theoremstyle{plain} \newtheorem{theorem}{Theorem}[section]
\newtheorem{lemma}[theorem]{Lemma}
\newtheorem{proposition}[theorem]{Proposition}
\newtheorem{corollary}[theorem]{Corollary} \theoremstyle{definition}
\newtheorem{definition}[theorem]{Definition} \theoremstyle{remark}
\newtheorem{assumption}[theorem]{Assumption} \theoremstyle{remark}
\newtheorem{remark}[theorem]{Remark} 
\newcommand{\R}{{\mathbb R}}
\newcommand{\Z}{{\mathbb Z}}
\newcommand{\N}{{\mathbb N}}
 \def\im{{\rm i}}
\newcommand{\C}{\mathbb{C}} \newcommand{\T}{\mathbb{T}}
\newcommand{\cC}{\mathcal{C}}
\newcommand{\Cp}{\C_p} 
\newcommand{\Tp}{\T_+}
\def\({\left(}
\def\){\right)}
\def\<{\left\langle}
\def\>{\right\rangle}
\numberwithin{equation}{section}
\begin{document}

\title{Dispersive estimates for quantum walks on 1D lattice}

\author {Masaya Maeda, Hironobu Sasaki, Etsuo Segawa, Akito Suzuki, Kanako Suzuki }

\maketitle

\begin{abstract}
We consider quantum walks with position dependent coin on 1D lattice $\Z$.
The dispersive estimate $\|U^tP_c u_0\|_{l^\infty}\lesssim (1+|t|)^{-1/3} \|u_0\|_{l^1}$ is shown under $l^{1,1}$ perturbation for the generic case and $l^{1,2}$ perturbation for the exceptional case, where $U$ is the evolution operator of a quantum walk and $P_c$ is the projection to the continuous spectrum.
This is an analogous result for Schr\"odinger operators and discrete Schr\"odinger operators.  
The proof is based on the estimate of oscillatory integrals expressed by Jost solutions.
\footnote[0]{Mathematics Subject Classification (2020):35Q41, 81U30}
\end{abstract}

\section{Introduction and main results}

In this paper, we study space-time discrete unitary dynamics called quantum walks (QWs).
QWs  are quantum analog of classical random walks \cite{ABNVW01,FH10Book,Gudder88Book,Meyer96JSP}.
QWs  are now attracting diverse interest due to its connections to various regimes of mathematics and physics such as quantum search algorithms \cite{AKR05Proc,Childs09PRL,Portugal13Book} and topological insulators \cite{AO13PRB,CGSVWW16JPA,EKOS17JPA, FFS18JMP,
	GNVW12CMP,Kitaev06AP,Kitagawa12QIP,KRBD10PRA}. 
Moreover, QWs  have been realized experimentally by optical lattice \cite{Karski09Science}, photons \cite{SCPGJS11,Schreiberetal12Science} and ion trappping \cite{ZKGSBR10PRL} (see \cite{MW14Book} for more reference).

We first give a formal definition of QWs on $\Z$.
For a $\C^2$-valued map $u:\Z\to \C^2$, we write
\begin{align*}
u(x)=\begin{pmatrix}
u_\uparrow(x)\\ u_\downarrow(x)
\end{pmatrix} \ \text{ and } \|u(x)\|_{\C^2}^2:=|u_\uparrow(x)|^2+|u_\downarrow(x)|^2.
\end{align*}
For a Banach space $X$ (usually $\C$ or $\C^2$), we set
\begin{align*}
l^{p,\sigma}(\Z,X):=\left\{ u:\Z\to X\ |\ \|u\|_{l^{p,\sigma}}<\infty \right\}\ \text{where } \|u\|_{l^{p,\sigma}}^p:=\sum_{x\in \Z}\<x\>^{p\sigma}\|u(x)\|_X^p,
\end{align*}
$l^p(\Z,X):=l^{p,0}(\Z,X)$ and $\mathcal{H}:=l^2(\Z,\C^2)$ where $\<x\>^2:=1+|x|^2$.
When $X$ is clear from the context, we write $l^{p,\sigma}$ instead of $l^{p,\sigma}(\Z,X)$.
For Banach spaces $X,Y$, we set $\mathcal L(X,Y)$ to be the Banach space of all bounded linear operators from $X$ to $Y$.
Further, we set $\mathcal L(X):=\mathcal L(X,X)$.

The shift operator $S\in \mathcal L(\mathcal{H})$ and the coin operator $\hat C \in \mathcal L(\mathcal{H})$ are given by
\begin{align*}
S=\begin{pmatrix}\tau & 0 \\ 0 & \tau^{-1} \end{pmatrix},\quad \(\hat C u\)(x)=C(x)u(x),
\end{align*}
where $\(\tau u\)(x)=u(x- 1)$ and $C:\Z \to U(2):=\{C:\ 2\times 2\text{ unitary operator}\}$.
In the following, by abuse of notation, we sometimes write $\hat C=C(x)$.

Given $S$ and $\hat C$, QWs  on a lattice $\Z$ are unitary evolution dynamics on $\mathcal{H}$ generated by unitary operators with the form
\begin{align*}
U=S\hat C.
\end{align*}
That is, the quantum walker $u(t)$ with initial state $u(0)=u_0\in \mathcal{H}$ is given by
\begin{align*}
u(t)=U^t u_0,\ t\in\Z.
\end{align*}

We are interested in the long time behavior of QWs with coins $C(x)$ that converge to some constant coin $C_0$ as $|x|\to \infty$.
There are many ways understanding the dynamics of QWs such as seeking a weak limit theorem 
\cite{Endo16IIS, FFS18, Konno02QIP, MSSSS18QIP,
SST18LMP}, 
considering the asymptotic velocity \cite{GJS04, Suzuki16QIP}, and studying the spectral stability (or scattering theory) 
\cite{ABJ15RMP, SS16QS, SST17LMP}.
In this paper, we will consider the dispersive estimate of QWs, which is the $l^\infty$--$l^1$ decay estimate:
\begin{align}\label{Dispersive}
\|U^tP_c u_0 \|_{l^\infty}\lesssim \<t\>^{-1/3} \|u_0\|_{l^1},\quad t\in \Z,
\end{align}
where $P_c$ is the projection to the continuous spectrum (see \eqref{def:proj} below) and $a\lesssim b$ means $a \leq C b$ with some positive constant $C$.

Dispersive estimate \eqref{Dispersive} provides quantitative information of the continuous component $P_c u(t)$ of the state.
Note that we cannot obtain dispersive estimate only from the absolute continuity of the continuous spectrum, which is a qualitative information.
In this sense, dispersive estimate gives us deeper understanding of the dynamics of QWs.
However, this is not the only significance of dispersive estimate.
In the field of nonlinear dispersive partial differential equations, dispersive estimates are important because they provide the Strichartz estimates \cite{KT98AJM} which are the fundamental inequalities for the study of the well-posedness and asymptotic behavior of the solutions.
Moreover, dispersive estimate plays a crucial role for the study of  asymptotic stability of solitons \cite{JSS91CPAM,SW92JDE}.
For discrete dispersive equations (the space is discrete but the time is continuous) such as discrete Schr\"odinger equations (also called continuous time QWs), dispersive estimate was studied by \cite{CT09SIAM, EKT15JST, PS08JMP, SK05N}, with direct application to the stability of solitons \cite{Bambusi13CMPdb, CT09SIAM, KPS09SIAM, MP12DCDS}.
Recently, nonlinear QWs, which are QWs with the coin operator depending on the state (see \cite{MSSSS18DCDS}), have been proposed by several authors \cite{GTB16PRA, LKN15PRA, NPR07PRA}.
Since soliton phenomena are observed numerically \cite{LKN15PRA,MSSSS19JPC, NPR07PRA}, it seems to be important to prepare dispersive estimate for QWs, which should play a crucial role for the study of asymptotic stability of solitons of nonlinear QWs.

For QWs with constant coin $C(x)\equiv C_0$, estimate \eqref{Dispersive} was shown by Sunada-Tate \cite{ST12JFA} (see also \cite{MSSSS18DCDS}).
In \cite{MSSSS18DCDS}, it was applied to nonlinear QWs with weak nonlinearity to show the scattering of the solutions.

We prepare several notations and assumptions to state our result precisely.
Recall that for $C\in U(2)$, there exists unique (up to the symmetry $(\alpha,\beta,\theta)\mapsto (-\alpha,-\beta,\theta+\pi)$) $(\alpha,\beta,\theta)\in \C^2\times \T$ ($\T:=\R/2\pi\Z$) satisfying $|\alpha|^2+|\beta|^2=1$ s.t.
\begin{align*}
C=C_{\alpha,\beta,\theta}:=e^{\im \theta}\begin{pmatrix}
\beta & \overline{\alpha}\\ -\alpha & \overline{\beta}
\end{pmatrix}.
\end{align*}

For the limiting coin $C_0$ and the coin operator $\hat C$, we will always assume the following:
\begin{assumption}[$l^1$ perturbation]\label{ass:1}
	We assume $C_0=C_{\alpha_0,\beta_0,0}$ and $C(x)=C_{\alpha(x),\beta(x),\theta(x)}$ with $0<|\alpha_0|<1$ and $|\alpha(x)|<1$ for all $x\in \Z$ and  $\| \alpha(\cdot)-\alpha_0\|_{l^1}+\|\theta(\cdot)\|_{l^1}<\infty$.
\end{assumption}

\begin{remark}
	There is no loss of generality restricting $C_0=C_{\alpha_0,\beta_0,0}$ instead of $C_0=C_{\alpha_0,\beta_0,\theta_0}$.
\end{remark}

\begin{remark}
	The condition $|\alpha(x)|\neq 1$ for all $x\in \Z$ is not technical.
	If there exists $x_0\in \Z$ s.t.\ $|\alpha(x_0)|=0$ (and $\beta(x_0)=0$), the QW will be decoupled to two noninteracting systems.
	See, section \ref{sec:decoup} in the appendix of this paper. 
\end{remark}



We first prove the basic spectral property for $l^1$ perturbation.
It is well known that 
$$\sigma_{\mathrm{ess}}(U)=\{e^{\im \lambda}\ |\ \lambda\in\R,\  |\cos \lambda|\leq \sqrt{1-|\alpha_0|^2}\},$$
where $\sigma_{\mathrm{ess}}(U)$ is the essential spectrum of $U$.

\begin{definition}
	We will call $\{e^{\im \lambda}\ |\ \lambda\in\R,\  |\cos \lambda| < \sqrt{1-|\alpha_0|^2}\}$, the interior of the essential spectrum and $\{e^{\im \lambda}\ |\ \lambda\in\R,\  |\cos \lambda| = \sqrt{1-|\alpha_0|^2}\}$ the edge of the essential spectrum.
	Note that the edge of essential spectrum consist of 4 points.
\end{definition}

\begin{proposition}\label{prop:spec}
Under Assumption \ref{ass:1}, we have the following:
\begin{itemize}
\item[(i)]
$U$ has no singular continuous spectrum.

\item[(ii)]
There exists no eigenvalue in the interior of the essential spectrum.
Moreover, under the stronger assumption $\|\alpha(\cdot)-\alpha_0\|_{l^{1,1}}+\|\theta(\cdot)\|_{l^{1,1}}<\infty$, there exists no eigenvalue on the edge of the essential spectrum.
\item[(iii)]
If $w$ is an eigenvalue of $U$, it is simple. That is, $\mathrm{dim}\ \mathrm{ker} (U-w) =1$.
\end{itemize}
\end{proposition}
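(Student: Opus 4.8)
The plan is to reduce the eigenvalue equation to a two--dimensional transfer--matrix recurrence, construct its Jost solutions, and then read off the three assertions from the asymptotics of solutions.

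First I would rewrite $Uu=wu$, with $w=e^{\im\lambda}$ since $\sigma(U)\subset\T$, componentwise. Using $(Uu)(x)=\big((C(x-1)u(x-1))_\uparrow,\,(C(x+1)u(x+1))_\downarrow\big)$ together with $C(x)=C_{\alpha(x),\beta(x),\theta(x)}$, the two scalar equations can be solved for $u(x+1)$ in terms of $u(x)$, yielding
\begin{align*}
\begin{pmatrix}u_\uparrow(x+1)\\ u_\downarrow(x+1)\end{pmatrix}=T(x,w)\begin{pmatrix}u_\uparrow(x)\\ u_\downarrow(x)\end{pmatrix},\qquad \det T(x,w)=e^{\im(\theta(x)-\theta(x+1))}\frac{\beta(x)}{\overline{\beta(x+1)}}.
\end{align*}
Inverting the $\downarrow$--equation requires $\overline{\beta(x+1)}\neq0$, i.e.\ $|\alpha(x+1)|<1$, which is exactly the hypothesis retained in Assumption \ref{ass:1}; this is the analytic reason it is not merely technical. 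Hence the space of formal solutions of $Uu=wu$ is two--dimensional, each solution being determined by its value at one site. As $x\to\pm\infty$ one has $T(x,w)\to T_0(w)$, the transfer matrix of the constant coin $C_0$, and the spectrum of $T_0(w)$ dictates the asymptotics: its eigenvalues lie on $\T$ (elliptic case) exactly when $e^{\im\lambda}$ is interior to $\sigma_{\mathrm{ess}}(U)$, they are reciprocals $\mu,\mu^{-1}$ with $|\mu|\neq1$ (hyperbolic case) when $e^{\im\lambda}\notin\sigma_{\mathrm{ess}}(U)$, and they coalesce to a double eigenvalue $\pm1$ with a nontrivial Jordan block at the four edge points.

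In the elliptic and hyperbolic ranges I would construct two independent Jost solutions asymptotic to $e^{\pm\im k(\lambda)x}$ at $+\infty$, and similarly at $-\infty$, by a discrete Volterra/Born iteration whose convergence is controlled by $\|\alpha(\cdot)-\alpha_0\|_{l^1}+\|\theta(\cdot)\|_{l^1}<\infty$. From these I obtain a limiting absorption principle for $(U-ze^{\im\lambda})^{-1}$ as $|z|\to1$, with boundary values continuous in $\lambda$ on the interior of $\sigma_{\mathrm{ess}}$ in suitable weighted spaces. Since the four edge points cannot by themselves carry continuous spectrum, this yields (i): the absolutely continuous boundary values preclude a singular continuous part. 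It also yields the interior statement of (ii): an embedded eigenvalue would be a pole of the resolvent, contradicting continuity of the boundary values; equivalently, in the elliptic range every nonzero solution is asymptotic at $+\infty$ to a nonvanishing combination $a e^{\im k x}+b e^{-\im k x}$ with $l^1$--summable remainder, hence cannot lie in $l^2$.

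The delicate point, and the main obstacle, is the edge of $\sigma_{\mathrm{ess}}$, where $T_0(w)$ has a Jordan block: the bounded solutions of the unperturbed recurrence are no longer oscillatory but may grow linearly in $x$, so the plain $l^1$ iteration no longer separates decaying from non--decaying behaviour. There I would run an iteration adapted to the parabolic normal form and exploit the extra weight in $\|\alpha(\cdot)-\alpha_0\|_{l^{1,1}}+\|\theta(\cdot)\|_{l^{1,1}}<\infty$ to gain the one additional power of $\<x\>$ needed to show that any solution decaying at one end fails to decay at the other, ruling out an $l^2$ threshold eigenfunction. This is the exact analogue of the threshold/zero--energy resonance analysis for (discrete) Schr\"odinger operators. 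Finally, (iii) follows from a Wronskian argument insensitive to the location of $w$: for two solutions $\psi_j(x)=\big(u^{(j)}_\uparrow(x),u^{(j)}_\downarrow(x)\big)^{\!\top}$ of the recurrence, $W(x):=\det[\psi_1(x)\,|\,\psi_2(x)]$ obeys $W(x+1)=\det T(x,w)\,W(x)$, and since $|\det T(x,w)|=|\beta(x)|/|\beta(x+1)|$ telescopes with $|\beta(x)|\to|\beta_0|>0$, the quantity $|W(x)|$ tends to a finite limit as $x\to+\infty$. If $w$ admitted two linearly independent $l^2$ eigenfunctions, then $\psi_j(x)\to0$ would force $W\equiv0$, so $\psi_1,\psi_2$ would be linearly dependent at every site and hence scalar multiples of one another, a contradiction. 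Therefore $\dim\ker(U-w)\le1$, which is the simplicity asserted in (iii).
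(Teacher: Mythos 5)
Your proposal is correct and follows essentially the same route as the paper: reformulation as a transfer-matrix recurrence, Jost solutions built by a discrete Volterra iteration ($l^1$ away from the band edges, $l^{1,1}$ at the Jordan-block edge points, exactly the paper's triangularization of $T_{0,\lambda}$ in Lemma \ref{lem:9.1} and Proposition \ref{prop:12}), a limiting absorption principle for (i) (Lemma \ref{lem:13.1}), exclusion of embedded eigenvalues via the non-decaying bounded Jost solution for (ii) (Proposition \ref{prop:16} with Corollary \ref{cor:Wron}), and constancy of the Wronskian for (iii). The only cosmetic difference is that you work in the vertex representation, where $\det T(x,w)=e^{\im(\theta(x)-\theta(x+1))}\beta(x)/\overline{\beta(x+1)}$ must be telescoped, whereas the paper first applies a gauge transformation and passes to the CMV/edge variables $J_{\mathrm{VE}}u$, for which $\det T_\lambda(x)=1$ and the Wronskian is exactly constant (Proposition \ref{prop:Wronsky}).
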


\begin{remark}
The absence of singular continuous spectral has been proven under a slightly stronger condition \cite{ABJ15RMP}.
The absence of embedded eigenvalue seems to be a new result.
\end{remark}

We introduce the notation of resonance.
\begin{definition}\label{def:resonance}
Let $w$ be on the edge of essential spectrum of $U$.
If there exists a bounded solution of $(U-w)u=0$, we say $w$ is a resonance.
If all 4 points of the edge of the essential spectrum of $U$ are not resonance, we say $U$ is generic.
If not, we say $U$ is exceptional.  
\end{definition}

An alternative definition of resonance will be given in Lemma \ref{lem:17}.

Under stronger assumption, we have further information of the spectrum.
\begin{proposition}\label{prop:finite}
Assume $\|\alpha(\cdot)-\alpha_0\|_{l^{1,k}}+\|\theta(\cdot)\|_{l^{1,k}}<\infty$ with $k=1$ for the generic case and $k=2$ for the exceptional case.
Then, $\sigma_{\mathrm{d}}(U)$, the set of discrete spectrum of $U$, is a finite set.
\end{proposition}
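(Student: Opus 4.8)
The plan is to realize the discrete eigenvalues as the zeros of a scalar analytic Jost (Wronskian) function and then combine analyticity with the improved boundary regularity supplied by the stronger weight. By Proposition~\ref{prop:spec}, under these hypotheses $U$ has no singular continuous spectrum and no eigenvalue either in the interior or on the edge of the essential spectrum; since $U$ is unitary its spectrum lies on $\T$, so every point of $\sigma_{\mathrm d}(U)$ is an isolated, simple eigenvalue contained in one of the two open spectral gaps
\[
G_+=\{e^{\im\lambda}\ |\ \cos\lambda>\sqrt{1-|\alpha_0|^2}\},\qquad
G_-=\{e^{\im\lambda}\ |\ \cos\lambda<-\sqrt{1-|\alpha_0|^2}\},
\]
each of which is an arc bounded by two of the four edge points. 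Hence finiteness of $\sigma_{\mathrm d}(U)$ is equivalent to the statement that these eigenvalues do not accumulate, and the only possible accumulation points are the four edges.

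First I would recall the Jost solutions $f_\pm(\cdot,z)$ of $(U-z)u=0$, for $z$ in a gap taken to be the solutions recessive at $+\infty$ and $-\infty$ respectively; under Assumption~\ref{ass:1} these are constructed by a convergent Volterra/Neumann iteration whose summability is exactly the $l^1$ condition. For $z$ in a gap, $z$ is an eigenvalue if and only if $f_+(\cdot,z)$ and $f_-(\cdot,z)$ are linearly dependent, i.e.\ iff their Wronskian $\Delta(z)$ vanishes; this is the content of the alternative characterization in Lemma~\ref{lem:17}, which likewise identifies an edge resonance with the vanishing of $\Delta$ at that edge. The map $z\mapsto\Delta(z)$ is analytic on the interior of each gap and cannot vanish identically (otherwise an entire arc would consist of eigenvalues, contradicting the discreteness of the point spectrum), so by the identity theorem $\Delta$ has only finitely many zeros on any compact subset of a gap. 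Thus there are only finitely many eigenvalues away from the edges, and it remains to control $\Delta$ near the four edge points.

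This last step is the crux, and it is where the extra weight enters. At an edge the defining symbol has a degenerate critical point (the vanishing of the group velocity responsible for the $\<t\>^{-1/3}$ rate), so $f_\pm$ and hence $\Delta$ are only borderline controlled under $l^1$ decay, because the recessive solutions lose their exponential decay rate as $z$ approaches the edge. The plan is to show that the stronger decay yields a regular one-sided extension of $\Delta$ up to each edge: with $\|\alpha-\alpha_0\|_{l^{1,1}}+\|\theta\|_{l^{1,1}}<\infty$ the Jost iterates converge up to the edge and $\Delta$ extends continuously there, while with the weight $l^{1,2}$ one gains an extra derivative and a finite expansion of $\Delta$ in the appropriate local uniformizing variable. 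In the generic case, by Definition~\ref{def:resonance} no edge is a resonance, so $\Delta\neq0$ at every edge; combined with continuity up to the edge this keeps the zeros of $\Delta$ a positive distance from the edges, and finiteness follows from the previous paragraph. In the exceptional case $\Delta$ vanishes at a resonant edge, but the finite expansion afforded by $l^{1,2}$ shows that the zero there has finite order, hence is isolated, so again no accumulation occurs.

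I expect the main obstacle to be precisely this boundary analysis of $\Delta$ at the degenerate edges: establishing the continuous (resp.\ $C^1$, finite-order) extension requires uniform control of the Jost iteration as $z$ tends to an edge, where the free resolvent kernel loses its decay, together with a careful bookkeeping of the single extra power of $\<x\>$ that distinguishes the generic case from the exceptional one. Once the correct threshold asymptotics of $f_\pm(\cdot,z)$ at the edges are in hand, mirroring the threshold expansions known for discrete Schr\"odinger operators, the finiteness of $\sigma_{\mathrm d}(U)$ is a direct consequence of the analyticity of $\Delta$ in the gaps and its nonvanishing (generic) or finite vanishing order (exceptional) at the edges.
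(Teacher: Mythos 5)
Your overall architecture is the paper's: eigenvalues in the spectral gaps are identified with zeros of the Wronskian, the only possible accumulation points of $\sigma_{\mathrm d}(U)$ are the four edge points, and in the \emph{generic} case one concludes by showing the Wronskian extends continuously up to the edges and is nonzero there. That half of your proposal matches the paper exactly: under the $l^{1,1}$ hypothesis the Wiener-algebra bounds on $m_\pm$ (Proposition \ref{prop:19}, together with $\mathcal A\hookrightarrow L^\infty(\T)$) give continuity of $W_\xi$ on $\T_{\delta_0,\geq 0}$, and absence of resonances means $W_0W_\pi\neq 0$ via the characterization in Lemma \ref{lem:17}, so the zeros of $W$ stay a positive distance from the edges.

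In the \emph{exceptional} case, however, your concluding step contains a genuine gap. You claim that the $l^{1,2}$ weight yields ``a finite expansion'' of the Wronskian at the resonant edge and that the zero there ``has finite order, hence is isolated.'' The extra weight buys exactly one additional degree of regularity: in the paper it gives $\hat\phi_\pm(x,\cdot)\in l^{1,1}$, which after summation by parts produces the single factorization $W_\xi=(e^{\im\xi}-1)\tilde\Psi(\xi)$ with $\tilde\Psi\in\mathcal A$, i.e.\ \eqref{54}. Finitely much smoothness never implies finite vanishing order --- a $C^1$ function such as $\xi^3\sin(1/\xi)$ has zeros accumulating at $0$ --- and nothing in your argument excludes $\tilde\Psi(0)=0$; nor can you invoke the identity theorem, since $W$ is analytic only for $\Im\xi>0$ and the edge is a boundary point. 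The paper closes this gap with an input your proposal never uses: the unitarity of the scattering data, $|t(\xi)|^2+|r_\pm(\xi)|^2=1$ (Lemma \ref{lem:15}). Writing $t(\xi)=\frac{W_{0,\xi}}{e^{\im\xi}-1}\,\frac{1}{\tilde\Psi(\xi)}$ as in \eqref{56} and noting from \eqref{25.1} that $W_{0,\xi}/(e^{\im\xi}-1)$ has a nonzero limit as $\xi\to 0$, the a priori bound $|t(\xi)|\leq 1$ forces $\tilde\Psi(0)\neq 0$, which is \eqref{57}; only then is the edge zero simple, so $W_\xi\neq 0$ near the edge and accumulation is excluded. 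The missing idea is precisely this scattering-theoretic bound on the transmission coefficient: with it your outline becomes the paper's proof, and without it the exceptional case does not close.
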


If the discrete spectrum is finite set, then we have $$\mathrm{dist}(\sigma_{\mathrm{ess}}(U),\sigma_{\mathrm{d}}(U)):=\inf\{|w_1-w_2|\ |\ w_1\in \sigma_{\mathrm{ess}}(U),\ w_2\in \sigma_{\mathrm{d}}(U)\}>0.$$
By using this gap, we can define the Riesz projection to the continuous spectrum (which is the essential spectrum because of the absence of embedded eigenvalue Proposition \ref{prop:spec} (ii)) by
\begin{align}\label{def:proj}
P_c:=P_+ + P_- :=\sum_{\pm}\frac{\im}{2\pi }\int_{\Gamma_\pm}(U-w)^{-1}\,dw,
\end{align}
where $\Gamma_\pm=\{\gamma_\pm(t)\ |\ t\in\T\}$ is an smooth closed curve encircling $\sigma_\pm:=\sigma_{\mathrm{ess}}(U)\cap \{\pm \Im z>0\}$ in anti-clockwise direction with $\mathrm{dist}(\sigma_\pm,\gamma_\pm)<\mathrm{dist}(\sigma_{\mathrm{ess}}(U),\sigma_{\mathrm{d}}(U))$ for all $t\in\T$.
It is known that $P_c$ is the projection to the spectral component which is encircle by the curves (for this case $\sigma_{\mathrm{ess}}$), for details see Part I, Chapter I of \cite{GohbergBookI}.

By preparing the projection $P_c$, we are now in the position to state our main result.
\begin{theorem}\label{thm:main}
Under the assumption of Proposition \ref{prop:finite}, we have the dispersive estimate \eqref{Dispersive}.
\end{theorem}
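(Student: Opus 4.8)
\emph{The plan is as follows.} I would begin from a Stone-type formula for the unitary dynamics on the continuous part of the spectrum. Deforming the contours $\Gamma_\pm$ of \eqref{def:proj} so that they collapse onto the essential spectrum from inside and outside the unit circle, and using that $U$ has no singular continuous spectrum (Proposition \ref{prop:spec} (i)), one represents the propagator as the boundary-value (jump) integral
\[
U^tP_c=\frac{1}{2\pi}\int_{\Lambda}e^{\im t\lambda}\,\bigl[R_+(\lambda)-R_-(\lambda)\bigr]\,e^{\im\lambda}\,d\lambda,
\]
where $\Lambda:=\{\lambda\in\T:\ |\cos\lambda|\le\sqrt{1-|\alpha_0|^2}\}$ parametrises the essential spectrum and $R_\pm(\lambda)$ are the limits of $(U-w)^{-1}$ as $w\to e^{\im\lambda}$ from outside, resp. inside, the unit circle. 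The bracketed jump is, up to a constant, the absolutely continuous spectral density, and the whole strategy is to make this density explicit through Jost solutions and then estimate the resulting oscillatory integral.

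For $e^{\im\lambda}$ in the interior of the essential spectrum I would construct the Jost solutions $f_\pm(\cdot,\lambda)$ of $(U-e^{\im\lambda})f=0$ that are asymptotic to the plane waves $e^{\pm\im k(\lambda)x}$ as $x\to\pm\infty$, where the momentum $k=k(\lambda)$ is fixed by the dispersion relation $\cos\lambda=\sqrt{1-|\alpha_0|^2}\cos k$ of the limiting walk $SC_0$. Under Assumption \ref{ass:1} these are built by a Volterra/Born series whose convergence is controlled by $\|\alpha(\cdot)-\alpha_0\|_{l^1}+\|\theta(\cdot)\|_{l^1}$. The resolvent kernel is then the standard one-dimensional expression $R(x,y;e^{\im\lambda})\propto W(\lambda)^{-1}\,f_+(x,\lambda)\otimes f_-(y,\lambda)$ for $x\ge y$ (with the roles of $f_\pm$ interchanged for $x\le y$), where $W(\lambda)$ is the Wronskian of $f_+$ and $f_-$; by Lemma \ref{lem:17}, $W$ vanishes at an edge exactly when that edge is a resonance. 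Substituting this into the Stone formula rewrites the kernel of $U^tP_c$ as a finite sum of oscillatory integrals
\[
(U^tP_c)(x,y)=\sum_{\pm}\frac{1}{2\pi}\int_{\Lambda}e^{\im t\lambda}\,a_\pm(\lambda,x,y)\,d\lambda,
\]
where each amplitude $a_\pm$ is a product of normalised Jost solutions carrying the spatial oscillation $e^{\pm\im k(\lambda)(x-y)}$ times a spectral weight that is smooth on the interior of each band.

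The core of the argument is to bound these integrals by $\langle t\rangle^{-1/3}$ uniformly in $x,y\in\Z$. Passing to the momentum variable $k$, so that the total phase is $\Phi(k)=t\lambda(k)+(x-y)k$, the estimate follows from van der Corput's lemma. Generic critical points of $\Phi$ are nondegenerate and contribute $|t|^{-1/2}$, which is more than enough; the decisive contribution comes from the inflection points of the dispersion relation, where $\lambda''(k)=0$ but $\lambda'''(k)\neq0$ (these sit at $\cos\lambda=0$, the points of extremal group velocity). There the stationary point degenerates to an Airy-type cubic point, and van der Corput with the third-derivative bound produces exactly the $|t|^{-1/3}$ rate, matching the free walk of Sunada--Tate \cite{ST12JFA}. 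To apply the lemma one needs each $a_\pm$ to be of bounded variation in $\lambda$ together with one derivative, which on each band interior is supplied by the differentiability of the Jost solutions in $\lambda$.

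\textbf{The main obstacle}, and the source of the weighted hypotheses, is the behaviour at the four edges $|\cos\lambda|=\sqrt{1-|\alpha_0|^2}$. There the group velocity $\lambda'(k)$ vanishes, so the phase has only a nondegenerate stationary point (harmless, $|t|^{-1/2}$), but the amplitude is singular: both $W(\lambda)^{-1}$ and the $\lambda$-derivatives of $f_\pm$ may blow up as $\lambda$ approaches an edge. Since each $\partial_\lambda$ falling on $e^{\im k(\lambda)x}$ produces a factor growing like $\langle x\rangle$, controlling $j$ derivatives of the Jost solutions uniformly up to the edge costs $j$ powers of $\langle x\rangle$ in the perturbation, i.e. an $l^{1,j}$ bound. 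In the generic case $W$ stays bounded away from zero at every edge, one derivative suffices, and the $l^{1,1}$ hypothesis closes the estimate. In the exceptional case a resonance forces $W(\lambda)\to0$ at some edge; one must then expand the resolvent near that threshold, show that the leading singular contributions of $P_+$ and $P_-$ cancel, and control the regularised remainder, which requires two $\lambda$-derivatives of the Jost solutions and hence the stronger $l^{1,2}$ hypothesis. Combining the interior Airy estimate with the edge estimates, summing the finitely many band contributions (the finiteness of $\sigma_{\mathrm d}(U)$ being Proposition \ref{prop:finite}), and taking the supremum over $x,y$ yields the dispersive estimate \eqref{Dispersive}.
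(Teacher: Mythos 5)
Your skeleton coincides with the paper's strategy (boundary values of the resolvent written through Jost solutions and the Wronskian, phase $t\lambda(\xi)+(x-y)\xi$ in the momentum variable, van der Corput with the rate $|t|^{-1/3}$ coming from the inflection points at $\cos\lambda=0$, generic/exceptional dichotomy), but the regularity currency you propose for the amplitude does not close. You plan to feed van der Corput a $C^1$/BV bound on the Jost amplitudes and claim ``one derivative suffices under $l^{1,1}$''; in fact one $\xi$-derivative at a band edge already costs \emph{two} weights. At the edge the transfer matrix is a Jordan block: by \eqref{27.5}--\eqref{27.6} one has $\|A_{\xi,\pm}^x\|\lesssim |x|$, but $\partial_\xi A_{\xi,\pm}^x$ grows like $|x|^2$, so a uniform $C^1$ bound on $m_\pm(x,\cdot)$ up to the edge would consume $l^{1,2}$ data already in the generic case. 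The paper avoids differentiating the amplitude altogether: all estimates are run in the Wiener algebra $\mathcal A$ (Lemma \ref{lem:18}, Propositions \ref{prop:19} and \ref{prop:20}), where $\|A_{\cdot,\pm}^x\|_{\mathcal A}\lesssim \langle x\rangle$ with no derivative loss, and the van der Corput lemma used (Lemma \ref{lem:21}, from \cite{EKT15JST}) requires only $\|g\|_{\mathcal A}$ — this is precisely what makes $l^{1,1}$ sufficient generically. You also implicitly assume the product of normalised Jost solutions is uniformly bounded in $x,y$; it is not: $m_-(y,\cdot)$ grows linearly on the positive half-line, and when $0\le y\le x$ one must substitute the scattering relation \eqref{64}, $t(\xi)m_-(y,\xi)=m_+(y,-\xi)+r_+(\xi)e^{2\im\xi y}m_+(y,\xi)$, to restore uniformity. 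This step is absent from your plan.

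Your mechanism for the exceptional case is also off-target. There is no ``cancellation of leading singular contributions of $P_+$ and $P_-$'': the four edge points are distinct points of the unit circle, the arcs $\sigma_\pm$ are handled separately, and in fact the amplitude never blows up — unitarity forces $|t(\xi)|^2+|r_\pm(\xi)|^2=1$ (Lemma \ref{lem:15}), and at a resonant edge $W_\xi$ vanishes to exactly first order, matching the vanishing of the spectral weight $W_{0,\xi}\propto\sin\xi$. The genuine difficulty, and the place where $l^{1,2}$ enters, is Fourier-side regularity: one must show $t,r_\pm\in\mathcal A$ despite $W_0=0$, since Lemma \ref{lem:21} needs the $\mathcal A$-norm of the full amplitude $\lambda'(\xi)W_{0,\xi}^{-1}t(\xi)\,m_+m_\pm$. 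The paper achieves this (Proposition \ref{prop:20}) by telescoping the Fourier series of $\phi_\pm$ — which requires $\hat\phi_\pm\in l^{1,1}$, i.e.\ $l^{1,2}$ coin data — to factor $W_\xi=(e^{\im\xi}-1)\tilde\Psi(\xi)$ as in \eqref{54}, deducing $\tilde\Psi(0)\neq0$ from the bound $|t|\le 1$ as in \eqref{57}, and then dividing in $\mathcal A$ with smooth cutoffs via Wiener's lemma. A threshold expansion of the resolvent with cross-arc cancellation would neither be available nor produce these facts, so as written your treatment of the resonant edges would fail.
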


We now explain the strategy of the proof of Theorem \ref{thm:main}.
Since the problem is $1$ dimensional, it is natural to try to express the integral kernel of the generator $U^tP_c$ by Jost solutions.
Indeed, for Schr\"odinger equations \cite{Weder00JFA, GS04CMP} and discrete Schr\"odinger equations \cite{PS08JMP, CT09SIAM, EKT15JST} dispersive estimate are obtained by such method.

To obtain the Jost solutions, which is the solutions of $(U-e^{\im \lambda})u=0$ having specific behavior in the spatial infinity, we employ the ``CMV" representation of QWs $\cite{CGMV10CPAM}$, which is actually a slight change of view from the vertices to the edges of the graph $\Z$.
By this ``shift of view" one can obtain transfer matrix representation of the problem $(U-\lambda)u=0$, which can be viewed as an evolution equation with respect to the $x$ variable.
The construction of Jost solutions for $\lambda$ not at the edge of essential spectrum is standard, one simply diagonalizes the transfer matrix and solves the $l^1$ perturbation problem.
However, for the dispersive estimate, we need uniform estimates (especially near the edge of the essential spectrum) of the Jost solutions.
To obtain estimates of the transfer matrices near the edge that cannot be diagonalized, we triangulate it instead.
By this simple but new idea, we can obtain the uniform estimate of Jost solutions.

This paper is organized as follows.
In section \ref{sec:gauge}, we simplify the coin operator by introducing a Gauge transformation.
In section \ref{sec:2}, we reformulate the generalized eigenvalue problem $(U-\lambda)u$ by using transfer matrix.
We also give an expression of the kernel of the resolvent in terms of solutions of the generalized eigenvalue problems.
In section \ref{sec:3}, we study the case of constant coin.
In section \ref{sec:4}, we show the existence and give  estimates for Jost solutions.
This section is the main part of this paper.
In section \ref{sec:scat}, we develop a stationary scattering theory and prove Proposition \ref{prop:spec} (i) and (ii) and Proposition \ref{prop:finite}.
In section \ref{sec:6} we prove Theorem \ref{thm:main}.

\section{Gauge transformation}\label{sec:gauge}
Before going into the spectral analysis, we show that we can slightly simplify our coin operator.

First, if $G\in \mathcal{L}(\mathcal{H})$ has the following form, we call $G$ a gauge transformation:
\begin{align*}
\(Gu\)(x)=\begin{pmatrix}
e^{\im g_\uparrow(x)}u_\uparrow(x)\\
e^{\im g_\downarrow(x)}u_\downarrow(x)
\end{pmatrix},
\ \text{where }g_\uparrow(x),\ g_\downarrow(x)\in \R\ \text{for all }x\in\Z. 
\end{align*}
A gauge transformation is obviously a unitary operator on $\mathcal H$.

Following \cite{FO17JFA}, we can show that there exists a gauge transformation that reduces the coin to the case $\beta(x)>0$.

\begin{lemma}\label{lem:Gauge}
	There exists a gauge transformation $G$ such that
	\begin{align*}
	U=G^{-1}S C_{\alpha(x),\sqrt{1-|\alpha(x)|^2},\theta(x)} G.
	\end{align*}
\end{lemma}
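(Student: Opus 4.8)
The plan is to conjugate $U=S\hat C$ by a gauge $G$, compute the resulting operator explicitly, and observe that it is again of the form $S\hat{C'}$ for a new coin $C'$ whose entries are the old ones multiplied by phases built from $g_\uparrow,g_\downarrow$; I then choose these phases to rotate away $\arg\beta(x)$. The first thing to note is that conjugating the coin \emph{alone}, $D(x)C(x)D(x)^{-1}$ with $D(x)=\mathrm{diag}(e^{\im g_\uparrow(x)},e^{\im g_\downarrow(x)})$, only alters the off-diagonal phases and leaves the diagonal entries—hence $\beta(x)$—untouched. So the mechanism must come from the shift. Writing $(Gu)(x)=D(x)u(x)$ and computing the components of $GUG^{-1}$ from $(Uu)_\uparrow(x)=[C(x-1)u(x-1)]_\uparrow$ and $(Uu)_\downarrow(x)=[C(x+1)u(x+1)]_\downarrow$, I expect to find $GUG^{-1}=S\hat{C'}$ with
\begin{align*}
C'(x)_{11}&=e^{\im(g_\uparrow(x+1)-g_\uparrow(x))}C(x)_{11}, & C'(x)_{12}&=e^{\im(g_\uparrow(x+1)-g_\downarrow(x))}C(x)_{12},\\
C'(x)_{21}&=e^{\im(g_\downarrow(x-1)-g_\uparrow(x))}C(x)_{21}, & C'(x)_{22}&=e^{\im(g_\downarrow(x-1)-g_\downarrow(x))}C(x)_{22}.
\end{align*}
The key point is that the shift forces the relevant phases to be the \emph{differences} $g_\uparrow(x+1)-g_\uparrow(x)$ and $g_\downarrow(x-1)-g_\downarrow(x)$, and it is exactly this freedom that lets me rotate $\beta(x)$.

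Next I would impose that $C'(x)$ have the desired form $C_{\alpha'(x),\sqrt{1-|\alpha'(x)|^2},\theta(x)}$, i.e.\ that the diagonal entries become $e^{\im\theta(x)}|\beta(x)|$. Since $C(x)_{11}=e^{\im\theta(x)}\beta(x)$ and $C(x)_{22}=e^{\im\theta(x)}\overline{\beta(x)}$, this is equivalent to the two first-order difference equations
\begin{align*}
g_\uparrow(x+1)-g_\uparrow(x)=-\arg\beta(x),\qquad g_\downarrow(x-1)-g_\downarrow(x)=\arg\beta(x),\quad x\in\Z.
\end{align*}
I would then check that these conditions automatically make the off-diagonal pair consistent with the coin form: the relation required for $C'(x)$ to again be of the form $C_{\alpha',\beta',\theta}$ is precisely $g_\uparrow(x+1)+g_\downarrow(x-1)=g_\uparrow(x)+g_\downarrow(x)$, which is just the sum of the two equations above. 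Consequently $C'(x)=C_{\alpha'(x),|\beta(x)|,\theta(x)}$ with $\alpha'(x)=e^{\im(g_\downarrow(x-1)-g_\uparrow(x))}\alpha(x)$, so that $|\alpha'(x)|=|\alpha(x)|<1$ and $\sqrt{1-|\alpha'(x)|^2}=|\beta(x)|>0$, as required; one then renames $\alpha'$ back to $\alpha$, which is the abuse of notation implicit in the statement.

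Finally, the difference equations are solvable with no obstruction. Because $|\alpha(x)|<1$ for every $x$ (Assumption \ref{ass:1}), we have $|\beta(x)|>0$, so $\arg\beta(x)$ is well defined modulo $2\pi$; fixing $g_\uparrow(0)=g_\downarrow(0)=0$ and telescoping determines $g_\uparrow(x),g_\downarrow(x)\in\R$ at every site by finite sums. This produces the gauge $G$ and hence the claimed identity $U=G^{-1}S\,C_{\alpha(x),\sqrt{1-|\alpha(x)|^2},\theta(x)}\,G$.

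I expect the only genuine work to be the bookkeeping in the first step—verifying that $GUG^{-1}$ is once more a shift times a position-dependent coin, and carefully tracking at which site each phase factor is evaluated (the asymmetric appearance of $x\pm 1$ for the up/down components). The solvability in the last step is immediate and merely reflects that every discrete $1$-form on $\Z$ is exact.
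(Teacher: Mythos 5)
Your computation of $GUG^{-1}$ and the reduction to the two first-order difference equations is correct, and your route is essentially the paper's: the paper also builds the gauge as a discrete antiderivative of $b(x):=\arg\beta(x)$, defining $B$ with $B(x+1)-B(x)=b(x)$ and setting $g_\uparrow(x)=-B(x)$, $g_\downarrow(x)=-B(x+1)$, then verifying the intertwining relation $GS\hat C_{\alpha,\beta,\theta}=S\hat C_{\alpha,|\beta|,\theta}G$ entrywise. Your observation that the off-diagonal consistency is automatic (being the sum of the two diagonal equations) is a point the paper leaves implicit, and your closing remark that solvability reflects the exactness of discrete $1$-forms on $\Z$ is accurate.

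The one substantive slip is your normalization $g_\uparrow(0)=g_\downarrow(0)=0$. The increment of $g_\downarrow(x-1)-g_\uparrow(x)$ in $x$ is $-b(x)-(-b(x))=0$, so this quantity is a constant; with your initial data it equals $b(0)$, and hence your construction yields $\alpha'(x)=e^{\im b(0)}\alpha(x)$ for all $x\in\Z$, not $\alpha(x)$. ``Renaming $\alpha'$ back to $\alpha$'' is not an abuse of notation the statement licenses: the lemma asserts---and the paper immediately uses, in the remark following it---that the gauge leaves $\alpha(x)$ and $\theta(x)$ literally unchanged, so that Assumption \ref{ass:1} and the limiting coin $C_0=C_{\alpha_0,\beta_0,0}$ are preserved verbatim; note also that the parametrization \eqref{eq:paracoin} is unique only up to $(\alpha,\beta,\theta)\mapsto(-\alpha,-\beta,\theta+\pi)$, which does not absorb the factor $e^{\im b(0)}$ unless $b(0)=\pi$. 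Since the spurious phase is a single global constant, it would in fact be harmless downstream (it merely rotates $\alpha_0$ and all $\alpha(x)$ by the same phase, preserving the $l^{1}$ and $l^{1,k}$ hypotheses), but the clean fix is one line: choose the initial data so that $g_\downarrow(x)=g_\uparrow(x+1)$, i.e.\ $g_\downarrow(0)=-b(0)$. Then $g_\downarrow(x-1)-g_\uparrow(x)\equiv 0$ and $C'(x)=C_{\alpha(x),|\beta(x)|,\theta(x)}$ exactly---which is precisely the paper's choice $g_\uparrow(x)=-B(x)$, $g_\downarrow(x)=-B(x+1)$.
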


\begin{proof}
	We set $\beta(x)=e^{\im b(x)}|\beta(x)|$ with $b(x)\in [0,2\pi)$.
	We set
	\begin{align*}
	B(x)=\begin{cases} \sum_{y=0}^{x-1}b(y) & x\geq 1\\ 0 & x=0\\ -\sum_{y=x}^{-1}b(y) & x\leq -1.\end{cases}
	\end{align*}
	Then, we have
	\begin{align}\label{rel:Bb}
	B(x+ 1)-B(x)=b(x).
	\end{align}
	Using this $B$, we set $G$ by $g_\uparrow(x)=-B(x)$ and $g_\downarrow(x)=-B(x+1)$.
	By using \eqref{rel:Bb}, we have
	\begin{align*}
	\(GS C_{\alpha, \beta, \theta}u\)(x)&=\begin{pmatrix} e^{-\im B(x)} e^{\im \theta(x-1)}\(\beta(x-1) u_{\uparrow}(x-1)+\overline{\alpha}(x-1)  u_{\downarrow}(x-1)\) \\ e^{-\im B(x+1)}e^{\im \theta(x+1)}\(-\alpha(x+1) u_{\uparrow}(x+1)+\overline{\beta}(x+1) u_{\downarrow}(x+1)\)\end{pmatrix}\\&
	=\begin{pmatrix}  e^{\im \theta(x-1)}\(|\beta(x-1)|e^{-\im B(x-1)} u_{\uparrow}(x-1)+\overline{\alpha}(x-1)e^{-\im B(x)} u_{\downarrow}(x-1)\) \\ e^{\im \theta(x+1)}\(-\alpha(x+1)e^{-\im B(x+1)} u_{\uparrow}(x+1)+|\beta(x+1)|e^{-\im B(x+2)} u_{\downarrow}(x+1)\)\end{pmatrix}\\&
	=S C_{\alpha,|\beta|, \theta} Gu (x).
	\end{align*}
	Therefore, we have the conclusion.
\end{proof}

Notice that the gauge transformation given in Lemma \ref{lem:Gauge} do not change $\alpha(x)$ and $\theta(x)$.
Therefore, the assumptions in Assumption \ref{ass:1}, Propositions \ref{prop:spec}, \ref{prop:finite} and Theorem \ref{thm:main} do not change.
Moreover, since gauge transformation is a unitary operator, the spectral properties stated in Propositions \ref{prop:spec} and \ref{prop:finite} do not change.
The conclusion of Theorem \ref{thm:main} also holds with exactly the same (implicit) constant because $\|Gu\|_{l^{p,\sigma}}=\|u\|_{l^{p,\sigma}}$.

Therefore, in the following, it suffices to consider the case $\beta(x)=\rho(x):=\sqrt{1-|\alpha(x)|^2}$.

\section{Reformulation of generalized eigenvalue problem}\label{sec:2}

In this section, we study the following (generalized) eigenvalue problem:
\begin{align}\label{ev:prob}
U u = e^{\im \lambda}u,\quad \lambda\in \C/2\pi\Z,
\end{align}
where $u:\Z\to \C^2$ is not necessary in $\mathcal H$.

We will frequently use the Pauli matrices:
\begin{align}\label{pauli}
\sigma_0:=\begin{pmatrix}1&0\\0&1\end{pmatrix},\  \sigma_1:=\begin{pmatrix}0 &1\\1&0\end{pmatrix},\  \sigma_2:=\begin{pmatrix}0&-\im \\ \im &0\end{pmatrix},\  \sigma_3:=\begin{pmatrix}1 &0\\0&-1\end{pmatrix}.
\end{align}
We set a unitary operator $J_{\mathrm{VE}}$ by
\begin{align*}
J_{\mathrm{VE}} u(x):=\begin{pmatrix} u_{\downarrow}(x-1) \\ u_{\uparrow}(x) \end{pmatrix}.
\end{align*}
\begin{remark}
	The operator $J_{\mathrm{VE}}$ express a shift of view point from the vertex to the edge of the graph $\Z$.
\end{remark}
We denote the inverse of $J_{\mathrm{VE}}$ by 
\begin{align*}
J_{\mathrm{EV}} u(x):=J_{\mathrm{VE}}^{-1} u(x)=\begin{pmatrix}u_{\downarrow}(x) \\ u_{\uparrow}(x+1)\end{pmatrix}
\end{align*}
For $\lambda\in  \C/2\pi\Z$ and $x\in\Z$, we set
\begin{align*}
T_\lambda(x):=\rho(x)^{-1}\begin{pmatrix}  e^{\im (\lambda-\theta(x))} & \alpha(x) \\ \overline{\alpha(x)} & e^{-\im (\lambda-\theta(x))} \end{pmatrix}.
\end{align*}
Notice that $T_\lambda(x)$ has the following symmetry:
\begin{align}\label{16}
T_{\bar \lambda}(x)=\overline{\sigma_1 T_{ \lambda}(x)\sigma_1}.
\end{align}

The following equivalence is the crux of our analysis which heavily relies on the fact that the base space $\Z$ is one dimensional.

\begin{proposition}\label{prop:equiv}
	For $u:\Z\to \C^2$, the generalized eigenvalue problem \eqref{ev:prob} is equivalent to
	\begin{align}\label{eq:equiv}
	\(J_{\mathrm{VE}} u\)(x+1)=T_\lambda(x)\(J_{\mathrm{VE}} u\)(x),\ \forall x\in\Z.
	\end{align}
	
\end{proposition}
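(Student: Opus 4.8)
The plan is to establish the equivalence by a direct, componentwise computation of $U=S\hat C$, turning the vector equation \eqref{ev:prob} into two families of scalar recurrences and then matching them---after the correct index shifts---to the two rows of \eqref{eq:equiv}. By the gauge reduction of Section~\ref{sec:gauge} I may assume $\beta(x)=\rho(x)$, so that $(\hat C u)(x)=e^{\im\theta(x)}\bigl(\rho(x)u_\uparrow(x)+\overline{\alpha(x)}u_\downarrow(x),\,-\alpha(x)u_\uparrow(x)+\rho(x)u_\downarrow(x)\bigr)^{\mathrm T}$. Applying $S=\mathrm{diag}(\tau,\tau^{-1})$ with $(\tau u)(x)=u(x-1)$ then gives
\begin{align*}
(Uu)_\uparrow(x)&=e^{\im\theta(x-1)}\bigl(\rho(x-1)u_\uparrow(x-1)+\overline{\alpha(x-1)}u_\downarrow(x-1)\bigr),\\
(Uu)_\downarrow(x)&=e^{\im\theta(x+1)}\bigl(-\alpha(x+1)u_\uparrow(x+1)+\rho(x+1)u_\downarrow(x+1)\bigr).
\end{align*}
Thus \eqref{ev:prob} is equivalent to the two families of scalar identities, valid for all $x\in\Z$,
\begin{align*}
(\mathrm A_x)&\quad e^{\im\lambda}u_\uparrow(x)=e^{\im\theta(x-1)}\bigl(\rho(x-1)u_\uparrow(x-1)+\overline{\alpha(x-1)}u_\downarrow(x-1)\bigr),\\
(\mathrm B_x)&\quad e^{\im\lambda}u_\downarrow(x)=e^{\im\theta(x+1)}\bigl(-\alpha(x+1)u_\uparrow(x+1)+\rho(x+1)u_\downarrow(x+1)\bigr).
\end{align*}

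Next, writing $v(x):=(J_{\mathrm{VE}}u)(x)=\bigl(u_\downarrow(x-1),\,u_\uparrow(x)\bigr)^{\mathrm T}$, I would note that the target relation $v(x+1)=T_\lambda(x)v(x)$ is, row by row, the pair
\begin{align*}
u_\downarrow(x)&=\rho(x)^{-1}\bigl(e^{\im(\lambda-\theta(x))}u_\downarrow(x-1)+\alpha(x)u_\uparrow(x)\bigr),\\
u_\uparrow(x+1)&=\rho(x)^{-1}\bigl(\overline{\alpha(x)}u_\downarrow(x-1)+e^{-\im(\lambda-\theta(x))}u_\uparrow(x)\bigr).
\end{align*}
The first row is exactly $(\mathrm B_{x-1})$ (divide by $e^{\im\theta(x)}$ and solve for $u_\downarrow(x)$). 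Substituting the resulting expression for $u_\downarrow(x)$ into $(\mathrm A_{x+1})$ and simplifying with $\rho(x)^2+|\alpha(x)|^2=1$ produces precisely the second row, the cancellation $e^{-\im\lambda}e^{\im\theta(x)}\rho(x)^{-1}\bigl(\rho(x)^2+|\alpha(x)|^2\bigr)=\rho(x)^{-1}e^{-\im(\lambda-\theta(x))}$ being the key point. Letting $x$ range over $\Z$, the families $\{(\mathrm A_x)\}$ and $\{(\mathrm B_x)\}$ together are equivalent to the two rows of \eqref{eq:equiv} for all $x$, which is the assertion.

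Since $\det T_\lambda(x)=\rho(x)^{-2}(1-|\alpha(x)|^2)=1$, the transfer relation is an invertible linear bijection between $v(x)$ and $v(x+1)$, so every implication above may be read in reverse and the equivalence is genuinely two-sided. I expect no analytic obstacle---$u$ need not lie in $\mathcal H$ and each step is a finite algebraic manipulation valid pointwise---so the only real care is bookkeeping of the index shifts: the $\uparrow$-component of $Uu$ at $x$ couples to data at $x-1$ while the $\downarrow$-component at $x$ couples to data at $x+1$, so the two rows of $T_\lambda(x)$ arise from equations evaluated at \emph{different} sites (namely $(\mathrm B_{x-1})$ and $(\mathrm A_{x+1})$), and one must confirm that the quantifier ``for all $x$'' lines up on both sides.
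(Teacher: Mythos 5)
Your proposal is correct and takes essentially the same approach as the paper: both write $Uu=e^{\im\lambda}u$ componentwise, re-index the two scalar recurrences so that the data $(u_\downarrow(x-1),u_\uparrow(x))$ at site $x$ couples to $(u_\downarrow(x),u_\uparrow(x+1))$, and solve for the latter using $\rho(x)^2+|\alpha(x)|^2=1$ --- the paper merely packages your solve-and-substitute step as the single matrix rearrangement \eqref{evprobel2} followed by inverting the right-hand matrix. Your closing remark that every step is reversible (with $\det T_\lambda(x)=1$ guaranteeing invertibility) matches the paper's observation that all manipulations can be reversed, so the two-sided equivalence is established in the same way.
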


\begin{proof}
	Writing down the equation $Uu=e^{\im \lambda}u$ explicitly, we have
	\begin{equation}\label{evprobel}
	\begin{aligned}
	e^{\im \theta(x-1)}\rho (x-1)u_{\uparrow}(x-1)+ e^{\im  \theta(x-1)}  \overline{\alpha}(x-1)u_{\downarrow}(x-1)&=e^{\im  \lambda}u_{\uparrow}(x),\\
	-e^{\im  \theta(x+1)} \alpha (x+1)u_{\uparrow}(x+1)+e^{\im  \theta(x+1)}  \rho (x+1)u_{\downarrow}(x+1)&=e^{\im  \lambda}u_{\downarrow}(x).
	\end{aligned}
	\end{equation}
	Rearranging \eqref{evprobel}, we have
	\begin{align}\label{evprobel2}
	\begin{pmatrix} 0 & e^{\im  \theta(x)}\rho (x)\\ -e^{\im  \lambda} & -e^{\im  \theta(x)} \alpha (x)\end{pmatrix} 
	\begin{pmatrix} u_{\downarrow}(x-1) \\ u_{\uparrow} (x)\end{pmatrix} = \begin{pmatrix} -e^{\im \theta(x)}  \overline{\alpha}(x)& e^{\im \lambda} \\ -e^{\im \theta(x)} \rho (x) & 0 \end{pmatrix}
	\begin{pmatrix} u_{\downarrow}(x) \\ u_{\uparrow} (x+1)\end{pmatrix}.
	\end{align}
	Thus, we obtain
	\begin{align*}
	\begin{pmatrix} u_{\downarrow}(x) \\ u_{\uparrow} (x+1)\end{pmatrix}=  \rho (x)^{-1}\begin{pmatrix} e^{\im ( \lambda-\theta(x))} &  \alpha (x)\\  \overline{\alpha}(x) &  e^{-\im (\lambda-\theta(x))}\end{pmatrix}
	\begin{pmatrix} u_{\downarrow}(x-1) \\ u_{\uparrow} (x)\end{pmatrix}.
	\end{align*}
	This is the explicit form of \eqref{eq:equiv}. Since all the manipulations, we have done can be reversed, we have the conclusion.
\end{proof}

As ordinary differential equations, we can show the Wronskian does not depend on $x\in \Z$.

\begin{proposition}\label{prop:Wronsky}
	Let $v_1=J_{\mathrm{VE}} u_1$ and $\ v_2=J_{\mathrm{VE}} u_2$ satisfy \eqref{eq:equiv}.
	Then, $\det (v_1(x)\ v_2(x))$ is independent of $x\in\Z$.
\end{proposition}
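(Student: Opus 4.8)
The plan is to show that the quantity $\det(v_1(x)\ v_2(x))$, where $(v_1(x)\ v_2(x))$ denotes the $2\times 2$ matrix with columns $v_1(x)$ and $v_2(x)$, satisfies the recursion forced by \eqref{eq:equiv} and hence is constant. First I would observe that since both $v_1$ and $v_2$ solve \eqref{eq:equiv}, we may stack them into a single matrix identity: writing $W(x):=(v_1(x)\ v_2(x))$, the two vector equations $v_i(x+1)=T_\lambda(x)v_i(x)$ combine into the matrix equation
\begin{align*}
W(x+1)=T_\lambda(x)W(x),\ \forall x\in\Z.
\end{align*}
Taking determinants and using the multiplicativity $\det(AB)=\det(A)\det(B)$ gives $\det W(x+1)=\det T_\lambda(x)\,\det W(x)$.

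The key computation is then simply to evaluate $\det T_\lambda(x)$ from the explicit form \eqref{def:tla}. I expect
\begin{align*}
\det T_\lambda(x)=\rho(x)^{-2}\(e^{\im(\lambda-\theta(x))}e^{-\im(\lambda-\theta(x))}-\alpha(x)\overline{\alpha(x)}\)=\rho(x)^{-2}\(1-|\alpha(x)|^2\)=\rho(x)^{-2}\rho(x)^2=1,
\end{align*}
using $\rho(x)^2=1-|\alpha(x)|^2$ from the gauge-fixed convention established at the end of Section \ref{sec:gauge}. Hence $\det T_\lambda(x)=1$ for every $x$, so $\det W(x+1)=\det W(x)$, which shows $\det W(x)$ is independent of $x\in\Z$ and completes the proof.

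There is no real obstacle here; the statement is the discrete analogue of the classical fact that the Wronskian of a linear second-order ODE is constant when the leading coefficient is normalized, and the entire content reduces to checking that the transfer matrix $T_\lambda(x)$ has unit determinant. The one point worth flagging is that the result as stated holds for $\lambda\in\C/2\pi\Z$ (not merely real $\lambda$), and indeed the determinant computation is purely algebraic and does not require $\lambda$ real, so the conclusion is valid in that generality. If one wanted to emphasize robustness, I would note that the computation never uses $|\alpha(x)|<1$ beyond ensuring $\rho(x)\neq 0$ so that $T_\lambda(x)$ is well defined; this nonvanishing is guaranteed by Assumption \ref{ass:1}.
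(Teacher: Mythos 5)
Your proposal is correct and follows essentially the same route as the paper: the paper's proof likewise takes determinants in $\det(v_1(x+1)\ v_2(x+1))=\det T_\lambda(x)\det(v_1(x)\ v_2(x))$ and uses $\det T_\lambda(x)=1$, which the paper asserts without computation and you verify explicitly from \eqref{def:tla}. Your remarks on complex $\lambda$ and on $\rho(x)\neq 0$ are accurate but add nothing beyond the paper's argument.
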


\begin{proof}
	Since $\det T_\lambda(x)=1$, we have
	\begin{align*}
	\det (v_1(x+1)\ v_2(x+1))&=\det (T_\lambda(x)v_1(x)\ T_\lambda(x)v_2(x)) =\det T_\lambda(x)\det (v_1(x)\ v_2(x))\\&=\det (v_1(x)\ v_2(x)).
	\end{align*}
	Therefore, we have the conclusion.
\end{proof}

An immediate corollary of Proposition \ref{prop:Wronsky} is the following.
\begin{corollary}\label{cor:Wron}
	Let $u_1$ and $u_2$ satisfy \eqref{ev:prob}, $u_1$ is bounded on $\Z_{\geq 0}:=\{x\in\Z\ |\ x\geq 0\}$ and $u_2\in \mathcal H$.
	Then, $u_1$ and $u_2$ are linearly dependent.
\end{corollary}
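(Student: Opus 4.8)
The plan is to pass to the transfer-matrix picture and show that the Wronskian of the two solutions vanishes. First I would set $v_1 := J_{\mathrm{VE}} u_1$ and $v_2 := J_{\mathrm{VE}} u_2$. Since $u_1$ and $u_2$ both solve \eqref{ev:prob} for the same $\lambda$, Proposition \ref{prop:equiv} tells us that $v_1$ and $v_2$ satisfy the recursion $v_j(x+1) = T_\lambda(x) v_j(x)$, and Proposition \ref{prop:Wronsky} then guarantees that $W := \det (v_1(x)\ v_2(x))$ is a constant independent of $x \in \Z$.

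Next I would compute $W$ by letting $x \to +\infty$. Because $u_2 \in \mathcal H = l^2(\Z,\C^2)$, we have $\|u_2(x)\|_{\C^2} \to 0$ as $|x| \to \infty$; since by \eqref{def:jev} the two entries of $v_2(x)$ are components of $u_2$ evaluated at $x-1$ and $x$, it follows that $v_2(x) \to 0$ as $x \to +\infty$. On the other hand, $u_1$ is bounded on $\Z_{\geq 0}$, so (again by \eqref{def:jev}) the vector $v_1(x)$ stays bounded along the same limit. As the determinant is bilinear in its two columns, $\det(v_1(x)\ v_2(x)) \to 0$, and combined with the $x$-independence from the previous step this forces $W = 0$.

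Finally I would upgrade the vanishing of $W$ to genuine linear dependence of the sequences. If $u_2 \equiv 0$ the claim is trivial, so assume $v_2 \not\equiv 0$; since each $T_\lambda(x)$ is invertible (indeed $\det T_\lambda(x) = 1$), the recursion determines $v_2$ from its value at any single point, whence $v_2(x) \neq 0$ for every $x$. Fixing some $x_0$, the relation $W = 0$ says the columns $v_1(x_0)$ and $v_2(x_0)$ are parallel, say $v_1(x_0) = c\,v_2(x_0)$. Then $v_1 - c\,v_2$ solves the same first-order recursion and vanishes at $x_0$, hence vanishes identically by uniqueness, so $v_1 = c\,v_2$ on all of $\Z$. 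Applying $J_{\mathrm{EV}} = J_{\mathrm{VE}}^{-1}$ yields $u_1 = c\,u_2$, which is the desired linear dependence.

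The step I expect to require the most care is this last one: vanishing of the Wronskian gives only that the $\C^2$-vectors $v_1(x), v_2(x)$ are parallel at each individual $x$, and one must argue that the proportionality constant is the same at all $x$. This is exactly where invertibility of the transfer matrices, and the resulting uniqueness for the first-order recursion, are essential; the same fact $\det T_\lambda(x) = 1$ is what rules out the degenerate possibility $v_2(x_0) = 0$.
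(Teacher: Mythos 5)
Your proof is correct and follows essentially the same route as the paper: pass to $v_j=J_{\mathrm{VE}}u_j$, use Proposition \ref{prop:Wronsky} to get an $x$-independent Wronskian, send $x\to+\infty$ using $u_2\in l^2$ (which forces $\|v_2(x)\|\to 0$) together with the boundedness of $v_1$ on $\Z_{\geq 0}$ to conclude $W=0$. The only difference is that you spell out the final step (vanishing Wronskian plus invertibility of $T_\lambda(x)$ and uniqueness for the first-order recursion imply genuine linear dependence), which the paper leaves implicit --- a worthwhile addition, correctly executed.
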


\begin{proof}
	Since $J_{\mathrm{VE}} u_2\in \mathcal{H}=l^2(\Z,\C^2)$, there exists a sequence $\{x_{n}\}_{n=1}^{\infty}\subset \Z_{\ge 0}$ such that $x_n\to \infty$ and \ $\|\(J_{\mathrm{VE}} u_2\)(x_n)\|\to 0$ as $n\to \infty$.
	Thus, we have
	\begin{align*}
	\det(\(J_{\mathrm{VE}} u_1\)(x), \(J_{\mathrm{VE}} u_2\)(x))=\lim_{x\to \infty}\det(\(J_{\mathrm{VE}} u_1\)(x_n), \(J_{\mathrm{VE}} u_2\)(x_n))=0.
	\end{align*}
	Therefore, we have the conclusion.
\end{proof}

As was shown in the above, for our analysis it is more convenient to consider $J_{\mathrm{VE}}u$ rather than $u$ itself.
Therefore, we set  
\begin{align}\label{def:CMV}
\mathcal C=J_{\mathrm{VE}}UJ_{\mathrm{EV}}.
\end{align}
\begin{remark}
	The operator $\mathcal C$ is a CMV matrix, see \cite{CGMV10CPAM}.
	However, we will perform all computations directly and not use the deep theories of orthogonal polynomials.
\end{remark}

For $v\in \C^2$ (column vector), we set $v^\top$ (row vector) to be the transposition of $v$.
We can express the kernel of the resolvent of $U$ by using the generalized eigenfunctions.
The following Proposition is a special case of Lemma 3.1 of \cite{GZ06JAT}.

\begin{proposition}\label{prop:kernel}
	Let $v_1=J_{\mathrm{VE}} u_1$ and $v_2=J_{\mathrm{VE}} u_2$ satisfy \eqref{eq:equiv} with $W_\lambda=\det(v_1\ v_2)\neq 0$.
	We set
	\begin{align*}
	K_\lambda(x,y):=e^{-\im \lambda} W_\lambda^{-1} \(v_2(x) v_1(y)^\top \begin{pmatrix} 0 & 1_{<y}(x)\\ 1_{\leq y}(x) & 0 \end{pmatrix} + v_1(x) v_2(y)^\top \begin{pmatrix} 0 & 1_{\geq y}(x)\\ 1_{> y}(x) & 0 \end{pmatrix} \).
	\end{align*}
	Then, we have
	\begin{align*}
	\( (\mathcal C-e^{\im \lambda})  K_\lambda(\cdot,y)\)(x)
	=\begin{cases} \sigma_0 & x=y,\\ 0 & x\neq y. \end{cases},
	\end{align*}
	where $\sigma_0$ is given in \eqref{pauli}.
\end{proposition}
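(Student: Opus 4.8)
The plan is to show that $K_\lambda(\cdot,y)$ is the Green's function of the banded operator $\mathcal{C}-e^{\im\lambda}$ by direct substitution. First I would make the band structure of $\mathcal{C}$ explicit: expanding $\mathcal{C}=J_{\mathrm{VE}}UJ_{\mathrm{EV}}$ with $U=S\hat C$ and the definitions \eqref{def:jev}, \eqref{def:jve} yields a block-tridiagonal recurrence
\begin{align*}
(\mathcal{C}v)(x)=M_+(x)v(x+1)+M_0(x)v(x)+M_-(x)v(x-1),
\end{align*}
with
\begin{align*}
M_+(x)=e^{\im\theta(x)}\rho(x)\begin{pmatrix}1&0\\0&0\end{pmatrix},\quad M_-(x)=e^{\im\theta(x-1)}\rho(x-1)\begin{pmatrix}0&0\\0&1\end{pmatrix},\quad M_0(x)=\begin{pmatrix}0 & -e^{\im\theta(x)}\alpha(x)\\ e^{\im\theta(x-1)}\overline{\alpha}(x-1) & 0\end{pmatrix}.
\end{align*}
The structural point is that $M_+$ feeds only the $\uparrow$-slot and $M_-$ only the $\downarrow$-slot, so the $\uparrow$-component of $(\mathcal{C}-e^{\im\lambda})v=0$ couples $v(x)$ to $v(x+1)$ while the $\downarrow$-component couples $v(x)$ to $v(x-1)$; these two scalar relations recombine into the transfer recurrence \eqref{eq:equiv}, as recorded in Proposition \ref{prop:equiv}.

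Writing $\Psi(x):=K_\lambda(x,y)$ and $c:=e^{-\im\lambda}W_\lambda^{-1}$, the four indicators collapse on each side of $y$: for $x<y$ one has $\Psi(x)=c\,v_2(x)\,(v_1(y)^\top\sigma_1)$, and for $x>y$ one has $\Psi(x)=c\,v_1(x)\,(v_2(y)^\top\sigma_1)$. Since $v_1,v_2$ solve the homogeneous recurrence (Proposition \ref{prop:equiv}) and the right-hand factor is independent of $x$, the recurrence passes straight through, so $((\mathcal{C}-e^{\im\lambda})\Psi)(x)=0$ whenever the stencil $\{x-1,x,x+1\}$ lies entirely on one side of $y$, i.e.\ for $|x-y|\ge 2$.

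It remains to treat $x\in\{y-1,y,y+1\}$, where $\Psi$ switches formula. At each such site I would substitute the piecewise expressions (with the boundary values of $1_{<y},1_{\le y},1_{\ge y},1_{>y}$ read off at $y-1,y,y+1$), eliminate the outermost stencil value through the homogeneous relation at the relevant site, and collapse the result via the antisymmetric identity
\begin{align*}
v_2(y)v_1(y)^\top-v_1(y)v_2(y)^\top=W_\lambda\begin{pmatrix}0&-1\\1&0\end{pmatrix},
\end{align*}
which is just Proposition \ref{prop:Wronsky} (that $\det(v_1(y)\ v_2(y))=W_\lambda$) in matrix form. At $x=y\pm1$ the lone surviving rank-one term is of the form $M_+(y-1)\,W_\lambda\left(\begin{smallmatrix}0&0\\0&1\end{smallmatrix}\right)$ resp.\ $M_-(y+1)\,W_\lambda\left(\begin{smallmatrix}1&0\\0&0\end{smallmatrix}\right)$, and vanishes because $M_+$ has only an $\uparrow\uparrow$ entry and $M_-$ only a $\downarrow\downarrow$ entry; thus one gets $0$. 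At $x=y$ the three contributions combine so that both off-diagonal entries cancel while both diagonal entries equal $e^{\im\lambda}W_\lambda$, and multiplying by $c=e^{-\im\lambda}W_\lambda^{-1}$ gives exactly $\sigma_0$.

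The main labor is the $x=y$ computation: one must correctly identify which indicators are active at the three sites $y-1,y,y+1$, use the $\uparrow$- and $\downarrow$-relations to replace $v_1(y+1)$ and $v_2(y-1)$ by $v_1(y),v_2(y)$, and verify that the four scalar entries of the resulting $2\times2$ matrix reduce through the Wronskian to $e^{\im\lambda}W_\lambda\sigma_0$. The forward/backward asymmetry of the $\uparrow$/$\downarrow$ equations is exactly what forces the four distinct indicators (rather than a single Heaviside step) in the definition of $K_\lambda$: the jump of the Green's function sits half a step apart in the two components, and keeping this bookkeeping straight is the only delicate point of the argument.
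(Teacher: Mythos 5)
Your proposal is correct and follows essentially the same route as the paper: the paper's Lemma \ref{lem:cmv} is exactly your block-tridiagonal expansion (with $V_{x,x+1},V_{x,x},V_{x,x-1}$ in place of $M_+,M_0,M_-$), and the paper likewise verifies the cases $|x-y|\ge 2$ via the homogeneous relation \eqref{ker:0}, kills the $x=y\pm1$ interface terms through the antisymmetric Wronskian identity combined with the one-entry structure of $V_{y+1,y}$ and $V_{y-1,y}$, and reduces the $x=y$ entries to $e^{\im\lambda}W_\lambda\sigma_0$ using the component relations \eqref{eq:v1}--\eqref{eq:v2}. The only blemish is that your ``resp.''\ pairing is swapped ($x=y+1$ goes with $M_-(y+1)W_\lambda\bigl(\begin{smallmatrix}1&0\\0&0\end{smallmatrix}\bigr)$ and $x=y-1$ with $M_+(y-1)W_\lambda\bigl(\begin{smallmatrix}0&0\\0&1\end{smallmatrix}\bigr)$), which is immaterial since both products vanish for the structural reason you give.
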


Before proving Proposition \ref{prop:kernel}, we give the expression of $\mathcal{C}$. 

\begin{lemma}\label{lem:cmv}
	We have
	\begin{align*}
	\(\mathcal{C} v\)(x)=V_{x,x-1}v(x-1)+V_{x,x}v(x)+V_{x,x+1}v(x+1),
	\end{align*}
	where
	\begin{align*}
	V_{x,x-1}&=\begin{pmatrix} 0 & 0 \\ 0 & e^{\im \theta(x-1)}\rho(x-1) \end{pmatrix},\ 
	V_{x,x}=\begin{pmatrix} 0 & -e^{\im \theta(x)}\alpha(x) \\ e^{\im \theta(x-1)}\overline{\alpha(x-1)} & 0 \end{pmatrix} ,\\ 
	V_{x,x+1}&=\begin{pmatrix} e^{\im \theta(x)}\rho(x) & 0 \\ 0 & 0 \end{pmatrix} .
	\end{align*}
\end{lemma}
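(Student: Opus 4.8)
The proof is a direct computation that tracks the action of each operator on the two spinor components; there is no analytic content, only the careful composition of the three maps in $\mathcal C=J_{\mathrm{VE}}UJ_{\mathrm{EV}}$ (see \eqref{def:CMV}). The plan is to expand $\mathcal C v$ from the inside out and then collect the resulting terms according to whether they involve $v(x-1)$, $v(x)$ or $v(x+1)$.

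First I would set $w:=J_{\mathrm{EV}}v$, so that by \eqref{def:jve} one has $w_\uparrow(x)=v_\downarrow(x)$ and $w_\downarrow(x)=v_\uparrow(x+1)$. Next I apply $U=S\hat C$ to $w$; writing out the two components exactly as in \eqref{evprobel} but with $w$ in place of $u$, and using the gauge reduction $\beta(x)=\rho(x)$, gives
\begin{align*}
(Uw)_\uparrow(x)&=e^{\im\theta(x-1)}\left(\rho(x-1)v_\downarrow(x-1)+\overline{\alpha(x-1)}\,v_\uparrow(x)\right),\\
(Uw)_\downarrow(x)&=e^{\im\theta(x+1)}\left(-\alpha(x+1)v_\downarrow(x+1)+\rho(x+1)v_\uparrow(x+2)\right).
\end{align*}
Finally I apply $J_{\mathrm{VE}}$ via \eqref{def:jev}, so that $(\mathcal C v)(x)=\left((Uw)_\downarrow(x-1),\,(Uw)_\uparrow(x)\right)^\top$; substituting the two displayed formulas and performing the shift $x\mapsto x-1$ in the first slot yields the explicit $\C^2$-vector
\begin{align*}
(\mathcal C v)(x)=\begin{pmatrix} e^{\im\theta(x)}\left(-\alpha(x)v_\downarrow(x)+\rho(x)v_\uparrow(x+1)\right)\\[2pt] e^{\im\theta(x-1)}\left(\rho(x-1)v_\downarrow(x-1)+\overline{\alpha(x-1)}\,v_\uparrow(x)\right)\end{pmatrix}.
\end{align*}

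The last step is pure bookkeeping: I read off the coefficient of each component of $v(x-1)$, $v(x)$ and $v(x+1)$ and assemble them into the three $2\times2$ blocks. One checks that $v_\downarrow(x-1)$ contributes only to the lower entry of the result, yielding $V_{x,x-1}$; that $v_\downarrow(x)$ and $v_\uparrow(x)$ produce the off-diagonal block $V_{x,x}$; and that $v_\uparrow(x+1)$ — which enters precisely through the index shift $x\mapsto x-1$ in the lower slot of $J_{\mathrm{VE}}$ — produces the upper entry $V_{x,x+1}$. I expect the only genuine difficulty to be keeping the three index shifts arising from $S$, $J_{\mathrm{EV}}$ and $J_{\mathrm{VE}}$ aligned simultaneously; once the indices are matched, the identification of the blocks $V_{x,x-1}$, $V_{x,x}$, $V_{x,x+1}$ is immediate and the claim follows.
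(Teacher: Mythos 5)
Your proposal is correct and follows essentially the same route as the paper: both expand $\mathcal C v=J_{\mathrm{VE}}U J_{\mathrm{EV}}v$ componentwise, arrive at the identical intermediate expression
$\bigl(e^{\im\theta(x)}(-\alpha(x)v_\downarrow(x)+\rho(x)v_\uparrow(x+1)),\,e^{\im\theta(x-1)}(\rho(x-1)v_\downarrow(x-1)+\overline{\alpha(x-1)}v_\uparrow(x))\bigr)^\top$,
and read off the blocks $V_{x,x\pm1}$, $V_{x,x}$. All your index shifts check out, so there is nothing to add.
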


\begin{proof}
	By
	\begin{align*}
	&\(J_{\mathrm{VE}} U J_{\mathrm{EV}} v\)(x)=\begin{pmatrix} \(UJ_{\mathrm{EV}} v\)_{\downarrow }(x-1) \\ \(UJ_{\mathrm{EV}} v\)_{\uparrow}(x)\end{pmatrix}\\&=
	\begin{pmatrix} e^{\im \theta(x)}\(-\alpha(x)\(J_{\mathrm{EV}} v\)_{\uparrow}(x)+\rho(x)\(J_{\mathrm{EV}} v\)_{\downarrow }(x)\) \\ e^{\im \theta(x-1)}\(\rho(x-1)\(J_{\mathrm{EV}} v\)_{\uparrow}(x-1)+\overline{\alpha(x-1)}\(J_{\mathrm{EV}} v\)_{\downarrow}(x-1)\)\end{pmatrix}
	\\&=
	\begin{pmatrix} e^{\im \theta(x)}\(-\alpha(x)v_{\downarrow}(x)+\rho(x)v_{\uparrow}(x+1)\) \\ e^{\im \theta(x-1)}\(\rho(x-1)v_{\downarrow}(x-1)+\overline{\alpha(x-1)}v_{\uparrow}(x)\)\end{pmatrix}
	\\&=
	\begin{pmatrix} 0 & 0 \\ 0 & e^{\im \theta(x-1)}\rho(x-1) \end{pmatrix} \begin{pmatrix} v_{\uparrow}(x-1) \\ v_{\downarrow}(x-1) \end{pmatrix} 
	+
	\begin{pmatrix} 0 & -e^{\im \theta(x)}\alpha(x) \\ e^{\im \theta(x-1)}\overline{\alpha(x-1)} & 0 \end{pmatrix} \begin{pmatrix} v_{\uparrow}(x) \\ v_{\downarrow}(x) \end{pmatrix} \\&\quad
	+
	\begin{pmatrix} e^{\im \theta(x)}\rho(x) & 0 \\ 0 & 0 \end{pmatrix} \begin{pmatrix} v_{\uparrow}(x+1) \\ v_{\downarrow}(x+1) \end{pmatrix}. 
	\end{align*}
	Therefore, we have the conclusion.
\end{proof}


\begin{proof}[Proof of Proposition \ref{prop:kernel}]
	Since $v_1,v_2$ solve \eqref{eq:equiv}, from \eqref{ev:prob}, Proposition \ref{prop:equiv} and Lemma \ref{lem:cmv}, we have
	\begin{align}\label{ker:0}
	V_{x,x-1}v_j(x-1)+(V_{x,x}-e^{\im \lambda})v_j(x)+V_{x,x+1}v_j(x+1)=0,\ \forall x\in \Z,\ j=1,2.
	\end{align}
	By Lemma \ref{lem:cmv}, we have
	\begin{equation*}
	\( (\mathcal{C}-e^{\im \lambda})  K_\lambda(\cdot,y)\)(x)=V_{x,x-1}K_\lambda(x-1,y)+\(V_{x,x}-e^{\im \lambda}\)K_\lambda(x,y)+V_{x,x+1}K_\lambda(x+1,y).
	\end{equation*}
	For $x>y+1$, we have
	\begin{align*}
	&\( (\mathcal{C}-e^{\im \lambda})  K_\lambda(\cdot,y)\)(x)\\&=e^{-\im\lambda}W_{\lambda}^{-1}\(V_{x,x-1}v_1(x-1)+(V_{x,x}-e^{\im \lambda})v_1(x)+V_{x,x+1}v_1(x+1)\)v_2(y)^\top \sigma_1=0,
	\end{align*}
	where the last equality follows from \eqref{ker:0}.
	Similarly, for $x<y-1$, we have
	\begin{align*}
	\((\mathcal{C}-e^{\im \lambda}) K_\lambda(\cdot,y)\)(x)=0.
	\end{align*}
	For the case $x=y+1$, we have
	\begin{align*}
	&\((\mathcal{C}-e^{\im \lambda}) K_\lambda(\cdot,y)\)(y+1)\\&=V_{y+1,y}K_\lambda(y,y)+\(V_{y+1,y+1}-e^{\im \lambda}\)K_\lambda(y+1,y)+V_{y+1,y+2}K_\lambda(y+2,y)\\&=
	e^{-\im\lambda}W_{\lambda}^{-1}V_{y+1,y}\(v_2(y)v_1(y)^\top\begin{pmatrix}0 & 0 \\ 1 & 0 \end{pmatrix}+v_1(y)v_2(y)^\top \begin{pmatrix}0 & 1 \\ 0 & 0 \end{pmatrix}\)\\&\quad+e^{-\im\lambda}W_{\lambda}^{-1}\(\(V_{y+1,y+1}-e^{\im \lambda}\)v_1(y+1) +V_{y+1,y+2}v_1(y+2)\)v_2(y)^\top \sigma_1\\&=e^{-\im\lambda}W_{\lambda}^{-1}V_{y+1,y}\(v_2(y)v_1(y)^\top-v_1(y)v_2(y)^\top\)\begin{pmatrix}0 & 0 \\ 1 & 0 \end{pmatrix}
	\\&=
	e^{-\im\lambda}W_{\lambda}^{-1}\begin{pmatrix} 0 & 0 \\ 0 & e^{\im \theta(y)}\rho(y) \end{pmatrix}\\&
	\quad \times \begin{pmatrix} 0 & v_{1,\downarrow }(y)v_{2,\uparrow}(y)-v_{1,\uparrow}(y)v_{2,\downarrow }(y)\\ v_{1,\uparrow}(y)v_{2,\downarrow }(y)-v_{1,\downarrow }(y)v_{2,\uparrow}(y)& 0 \end{pmatrix}\begin{pmatrix}0 & 0 \\ 1 & 0 \end{pmatrix}\\&=0,
	\end{align*}
	where we have used \eqref{ker:0} in the 3rd equality.
	Similarly, for the case $x=y-1$, we have
	\begin{align*}
	&\((\mathcal{C}-e^{\im \lambda}) K_\lambda(\cdot,y)\)(y-1)=0.
	\end{align*}
	
	Finally, we consider the case $x=y$.
	As before,
	\begin{align*}
	&\((\mathcal{C}-e^{\im \lambda}) K_\lambda(\cdot,y)\)(y)=V_{y,y-1}K_\lambda(y-1,y)+\(V_{y,y}-e^{\im \lambda}\)K_\lambda(y,y)+V_{y,y+1}K_\lambda(y+1,y)\\&=
	e^{-\im\lambda}W_{\lambda}^{-1}V_{y,y-1}v_2(y-1)v_1(y)^\top \sigma_1
	\\&\quad+e^{-\im\lambda}W_{\lambda}^{-1}\(V_{y,y}-e^{\im \lambda}\)\(v_2(y)v_1(y)^\top \begin{pmatrix} 0 & 0 \\ 1 & 0\end{pmatrix} 
	+ v_1(y)v_2(y)^\top \begin{pmatrix} 0 & 1 \\ 0 & 0\end{pmatrix}\)\\&\quad+e^{-\im\lambda}W_{\lambda}^{-1}V_{y,y+1}v_1(y+1)v_2(y)^\top \sigma_1.
	\end{align*}
	Then, 
	\begin{align*}
	&V_{y,y-1}v_2(y-1)v_1(y)^\top \sigma_1+V_{y,y+1}v_1(y+1)v_2(y)^\top \sigma_1\\&=
	\begin{pmatrix} 0 & 0 \\ 0 & e^{\im \theta(y-1)}\rho(y-1)\end{pmatrix}\begin{pmatrix} v_{1,\uparrow}(y)v_{2,\uparrow}(y-1)& v_{1,\downarrow }(y)v_{2,\uparrow}(y-1)\\ v_{1,\uparrow}(y)v_{2,\downarrow }(y-1) & v_{1,\downarrow }(y)v_{2,\downarrow }(y-1) \end{pmatrix}\sigma_1\\&\quad
	+
	\begin{pmatrix} e^{\im \theta(y )}\rho(y )   & 0 \\ 0 & 0 \end{pmatrix}\begin{pmatrix}v_{1,\uparrow}(y+1)v_{2,\uparrow}(y)& v_{1,\uparrow}(y+1)v_{2,\downarrow }(y)\\ v_{1,\downarrow }(y+1)v_{2,\uparrow}(y) & v_{1,\downarrow }(y+1)v_{2,\downarrow }(y)\end{pmatrix}\sigma_1\\&
	=
	\begin{pmatrix}
	e^{\im \theta(y )}\rho(y )v_{1,\uparrow}(y+1)v_{2,\downarrow }(y) & e^{\im \theta(y )}\rho(y )v_{1,\uparrow}(y+1)v_{2,\uparrow}(y)\\
	e^{\im \theta(y-1)}\rho(y-1)v_{1,\downarrow }(y)v_{2,\downarrow }(y-1) & e^{\im \theta(y-1)}\rho(y-1)v_{1,\uparrow}(y)v_{2,\downarrow }(y-1)
	\end{pmatrix},
	\end{align*}
	and
	\begin{align*}
	&\(V_{y,y}-e^{\im \lambda}\)\(v_2(y)v_1(y)^\top \begin{pmatrix} 0 & 0 \\ 1 & 0\end{pmatrix} + v_1(y)v_2(y)^\top \begin{pmatrix} 0 & 1 \\ 0 & 0\end{pmatrix}\)\\&
	=\begin{pmatrix} -e^{\im \lambda} & -e^{\im \theta(y)}\alpha(y) \\ e^{\im \theta(y-1)}\overline{\alpha(y-1)}  & -e^{\im \lambda} \end{pmatrix}
	\begin{pmatrix}
	v_{1,\downarrow }(y)v_{2,\uparrow}(y) & v_{1,\uparrow}(y)v_{2,\uparrow}(y) \\
	v_{1,\downarrow }(y)v_{2,\downarrow }(y) & v_{1,\downarrow }(y)v_{2,\uparrow}(y)
	\end{pmatrix}.
	\end{align*}
	Thus, setting $A_{i,j}:=e^{\im \lambda}W_\lambda\(\((\mathcal{C}-e^{\im \lambda}) K_\lambda(\cdot,y)\)(y)\)_{i,j}$ ($i,j=1,2$), we have
	\begin{align*}
	A_{1,1}&=e^{\im \theta(y )}\rho(y )v_{1,\uparrow}(y+1)v_{2,\downarrow }(y) -e^{\im \lambda}v_{1,\downarrow }(y)v_{2,\uparrow}(y)-e^{\im \theta(y)}\alpha(y)v_{1,\downarrow }(y)v_{2,\downarrow }(y), 
	\\
	A_{1,2}&=e^{\im \theta(y )}\rho(y )v_{1,\uparrow}(y+1)v_{2,\uparrow}(y)-e^{\im \lambda} v_{1,\uparrow}(y)v_{2,\uparrow}(y) -e^{\im \theta(y)}\alpha(y)v_{1,\downarrow }(y)v_{2,\uparrow}(y),
	\\
	A_{2,1}&=e^{\im \theta(y-1)}\rho(y-1)v_{1,\downarrow }(y)v_{2,\downarrow }(y-1) +e^{\im \theta(y-1)}\overline{\alpha(y-1)}v_{1,\downarrow }(y)v_{2,\uparrow}(y)-e^{\im \lambda}v_{1,\downarrow }(y)v_{2,\downarrow }(y),
	\\
	A_{2,2}&= e^{\im \theta(y-1)}\rho(y-1)v_{1,\uparrow}(y)v_{2,\downarrow }(y-1) + e^{\im \theta(y-1)}\overline{\alpha(y-1)}v_{1,\uparrow}(y)v_{2,\uparrow}(y)-e^{\im \lambda}v_{1,\downarrow }(y)v_{2,\uparrow}(y).
	\end{align*}
	We will show that $A_{1,2}=A_{2,1}=0$ and $A_{1,1}=A_{2,2}=e^{\im \lambda}W_\lambda$.
	Rewriting \eqref{evprobel2} in terms of $v_j$, we have
	\begin{align}
	e^{\im \theta(x)}\rho(x) v_{j,\downarrow }(x)=-e^{\im \theta(x)}\overline{\alpha(x)}v_{j,\uparrow}(x+1)+e^{\im \lambda}v_{j,\downarrow }(x+1),\label{eq:v1}\\
	-e^{\im \lambda}v_{j,\uparrow}(x)-e^{\im \theta(x)}\alpha(x)v_{j,\downarrow }(x)=-e^{\im \theta(x)}\rho(x)v_{j,\uparrow}(x+1).\label{eq:v2}
	\end{align}
	Then, by \eqref{eq:v2} with $x=y$ and $j=1$, we have
	\begin{align*}
	A_{1,2}=\(e^{\im \theta(y )}\rho(y )v_{1,\uparrow}(y+1)-e^{\im \lambda} v_{1,\uparrow}(y) -e^{\im \theta(y)}\alpha(y)v_{1,\downarrow }(y)\)v_{2,\uparrow}(y)=0,
	\end{align*}
	and by \eqref{eq:v1} with $x=y-1$ and $j=2$, we have
	\begin{align*}
	A_{2,1}&=\(e^{\im \theta(y-1)}\rho(y-1)v_{2,\downarrow }(y-1) +e^{\im \theta(y-1)}\overline{\alpha(y-1)}v_{2,\uparrow}(y)-e^{\im \lambda}v_{2,\downarrow }(y)\)v_{1,\downarrow }(y)=0.
	\end{align*}
	For the diagonal terms, by \eqref{eq:v2} with $x=y$ and $j=1$, we have
	\begin{align*}
	A_{1,1}&=\(e^{\im \theta(y )}\rho(y )v_{1,\uparrow}(y+1)-e^{\im \theta(y)}\alpha(y)v_{1,\downarrow }(y)\)v_{2,\downarrow }(y) -e^{\im \lambda}v_{1,\downarrow }(y)v_{2,\uparrow}(y)\\&
	=e^{\im \lambda}\(v_{1,\uparrow}(y)v_{2,\downarrow }(y) -v_{1,\downarrow }(y)v_{2,\uparrow}(y)\)=e^{\im \lambda}W_\lambda.
	\end{align*}
	and by \eqref{eq:v1} with $x=y-1$ and $j=2$, we have
	\begin{align*}
	A_{2,2}&=\(e^{\im \theta(y-1)}\rho(y-1) v_{2,\downarrow }(y-1) + e^{\im \theta(y-1)}\overline{\alpha(y-1)}v_{2,\uparrow}(y)\)v_{1,\uparrow}(y)-e^{\im \lambda}v_{1,\downarrow }(y)v_{2,\uparrow}(y)\\&
	=e^{\im \lambda}\(v_{1,\uparrow}(y)v_{2,\downarrow }(y) -v_{1,\downarrow }(y)v_{2,\uparrow}(y)\)=e^{\im \lambda}W_\lambda.
	\end{align*}
	Therefore, the conclusion follows.
\end{proof}

\section{The unperturbed case}\label{sec:3}

We consider $e^{\im z}$ and $\cos z$ as an function on $\Cp=\C/2\pi\Z$ instead of $\C$.
Notice that $-z$, $\bar z$ are well-defined in $\Cp$.
Let $\T :=\{z\in \Cp\ |\ \Im z=0\}$ and $\Tp:=\{z\in \Cp\ |\ \Im z>0\}$.

In this section, we study the case $\alpha(x)\equiv -\overline{\alpha_0}$ and $\theta(x)\equiv 0$.
We denote the corresponding coin operator by $\hat C_0$ and set $U_0=S\hat C_0$.
As \eqref{def:CMV}, we also set $\mathcal{C}_0:=J_{\mathrm{VE}}U_0 J_{\mathrm{EV}}$.
The transfer matrix is $$T_\lambda (x)=T_{0,\lambda} :=\rho_0^{-1}\begin{pmatrix} e^{\im \lambda} &-\overline{ \alpha_0}  \\   -\alpha_0 & e^{-\im \lambda}\end{pmatrix}.$$
Since $\det T_{0,\lambda} =1$, the two eigenvalues of $T_{0,\lambda}$ can be expressed as $e^{\pm\im \xi(\lambda)}$.
In particular, by $\mathrm{tr}T_{0,\lambda} =\rho_0^{-1}(e^{\im \lambda}+e^{-\im \lambda})=2\rho_0^{-1}\cos \lambda$, we obtain the relation
\begin{align}
\cos \xi= \rho_0^{-1}\cos \lambda.\label{21}
\end{align}
Conversely, if $(\xi,\lambda)$ satisfy \eqref{21}, $e^{\pm\im  \xi}$ are eigenvalues of $T_{0,\lambda}$. 
Further, we set the closed interval $I_n\subset \T $ ($n=1,2$) by
\begin{align}\label{21.2}
I_1:=\{\xi\in [0,\pi]\ |\ |\cos \xi|\leq \rho_0\},\quad I_2:=-I_1=\{\xi\in [-\pi,0]\ |\ -\xi\in I_1\},
\end{align}
and set $I_0=I_1\cup I_2$.

\begin{proposition}\label{prop:4.1}
For given $\lambda\in \Cp\setminus I_0$, there exists a unique $\xi=\xi(\lambda)\in \T_+$ which satisfies \eqref{21}.
Further, the map $\xi:\Cp\setminus I_0 \to \Tp$ is holomorphic.
\end{proposition}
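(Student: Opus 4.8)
The plan is to read the relation \eqref{21} as the problem of inverting the cosine on $\Tp$, and to show that $\cos$ restricts to a biholomorphism of $\Tp$ onto $\C\setminus[-1,1]$. Granting this, the equation $\cos\xi=\rho_0^{-1}\cos\lambda$ has a unique solution $\xi\in\Tp$ exactly when $\rho_0^{-1}\cos\lambda\notin[-1,1]$, and $\lambda\mapsto\xi(\lambda)$ is automatically holomorphic, being the composition of the inverse biholomorphism with the entire map $\lambda\mapsto\rho_0^{-1}\cos\lambda$. Thus the whole statement reduces to two claims: that $\cos|_{\Tp}$ is a biholomorphism onto $\C\setminus[-1,1]$, and that $\Cp\setminus I_0$ is exactly the preimage of $\C\setminus[-1,1]$ under $\lambda\mapsto\rho_0^{-1}\cos\lambda$.

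For the first claim I would pass to the variable $\zeta=e^{\im\xi}$. Since $\Cp=\C/2\pi\Z$ and the quotient acts only in the real direction, $\Im\xi$ is well defined, and $\xi\mapsto e^{\im\xi}$ maps $\Tp$ biholomorphically onto the punctured disk $\{\zeta : 0<|\zeta|<1\}$ (the modulus $e^{-\Im\xi}$ recovers $\Im\xi$ and the argument recovers $\Re\xi \bmod 2\pi$). In this variable $\cos\xi=\frac{1}{2}(\zeta+\zeta^{-1})$ is the Joukowski map, and restricting to the upper half $\Tp$ is precisely the choice that breaks the symmetry $\cos(-\xi)=\cos\xi$, since $\zeta\mapsto 1/\zeta$ corresponds to $\xi\mapsto-\xi$ and sends the punctured disk to its exterior. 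Injectivity on the punctured disk follows because $\frac{1}{2}(\zeta_1+\zeta_1^{-1})=\frac{1}{2}(\zeta_2+\zeta_2^{-1})$ forces $\zeta_1=\zeta_2$ or $\zeta_1\zeta_2=1$, the latter being impossible when $|\zeta_1|,|\zeta_2|<1$. Surjectivity onto $\C\setminus[-1,1]$ follows by solving $\zeta^2-2w\zeta+1=0$: its two roots have product $1$, so for $w\notin[-1,1]$ exactly one lies in the punctured disk, while $w\in[-1,1]$ corresponds to both roots on the unit circle. This yields the desired biholomorphism $\cos:\Tp\to\C\setminus[-1,1]$.

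For the second claim I would determine when $\cos\lambda\in[-\rho_0,\rho_0]$, equivalently $\rho_0^{-1}\cos\lambda\in[-1,1]$. Writing $\lambda=a+\im b$ gives $\cos\lambda=\cos a\cosh b-\im\sin a\sinh b$, which is real only if $b=0$ or $\sin a=0$. The cases $a\in\{0,\pi\}$ give $|\cos\lambda|=\cosh b\geq 1>\rho_0$ and are excluded, while the case $b=0$ gives $\lambda$ real with $|\cos\lambda|\leq\rho_0$, which is exactly $\lambda\in I_0$ by the definition \eqref{21.2}. Hence $\cos\lambda\in[-\rho_0,\rho_0]$ iff $\lambda\in I_0$, so $\lambda\in\Cp\setminus I_0$ is equivalent to $\rho_0^{-1}\cos\lambda\in\C\setminus[-1,1]$. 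Combined with the first claim, the existence, uniqueness, and holomorphy of $\xi(\lambda)$ all follow at once.

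The main obstacle is the first claim. Although $\cos$ has no critical points in $\Tp$ (these occur at $\xi\in\{0,\pi\}$ on the real axis, where $\sin\xi=0$), so that $\cos|_{\Tp}$ is automatically a local biholomorphism, one still has to establish global injectivity and pin down the image as exactly $\C\setminus[-1,1]$. The Joukowski substitution is what makes both transparent, and the only real care required is in handling the quotient $\Cp$ correctly; everything else is a direct unwinding of the definitions.
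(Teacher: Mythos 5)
Your proof is correct, and at the top level it mirrors the paper: both reduce the statement to the claim that $\cos:\Tp\to\C\setminus[-1,1]$ is a biholomorphism and then set $\xi(\lambda)=\cos^{-1}\bigl(\rho_0^{-1}\cos\lambda\bigr)$. The difference lies in how that key claim is established. The paper (Lemma \ref{lem:4.2}, proved in the appendix) tracks the image curves $C_\delta=\{\cos(s+\im\delta)\}$, shows they are ellipses encircling $[-1,1]$ that are pairwise disjoint for distinct $\delta>0$, and deduces injectivity and surjectivity from this foliation. You instead substitute $\zeta=e^{\im\xi}$, observe that $\Tp$ is biholomorphic to the punctured disk, and analyze the Joukowski map $\zeta\mapsto\tfrac12(\zeta+\zeta^{-1})$ via the quadratic $\zeta^2-2w\zeta+1=0$: the product of the roots is $1$, roots on the unit circle correspond exactly to $w\in[-1,1]$, so for $w\notin[-1,1]$ precisely one root lies inside. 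This is shorter and purely algebraic, and you also supply a step the paper leaves implicit, namely the verification (via $\cos(a+\im b)=\cos a\cosh b-\im\sin a\sinh b$) that $\lambda\in\Cp\setminus I_0$ is \emph{equivalent} to $\rho_0^{-1}\cos\lambda\in\C\setminus[-1,1]$; the paper asserts this membership without proof. What the paper's heavier ellipse argument buys is reuse: the curves $C_\delta$ and the boundary analysis as $\delta\downarrow 0$ are exactly what the appendix needs to construct the analytic continuations $\cos^{-1}_\pm$ through the cut $(-1,1)$ in \eqref{21.3}--\eqref{21.4}, which the later sections rely on for defining $\lambda_\pm$ in \eqref{21.6}. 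Your Joukowski approach could also deliver those continuations (by following the root of the quadratic as $w$ crosses the cut), but as written it only covers the biholomorphism, which is all Proposition \ref{prop:4.1} itself requires.
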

For the proof we use the following properties of the function $\cos$.
For the convenience of readers, we give the proof in the appendix of this paper (Section \ref{proof:tri}).
\begin{lemma}\label{lem:4.2}
The trigonometric function $\cos:\Tp \to \C\setminus [-1,1]$ is a biholomorphism. 
We set $\cos^{-1}:\C\setminus [-1,1]\to \Tp$.
Moreover, we can analytically extend $\cos^{-1}$ through the cut $(-1,1)$ defining $\cos^{-1}_\pm:\{z\in \C\ |\ |\Re z|<1\}\to \C$ by
\begin{align*}
\cos^{-1}_+ z:=\begin{cases} \cos^{-1} z & \Re z\in (-1,1),\ \Im z\geq 0,\\
\overline{\cos^{-1}\bar z}&\Re z\in (-1,1),\ \Im z< 0,
\end{cases}
\\
\cos^{-1}_- z:=\begin{cases} \overline{\cos^{-1}\bar z}  & \Re z\in (-1,1),\ \Im z> 0,\\
   \cos^{-1} z&\Re z\in (-1,1),\ \Im z\leq 0.
\end{cases}
\end{align*}
In particular
\begin{align}\label{21.5}
\cos^{-1}_\pm(\xi)=\mp\arccos \xi,\quad \xi \in \T.
\end{align}
Here, $\arccos$ is the inverse of $\cos:(0,\pi)\to (-1,1)$.
\end{lemma}

\begin{proof}[Proof of Proposition \ref{prop:4.1}]
Let $\lambda\in \Cp \setminus I_0$.
Then, we have $\rho_0^{-1} \cos \lambda \in \C\setminus [-1,1]$.
Therefore, setting $\cos^{-1}$ to be the inverse of $\cos:\Tp \to \C\setminus[-1,1]$, we can set
\begin{align*}
\xi(\lambda)=\cos^{-1} \(\rho_0^{-1}\cos \lambda\),
\end{align*}
which is obviously holomorphic and satisfies \eqref{21}.
Since $\cos:\Tp \to \C\setminus[-1,1]$ is an injection, the uniqueness also follows.
\end{proof}

Even though the map $\lambda\mapsto \xi(\lambda)$ seems to be more natural than the inverse map $\xi\mapsto \lambda(\xi)$ (which is not defined at this moment), we need it for the dispersive theory.
\begin{remark}
For the Schr\"odinger equation, it is $\xi\mapsto \lambda(\xi)=\xi^2$ and for the discrete Schr\"odinnger equation it is $\xi\mapsto \lambda(\xi)= 2-2\cos \xi$.
However, in our case we have to handle the inverse of $\cos$.
\end{remark}


Set $\delta_0>0$ to be the unique solution of $\rho_0\cosh \delta_0=1$.
Set $\T_\delta:=\{z\in\Cp\ |\ |\Im z|<\delta \}$.
We define holomorphic functions $\lambda_\pm$ on $\T_{\delta_0}$ by
\begin{align}\label{21.6}
\lambda_\pm(\xi)=\cos^{-1}_\pm\(\rho_0\cos \xi\).
\end{align}
Notice that since $\cos(-\xi)=\cos \xi$, we have
\begin{align}\label{21.7}
\lambda_\pm(-\xi)=\lambda_\pm(\xi).
\end{align}

\begin{remark}
Notice $|\rho_0\Re \cos\xi|=|\rho_0\cosh (\Im \xi)\cos (\Re\xi)|<1$, for $\xi\in \T_{\delta_0}$.
Therefore, we have $\rho_0\cos \T_{\delta_0}\subset \{z\in \C\ |\ |\Re z|<1\}$ and \eqref{21.6} give well defined holomorphic functions.
\end{remark}

By \eqref{21.5} we have
\begin{align*}
\left.\lambda_\pm\right|_\T\in C^\omega(\T;\R),\ \left.\lambda_\pm\right|_\T (\xi)= \mp \arccos \(\rho_0\cos\xi\).
\end{align*}

\begin{remark}
If $\xi \in \T$, then $\lambda_\pm(\xi)\in I_{\frac 3 2 \pm \frac12}$.
Thus, the two arcs $\{e^{\im \lambda_\pm(\xi)}\ |\ \xi \in \T\}$  correspond to $\sigma_{\mathrm{ess}}(\cC)$ and in particular, the spectrum of $\sigma(\cC_0)$.
Further, if $\delta>0$, $\{\xi\in \T\ |\ \lambda_\pm(\xi+\im \delta)\}$ will encircle $I_{\frac 3 2 \pm \frac12}$.
\end{remark}

In the following, we will consider only $\lambda_-$ which corresponds to the spectrum of $\cC$ in the upper half plain.
We simply denote $\lambda_-(\xi)$ by $\lambda(\xi)$.

Let $\xi\in \T_{\delta}$ where $\delta>0$ sufficiently small so that
\begin{align}\label{23.1.1}
A_\pm(\xi):=\(|\alpha_0|^2+(\sin \lambda(\xi) \mp \rho_0 \sin \xi)^2\)^{1/2}\neq 0\text{ for all } \xi \in \T_\delta.
\end{align}
\begin{remark}
If $\xi\in \T$, we have $\(|\alpha_0|^2+(\sin \lambda(\xi) \mp \rho_0 \sin \xi)^2\)\geq |\alpha_0|^2$ (notice that it is real), so there exists $\delta>0$ s.t.\  if $\xi\in \T_{\delta}$, then $$\Re\(|\alpha_0|^2+(\sin \lambda(\xi) \mp \rho_0 \sin \xi)^2\)>|\alpha_0|/2,$$
by continuity.
Here, the root in \label{23.1.1} is taken in the usual manner $(re^{\im \theta})^{1/2}=r^{1/2}e^{\im \theta/2}$ for $\theta\in (-\pi,\pi]$.
Also, for $\xi \in \T$, since $\lambda(\xi)\in \R$, we have
\end{remark}
We set 
\begin{align}\label{23.2}
\varphi_{\pm}(\xi)=A_\pm(\xi)^{-1}\begin{pmatrix}\overline{\alpha_0}\\ e^{\im \lambda(\xi)}-\rho_0 e^{\pm\im \xi}\end{pmatrix}.
\end{align}
On can easily show that $\varphi_\pm(\xi)$ are eigenvectors of $T_{0,\lambda(\xi)}$ associated to the eigenvalues $e^{\pm \im \xi}$.
Further, by \eqref{21}, for $\xi\in \T$, since $e^{\im \lambda(\xi)}- \rho_0e^{\pm \im \xi}=\im \(\sin \lambda(\xi)\mp \rho_0 \sin \xi\)$ so we have $|\alpha_0|^2+(\sin \lambda(\xi) \mp \rho_0 \sin \xi)^2=|\alpha_0|^2+|e^{\im \lambda(\xi)}-\rho_0 e^{\pm\im \xi}|^2$.

\begin{lemma}\label{lem:6}
$\T_{\delta}\ni \xi\mapsto \varphi_\pm(\xi)\in \C^2$ is holomorphic.
Moreover, we have $\varphi_\pm(-\xi)=\varphi_\mp(\xi)$  and $\|\varphi_\pm(\xi)\|_{\C^2}=1$ for $\xi \in \T$.
\end{lemma}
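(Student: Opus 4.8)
The plan is to read off all three assertions of Lemma~\ref{lem:6} from the explicit formula \eqref{23.2}, handling holomorphy, the reflection symmetry, and the normalization separately. Throughout I would use that $\lambda=\lambda_-$ is holomorphic on $\T_{\delta_0}\supset\T_\delta$ by \eqref{21.6}, that the two entries of the bracketed vector are compositions of $\lambda(\xi)$ and $\xi$ with the entire exponential, and that $A_\pm$ (being built from $\lambda$, $\xi$ and $\sin$) is likewise holomorphic.

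Holomorphy is then essentially automatic: the vector part of $\varphi_\pm$ is holomorphic on $\T_\delta$, and $A_\pm(\xi)$ is holomorphic and, by the defining choice of $\delta$ in \eqref{23.1.1}, nonvanishing on $\T_\delta$; hence $A_\pm(\xi)^{-1}$ is holomorphic there and $\varphi_\pm$ is a product of holomorphic maps.

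For the symmetry $\varphi_\pm(-\xi)=\varphi_\mp(\xi)$ I would substitute $-\xi$ into \eqref{23.2} and use the two facts $\lambda(-\xi)=\lambda(\xi)$ from \eqref{21.7} and $e^{\pm\im(-\xi)}=e^{\mp\im\xi}$; these turn the vector part of $\varphi_\pm(-\xi)$ into the vector part of $\varphi_\mp(\xi)$. Since $\sin$ is odd, the scalar factor transforms as $A_\pm(-\xi)=|\alpha_0|^2+(\sin\lambda(\xi)\pm\rho_0\sin\xi)^2=A_\mp(\xi)$, so the prefactors match as well and the identity follows.

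The one step with genuine content is the normalization on $\T$. For $\xi\in\T$ we have $\lambda(\xi)\in\R$ by \eqref{23.1}, so the squared norm of the bracketed vector is
\begin{align*}
|\alpha_0|^2+\left|e^{\im\lambda}-\rho_0 e^{\pm\im\xi}\right|^2
=|\alpha_0|^2+1+\rho_0^2-2\rho_0\cos(\lambda\mp\xi).
\end{align*}
Expanding $\cos(\lambda\mp\xi)=\cos\lambda\cos\xi\pm\sin\lambda\sin\xi$ and then inserting the dispersion relation $\cos\lambda=\rho_0\cos\xi$ from \eqref{21}, together with $|\alpha_0|^2=1-\rho_0^2$ and $\sin^2\lambda=1-\rho_0^2\cos^2\xi$, I expect the right-hand side to collapse to $2-2\rho_0^2\cos^2\xi\mp2\rho_0\sin\lambda\sin\xi$, which is precisely $|\alpha_0|^2+(\sin\lambda\mp\rho_0\sin\xi)^2=A_\pm(\xi)$. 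Thus the squared norm of the bracketed vector equals $A_\pm(\xi)$, and the normalization factor in \eqref{23.2} is chosen precisely so that $\|\varphi_\pm(\xi)\|_{\C^2}=1$. I expect this sign bookkeeping—keeping the $\mp$ consistent and invoking \eqref{21} at the right moment—to be the only real obstacle, holomorphy and the symmetry being otherwise formal.
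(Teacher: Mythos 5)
Your holomorphy and symmetry arguments are correct and are exactly what the paper intends: holomorphy follows because $A_\pm$ is holomorphic and, by the choice of $\delta$ in \eqref{23.1.1}, nonvanishing on $\T_\delta$, and the identity $\varphi_\pm(-\xi)=\varphi_\mp(\xi)$ follows from \eqref{21.7} together with $A_\pm(-\xi)=A_\mp(\xi)$, precisely as you compute (the paper's own proof merely points to \eqref{21.7} and the explicit formula \eqref{23.2}). Your norm identity is also right, and it is the one computation the paper leaves implicit: for $\xi\in\T$ the relation $\cos\lambda(\xi)=\rho_0\cos\xi$ from \eqref{21} makes the second entry $e^{\im\lambda(\xi)}-\rho_0e^{\pm\im\xi}$ purely imaginary, whence
\begin{align*}
|\alpha_0|^2+\bigl|e^{\im\lambda(\xi)}-\rho_0 e^{\pm\im\xi}\bigr|^2
=|\alpha_0|^2+\bigl(\sin\lambda(\xi)\mp\rho_0\sin\xi\bigr)^2=A_\pm(\xi).
\end{align*}

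The gap is in your final inference. If the \emph{squared} norm of the bracketed vector equals $A_\pm(\xi)$, then the prefactor $A_\pm(\xi)^{-1}$ in \eqref{23.2} yields $\|\varphi_\pm(\xi)\|_{\C^2}=A_\pm(\xi)^{-1/2}$, not $1$; and $A_\pm(\xi)=2-2\rho_0^2\cos^2\xi\mp2\rho_0\sin\lambda(\xi)\sin\xi$ is not identically $1$ (at $\xi=0$ it equals $2|\alpha_0|^2$). So the conclusion you draw contradicts your own computation. The repair is to take the prefactor $A_\pm(\xi)^{-1/2}$ (principal branch), which is holomorphic on $\T_\delta$ for small $\delta$ since $A_\pm(\T)\subset(0,\infty)$ forces $A_\pm(\T_\delta)$ into the right half-plane; with this exponent all three claims hold and your computations carry over verbatim, including $A_\pm(-\xi)=A_\mp(\xi)$, which the principal branch respects. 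To be fair, the paper commits the same square-root conflation: its proof asserts that $A_\pm^{-1}$ ``normalizes,'' and even its alternative factor $(|\alpha_0|^2+|e^{\im\lambda(\xi)}-\rho_0e^{\pm\im\xi}|^2)^{-1}$ is off by the same power (note also the extra $\tfrac{1}{\sqrt2}$ in \eqref{24.1}, inconsistent with \eqref{23.2} either way). Your identity thus actually exposes a typo in \eqref{23.2} rather than a defect of strategy — but a complete proof of the lemma as stated must flag and fix the exponent, since the step with genuine content fails as you wrote it.
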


\begin{proof}
The factor $A_\pm(\xi)^{-1}$ is multiplied to normalize $\varphi_\pm(\xi)$ at $\xi \in \T$ (only at $\xi \in \T$ and we note that it is not normalized in $\xi \in \T_\delta\setminus\T$).
Notice that if we multiply $(|\alpha_0|^2+|e^{\im \lambda(\xi)}-\rho_0 e^{\pm \im \xi}|^2)^{-1}$ instead of $A_\pm(\xi)^{-1}$, $\varphi_\pm(\xi)$ will be normalized in $\xi\in \T_\delta$ but will not be holomorphic any more.

The second claim can be seen from \eqref{21.7} and the explicit form of $ \varphi_\pm (\xi)$ given in \eqref{23.2}.
\end{proof}

From the symmetry \eqref{16}, we have the following identity.
\begin{lemma}\label{lem:6.3}
There exists $\gamma\in C^\omega(\T;\C)$ s.t.\ $|\gamma(\xi)|=1$ and
\begin{align}\label{23.4}
\gamma(\xi)\sigma_1\overline{\varphi_\pm(\xi)}= \varphi_\pm(-\xi)
\end{align}
\end{lemma}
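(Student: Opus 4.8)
The plan is to work on the real circle $\T$, where by \eqref{23.1} the function $\lambda(\xi)=\lambda_-(\xi)$ is real, and to exploit the symmetry \eqref{16} of the transfer matrix together with the facts about $\varphi_\pm$ recorded just before and in Lemma \ref{lem:6}: that $\varphi_\pm(\xi)$ is an eigenvector of $T_{0,\lambda(\xi)}$ for the eigenvalue $e^{\pm\im\xi}$, that $\varphi_\pm(-\xi)=\varphi_\mp(\xi)$, and that $\|\varphi_\pm(\xi)\|_{\C^2}=1$ for $\xi\in\T$. First I would specialize \eqref{16} to the constant coin, where $T_\lambda(x)\equiv T_{0,\lambda}$, and to $\lambda=\lambda(\xi)\in\R$, so that $\bar\lambda=\lambda$ and $\sigma_1$ is real; this turns \eqref{16} into $\overline{T_{0,\lambda(\xi)}}=\sigma_1 T_{0,\lambda(\xi)}\sigma_1$. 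Conjugating the eigen-relation $T_{0,\lambda(\xi)}\varphi_\pm(\xi)=e^{\pm\im\xi}\varphi_\pm(\xi)$, using $\overline{e^{\pm\im\xi}}=e^{\mp\im\xi}$ for real $\xi$, then multiplying on the left by $\sigma_1$ and using $\sigma_1^2=\sigma_0$, gives $T_{0,\lambda(\xi)}\bigl(\sigma_1\overline{\varphi_\pm(\xi)}\bigr)=e^{\mp\im\xi}\bigl(\sigma_1\overline{\varphi_\pm(\xi)}\bigr)$. Thus $\sigma_1\overline{\varphi_\pm(\xi)}$ is an eigenvector of $T_{0,\lambda(\xi)}$ for the eigenvalue $e^{\mp\im\xi}$.

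By Lemma \ref{lem:6}, $\varphi_\pm(-\xi)=\varphi_\mp(\xi)$ is also an eigenvector of $T_{0,\lambda(\xi)}$ for the same eigenvalue $e^{\mp\im\xi}$. For $\xi\in\T$ with $e^{\im\xi}\neq e^{-\im\xi}$, i.e. $\xi\notin\{0,\pi\}$, this eigenvalue is simple, so the two vectors are parallel and there is a scalar $\gamma_\pm(\xi)$ with $\varphi_\pm(-\xi)=\gamma_\pm(\xi)\,\sigma_1\overline{\varphi_\pm(\xi)}$. Since $\|\sigma_1\overline{\varphi_\pm(\xi)}\|_{\C^2}=\|\varphi_\pm(\xi)\|_{\C^2}=1=\|\varphi_\pm(-\xi)\|_{\C^2}$, we get $|\gamma_\pm(\xi)|=1$. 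To obtain the asserted real-analyticity I would represent the constant by the Hermitian pairing $\gamma_\pm(\xi)=\langle\varphi_\pm(-\xi),\sigma_1\overline{\varphi_\pm(\xi)}\rangle$ (linear in the first slot): this expression is real-analytic in $\xi\in\T$ because $\varphi_\pm$ is holomorphic on $\T_\delta$, so its restriction to $\T$ and the conjugate are real-analytic there; it agrees with the proportionality constant off $\{0,\pi\}$, and by continuity it does so at $\xi=0,\pi$ as well, which disposes of the two degenerate points where the eigenvalues coincide.

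The main thing to check — and the only genuinely computational step — is that a single $\gamma$ serves both signs, as the statement demands. Here I would insert the explicit form \eqref{23.2}, using $\overline{\lambda(\xi)}=\lambda(\xi)$ and $\overline{e^{\pm\im\xi}}=e^{\mp\im\xi}$ on $\T$ to write $\sigma_1\overline{\varphi_\pm(\xi)}$ and $\varphi_\pm(-\xi)=\varphi_\mp(\xi)$ componentwise, and solve for $\gamma_\pm$ from each component. Both the consistency of the two components and the equality $\gamma_+=\gamma_-$ reduce to the relation $\cos\lambda(\xi)=\rho_0\cos\xi$ (a consequence of \eqref{21.6}) together with $\rho_0^2=1-|\alpha_0|^2$; in particular this yields $\sin^2\lambda(\xi)-\rho_0^2\sin^2\xi=|\alpha_0|^2$, which is exactly what forces $\gamma_+=\gamma_-$. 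Setting $\gamma:=\gamma_+=\gamma_-$ then produces a single $\gamma\in C^\omega(\T;\C)$ with $|\gamma|=1$ satisfying \eqref{23.4} for both signs.

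I expect the eigenvector argument and the unimodularity to be routine; the only places demanding care are verifying that the same scalar works for $+$ and $-$ and the bookkeeping at the degenerate points $\xi\in\{0,\pi\}$, both of which are ultimately controlled by the single relation $\cos\lambda(\xi)=\rho_0\cos\xi$.
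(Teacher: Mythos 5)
Your proposal is correct, and its first half coincides with the paper's argument: on $\T$ one has $\lambda(\xi)\in\R$, so \eqref{16} gives $\sigma_1\overline{T_{0,\lambda(\xi)}}\sigma_1=T_{0,\lambda(\xi)}$, and conjugating the eigenrelation identifies $\sigma_1\overline{\varphi_\pm(\xi)}$ as a unit eigenvector for $e^{\mp\im\xi}$, hence a unimodular multiple of $\varphi_\pm(-\xi)=\varphi_\mp(\xi)$. Where you genuinely diverge is the step $\gamma_+=\gamma_-$, which you carry out by explicit computation; the paper instead avoids all computation by applying the symmetry twice,
\begin{align*}
\varphi_+(\xi)=\gamma_{-,\xi}\,\sigma_1\overline{\gamma_{+,\xi}\sigma_1\overline{\varphi_+(\xi)}}=\gamma_{-,\xi}\,\overline{\gamma_{+,\xi}}\,\varphi_+(\xi),
\end{align*}
so that $\gamma_{-,\xi}\overline{\gamma_{+,\xi}}=1$ and unimodularity forces $\gamma_{+,\xi}=\gamma_{-,\xi}$ --- a two-line involution trick requiring neither \eqref{23.2} nor any trigonometric identity. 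Your computational route does close as you predicted: since $\cos\lambda(\xi)=\rho_0\cos\xi$ kills the real parts, $e^{\im\lambda(\xi)}-\rho_0e^{\mp\im\xi}=\im\(\sin\lambda(\xi)\pm\rho_0\sin\xi\)$, and the two component ratios agree precisely because $\sin^2\lambda(\xi)-\rho_0^2\sin^2\xi=|\alpha_0|^2$; in fact one then finds the constant value $\gamma=\im\,\overline{\alpha_0}/|\alpha_0|$, which makes the $C^\omega$ claim trivial. One caveat: this verification must be run with the unit-norm eigenvectors, i.e.\ with prefactor $A_\pm(\xi)^{-1/2}$; the prefactor $A_\pm(\xi)^{-1}$ as literally printed in \eqref{23.2} is inconsistent with $\|\varphi_\pm(\xi)\|_{\C^2}=1$ claimed in Lemma \ref{lem:6}, and with it the component ratios would give $\gamma_+\neq\gamma_-$, so you are right to anchor the argument on the unit-norm property rather than the printed constant. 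Finally, your handling of analyticity and of the degenerate points is more explicit than the paper's one-line remark: the representation $\gamma(\xi)=\varphi_\pm(-\xi)^\top\sigma_1\varphi_\pm(\xi)$ is manifestly real-analytic and extends \eqref{23.4} to $\xi=0,\pi$ by continuity; alternatively, note that at $\xi=0,\pi$ the eigenspace is still one-dimensional, since $T_{0,\lambda(\xi)}$ has nonzero off-diagonal entries $\mp\overline{\alpha_0}/\rho_0$, $\mp\alpha_0/\rho_0$ and a double eigenvalue, hence is a nontrivial Jordan block, so the parallelism argument applies there directly as well.
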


\begin{proof}
Taking the complex conjugate and multiplying $\sigma_1$ to $T_{\lambda(\xi)} \varphi_\pm(\xi)=e^{\pm \im \xi}\varphi_\pm(\xi)$, we have
\begin{align*}
\sigma_1 \overline{T_{\lambda(\xi)}}\sigma_1 \sigma_1 \overline{\varphi_\pm (\xi)}= e^{\mp \im \xi} \sigma_1 \overline{\varphi_\pm(\xi)}.
\end{align*}
Recall that for $\xi \in \T$, $\lambda(\xi)\in \R$ and $\|\varphi_\pm(\xi)\|_{\C^2}=1$.
Thus, by \eqref{16}, there exists $\gamma_{\pm,\xi}$ s.t.\ $|\gamma_{\pm,\xi}|=1$ and $\gamma_{\pm,\xi}\sigma_1\overline{\varphi_\pm(\xi)}= \varphi_\pm(-\xi)$.
The remaining task is to show $\gamma_{+,\xi}=\gamma_{-,\xi}$.
However, this is easily deduced from
\begin{align*}
\varphi_+(\xi)=\gamma_{-,\xi}\sigma_1\overline{\gamma_{+,\xi}\sigma_1\overline{\varphi_+(\xi)}}=\gamma_{-,\xi}\overline{\gamma_{+,\xi}}\varphi_{+}(\xi).
\end{align*}
The analyticity is a direct consequence of \eqref{23.4}.
\end{proof}

Let $P_\xi=\frac{1}{\sqrt{2}}\(\varphi_+(\xi)\ \varphi_-(\xi)\)$.
Then, if $P_\xi$ is invertible, it diagonalizes $T_{\lambda(\xi)}$.
\begin{align}\label{24}
P_\xi^{-1}T_{\lambda(\xi)} P_\xi=\begin{pmatrix} e^{\im \xi} & 0 \\ 0 & e^{-\im \xi} \end{pmatrix}.
\end{align}
We see that, $e^{\pm\im \xi x}\varphi_\pm(\xi)$ satisfies $e^{\pm\im \xi x}\varphi_\pm(\xi)=T_{\lambda(\xi)}  e^{\pm\im \xi (x-1)}\varphi_\pm(\xi)$.

Since we need later, we explicitly give $P_\xi$ and $P_\xi^{-1}$ for $\xi \in \T$.
\begin{align}
P_\xi=\frac{1}{\sqrt{2}}\begin{pmatrix} \overline{\alpha_0}/A_+ & \overline{\alpha_0}/A_-\\ (e^{\im \lambda(\xi)}-\rho_0 e^{\im \xi})/A_+ & (e^{\im \lambda(\xi)}-\rho_0 e^{-\im \xi})/A_-  \end{pmatrix},\label{24.1}\\
P_\xi^{-1}=\frac{1}{\im \sqrt{2}\overline{\alpha_0} \rho_0 \sin \xi} \begin{pmatrix}(e^{\im \lambda(\xi)}-\rho_0 e^{-\im \xi})/A_+ & -\overline{\alpha_0}/A_+ \\ (-e^{\im \lambda(\xi)}+\rho_0 e^{\im \xi})/A_- & \overline{\alpha_0}/A_- \end{pmatrix}.\label{24.2}
\end{align}

We set $R_0(\lambda)(x,y)$ to be the integral kernel of $R_0(\lambda)$, where
\begin{align*}
R_{0}(\lambda):=(\mathcal C_0-e^{\im \lambda})^{-1}.
\end{align*}
\begin{lemma}\label{lem:7}
Let $\xi\in \T_{\delta}$.
Then we have
\begin{align}\label{25}
R_0(\lambda(\xi))(x,y)=e^{\im \xi |x-y|}e^{-\im \lambda(\xi)}W_{0,\xi}^{-1}\(\sigma_1 Q_\xi^{\top} \sigma_1\begin{pmatrix} 1_{\leq y}(x) & 0 \\ 0 & 1_{<y}(x)\end{pmatrix}+Q_\xi \begin{pmatrix} 1_{> y}(x) & 0 \\ 0 & 1_{\geq y}(x)\end{pmatrix}\),
\end{align}
where
\begin{align*}
Q_\xi :=\varphi_+(\xi)\varphi_-(\xi)^\top\sigma_1,\quad W_{0,\xi} :=\det P_\xi=\det \(\varphi_+(\xi)\ \varphi_-(\xi)\).
\end{align*}
\end{lemma}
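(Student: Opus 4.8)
The plan is to apply Proposition \ref{prop:kernel} with the two explicit solutions $v_\pm(x) := e^{\pm \im \xi x}\varphi_\pm(\xi)$ of \eqref{eq:equiv}, which are Jost-type solutions for the unperturbed transfer matrix $T_{0,\lambda(\xi)}$. First I would verify that these genuinely solve \eqref{eq:equiv}: since $\varphi_\pm(\xi)$ are eigenvectors of $T_{0,\lambda(\xi)}$ with eigenvalues $e^{\pm\im\xi}$, the relation $e^{\pm\im\xi x}\varphi_\pm(\xi) = T_{0,\lambda(\xi)}\,e^{\pm\im\xi(x-1)}\varphi_\pm(\xi)$ recorded just after \eqref{24} is exactly \eqref{eq:equiv}. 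Taking $v_1 = v_+$ and $v_2 = v_-$, the Wronskian is $\det(v_1(x)\ v_2(x)) = \det\!\bigl(e^{\im\xi x}\varphi_+\ \ e^{-\im\xi x}\varphi_-\bigr) = \det(\varphi_+\ \varphi_-) = W_{0,\xi}$, which is $x$-independent (consistent with Proposition \ref{prop:Wronsky}) and equals $\det P_\xi \neq 0$ provided $P_\xi$ is invertible, so the hypotheses of Proposition \ref{prop:kernel} hold.

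Next I would substitute these choices into the kernel formula of Proposition \ref{prop:kernel}. Writing $v_1(x)v_2(y)^\top = e^{\im\xi x}\varphi_+(\xi)\bigl(e^{-\im\xi y}\varphi_-(\xi)\bigr)^\top = e^{\im\xi(x-y)}\varphi_+(\xi)\varphi_-(\xi)^\top$ and similarly $v_2(x)v_1(y)^\top = e^{-\im\xi(x-y)}\varphi_-(\xi)\varphi_+(\xi)^\top$, the spatial dependence collapses into the scalar factors $e^{\pm\im\xi(x-y)}$. The key observation is that the two regions are split precisely at $x = y$: in the $v_1(x)v_2(y)^\top$ term the indicator $1_{\geq y}(x)$ (resp.\ $1_{>y}(x)$) forces $x \geq y$, so $e^{\im\xi(x-y)} = e^{\im\xi|x-y|}$, while in the $v_2(x)v_1(y)^\top$ term the indicators $1_{<y}(x),\ 1_{\leq y}(x)$ force $x < y$ (resp.\ $x\le y$), giving $e^{-\im\xi(x-y)} = e^{\im\xi|x-y|}$ as well. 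Thus the common factor $e^{\im\xi|x-y|}e^{-\im\lambda(\xi)}W_{0,\xi}^{-1}$ factors out, matching the prefactor in \eqref{25}.

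It then remains to identify the two matrix structures. The term $v_1(x)v_2(y)^\top$ carries $\varphi_+\varphi_-^\top$, and I would recognize $Q_\xi = \varphi_+(\xi)\varphi_-(\xi)^\top\sigma_1$; right-multiplying the $\sigma_1$-type indicator block $\bigl(\begin{smallmatrix} 0 & 1_{\geq y}(x)\\ 1_{>y}(x) & 0\end{smallmatrix}\bigr)$ appearing in Proposition \ref{prop:kernel} reorganizes it into $Q_\xi\,\bigl(\begin{smallmatrix} 1_{>y}(x) & 0\\ 0 & 1_{\geq y}(x)\end{smallmatrix}\bigr)$ since multiplying by $\sigma_1$ swaps the columns and the off-diagonal indicator block becomes diagonal. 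For the other term, $\varphi_-\varphi_+^\top = (\varphi_+\varphi_-^\top)^\top = (Q_\xi\sigma_1)^\top = \sigma_1 Q_\xi^\top$, and an identical column-swap via $\sigma_1$ converts its indicator block into the diagonal form $\bigl(\begin{smallmatrix} 1_{\leq y}(x) & 0\\ 0 & 1_{<y}(x)\end{smallmatrix}\bigr)$, producing the $\sigma_1 Q_\xi^\top\sigma_1$ summand of \eqref{25}.

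The routine but slightly fiddly part is keeping the open/closed indicator conventions ($1_{\leq y}$ versus $1_{<y}$, etc.) aligned correctly under the transposition and the $\sigma_1$ column swaps; I expect this bookkeeping of strict versus non-strict inequalities to be the main place where a sign or boundary error could creep in, and everything else is a direct substitution into Proposition \ref{prop:kernel}.
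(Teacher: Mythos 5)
Your proposal is correct and follows exactly the route the paper takes: its proof of Lemma \ref{lem:7} is the one-line observation that \eqref{25} is a direct consequence of Proposition \ref{prop:kernel}, applied with $v_\pm(x)=e^{\pm\im\xi x}\varphi_\pm(\xi)$. Your substitution, the identification $W_\lambda=W_{0,\xi}$, and the $\sigma_1$ column-swap bookkeeping (including $\varphi_-\varphi_+^\top=\sigma_1 Q_\xi^\top$) are all accurate and simply make explicit what the paper leaves implicit.
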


\begin{proof}
The formula \eqref{25} is direct consequence of Proposition \ref{prop:kernel}.
\end{proof}

\begin{remark}
We have
\begin{align}\label{25.1}
W_{0,\xi}=\frac{\im \overline{\alpha_0} \rho_0}{\sqrt{\(1-\rho_0\cos(\lambda(\xi)+\xi)\)\(1-\rho_0\cos(\lambda(\xi)-\xi)\)}}\sin \xi.
\end{align}
In particular, $W_{0,\xi}^{-1}\sin\xi \in C^\omega(\T;\R)$.
\end{remark}

\begin{proposition}\label{prop:8}
Let $\sigma>1/2$.
Then, $\<x\>^{-\sigma}R_0(\lambda(\xi))(x,y)\<y\>^{-\sigma} \in L^2_{x,y}(\Z^2;\mathcal L(\C^2))$ for $0\leq\Im \xi\leq \delta$, where $\mathcal L(\C^2)$ is the set of $2\times 2$ matrices.
Further, 
\begin{align*}
\T_{\delta,\geq 0}\setminus\{0,\pi\} \ni \xi \mapsto \<x\>^{-\sigma}R_0(\lambda(\xi))(x,y)\<y\>^{-\sigma} \in L^2_{x,y}(\Z^2;\mathcal L(\C^2))
\end{align*}
is $L^2_{x,y}(\Z^2;\mathcal L(\C^2))$ valued continuous function in $\T_{\delta,\geq 0}\setminus\{0,\pi\}$, where $\T_{\delta,\geq 0}:=\{z\in \T_{\delta}\ |\ \Im z \geq 0\}$. 
\end{proposition}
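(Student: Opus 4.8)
The plan is to read off from the explicit kernel \eqref{25} the factorization
\begin{align*}
R_0(\lambda(\xi))(x,y)=e^{\im\xi|x-y|}\,M(\xi,x,y),
\end{align*}
where $M(\xi,x,y)$ collects the scalar $e^{-\im\lambda(\xi)}W_{0,\xi}^{-1}$, the matrices $Q_\xi=\varphi_+(\xi)\varphi_-(\xi)^\top\sigma_1$ and $\sigma_1 Q_\xi^\top\sigma_1$, and the two $\xi$-independent indicator matrices occurring in \eqref{25} (each of norm at most one). First I would observe that, away from $\xi\in\{0,\pi\}$, the matrix $M$ is uniformly bounded in $(x,y)$: by Lemma \ref{lem:6} the entries of $\varphi_\pm(\xi)$ are holomorphic on $\T_\delta$, $\lambda(\xi)$ is continuous, and by \eqref{25.1} (shrinking $\delta$ if necessary so that the denominator there stays bounded and bounded away from zero, which is possible by \eqref{23.1.1}) the only zeros of $W_{0,\xi}$ on $\T_{\delta,\ge0}$, hence the only poles of $W_{0,\xi}^{-1}$, come from the factor $\sin\xi$ in the numerator, i.e.\ they sit exactly at $\xi\in\{0,\pi\}$. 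Thus for every compact $K\subset\T_{\delta,\ge0}\setminus\{0,\pi\}$,
\begin{align*}
C_K:=\sup_{\xi\in K}\sup_{x,y\in\Z}\|M(\xi,x,y)\|_{\cL(\C^2)}<\infty,
\end{align*}
whereas $|e^{\im\xi|x-y|}|=e^{-(\Im\xi)|x-y|}\le1$ for $\Im\xi\ge0$.

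Given this, the membership claim becomes immediate. Fixing $\xi$ with $0\le\Im\xi\le\delta$ and $\xi\notin\{0,\pi\}$ and taking $K=\{\xi\}$, I would bound
\begin{align*}
\sum_{x,y\in\Z}\<x\>^{-2\sigma}\|R_0(\lambda(\xi))(x,y)\|_{\cL(\C^2)}^2\<y\>^{-2\sigma}\le C_K^2\Big(\sum_{x\in\Z}\<x\>^{-2\sigma}\Big)^2<\infty,
\end{align*}
the last sum converging because $\sigma>1/2$. This is exactly $\<x\>^{-\sigma}R_0(\lambda(\xi))(x,y)\<y\>^{-\sigma}\in L^2_{x,y}(\Z^2;\cL(\C^2))$. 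The role of the weights is essential at this step: on $\T$ the prefactor $e^{\im\xi|x-y|}$ has modulus one and supplies no off-diagonal decay, so the square-summability must be produced entirely by the weights with $\sigma>1/2$.

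For the continuity I would argue by sequences and dominated convergence. Fix $\xi_0\in\T_{\delta,\ge0}\setminus\{0,\pi\}$, choose a compact neighbourhood $K\subset\T_{\delta,\ge0}\setminus\{0,\pi\}$ of $\xi_0$, and let $\xi_n\to\xi_0$ in $K$. For each fixed $(x,y)$ the indicator matrices in \eqref{25} are constant, so $\xi\mapsto R_0(\lambda(\xi))(x,y)$ is continuous at $\xi_0$ and the weighted kernels converge pointwise in $(x,y)$. At the same time the uniform bound gives, for all $n$,
\begin{align*}
\<x\>^{-2\sigma}\|R_0(\lambda(\xi_n))(x,y)-R_0(\lambda(\xi_0))(x,y)\|_{\cL(\C^2)}^2\<y\>^{-2\sigma}\le 4C_K^2\,\<x\>^{-2\sigma}\<y\>^{-2\sigma},
\end{align*}
whose right-hand side is a fixed summable majorant. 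Dominated convergence for series then forces the $L^2_{x,y}$ norm of the difference to zero, which is the asserted continuity.

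The main obstacle, and really the only delicate point, is the interplay between the lack of decay of $e^{\im\xi|x-y|}$ for real $\xi$ and the degeneration of $W_{0,\xi}^{-1}$ at $\xi\in\{0,\pi\}$: the former forces us to use the weights with $\sigma>1/2$, and the latter forces us to excise $\{0,\pi\}$ and to work on compacta. Once these two features are isolated as above, the remainder is the elementary summability of $\<x\>^{-2\sigma}$ and the continuity of the bounded analytic data $\varphi_\pm$, $\lambda$ and $W_{0,\xi}^{-1}$ on compact subsets of $\T_{\delta,\ge0}\setminus\{0,\pi\}$.
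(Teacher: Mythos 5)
Your proposal is correct and follows essentially the same route as the paper's proof: both rest on the explicit kernel formula \eqref{25}, the uniform bound $\|\<x\>^{-\sigma}R_0(\lambda(\xi))(x,y)\<y\>^{-\sigma}\|_{\mathcal L(\C^2)}\lesssim \<x\>^{-\sigma}\<y\>^{-\sigma}$ on compact subsets of $\T_{\delta,\geq 0}\setminus\{0,\pi\}$ (using $|e^{\im\xi|x-y|}|\leq 1$ for $\Im\xi\geq 0$ and the fact that $W_{0,\xi}$ vanishes only at $\xi=0,\pi$ by \eqref{25.1}), summability of $\<x\>^{-2\sigma}$ for $\sigma>1/2$, and dominated convergence for the continuity. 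You have merely spelled out the details that the paper compresses into three lines, so there is nothing to correct.
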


\begin{proof}
By \eqref{25}, for fixed $x,y\in \Z$, $\<x\>^{-\sigma}R_0(\lambda(\xi))(x,y)\<y\>^{-\sigma}$ is continuous with respect to $\xi$ and $$\|\<x\>^{-\sigma}R_0(\lambda(\xi))(x,y)\<y\>^{-\sigma}\|_{\mathcal L(\C^2)}\lesssim_\varepsilon \<x\>^{-\sigma}\<y\>^{-\sigma},$$ for $\xi\in T_{\delta,\geq 0}\setminus (B_{\C_p}(0,\varepsilon)\cup B_{\C_p}(\pi, \varepsilon))$ where the implicit constant is independent of $x,y$.
Therefore, by Lebesgue dominated convergence theorem we have the conclusion.
\end{proof}

Proposition \ref{prop:8} is the limiting absorption principle, for the unperturbed case.

Let $P_+$ to be the spectral projection to the upper arc of $\sigma(U_0)$ and $f$ analytic in the neighborhood of $\sigma(U_0)\cap\{z\in \C\ |\ \Im z>0\}$.
Then, we have
\begin{align*}
f(U_0)P_+=-\frac{1}{2\pi \im} \int_{\Gamma_\varepsilon} f(w)(U_0-w)^{-1}\,dw,
\end{align*}
where $\Gamma_\varepsilon:=\{e^{\im \lambda(\xi)}\ |\ \xi \in \T +\im\varepsilon\}$ encircles $\sigma(U_0)\cap\{z\in \C\ |\ \Im z>0\}$ in a clockwise direction. 
For the functional calculus, see Part I, Chapter I of \cite{GohbergBookI}.

By change of variables we have
\begin{align*}
f(U_0)P_+&=-\frac{1}{2\pi} J_{\mathrm{EV}}\int_{\tilde \Gamma_\varepsilon} f(e^{\im \lambda})R_0(\lambda) e^{\im \lambda}\,d\lambda J_{\mathrm{VE}}\\&
=-\frac{1}{2\pi}J_{\mathrm{EV}}\int_{\xi\in \T +\im\varepsilon}f(e^{\im \lambda(\xi)})e^{\im \lambda(\xi)}R_0(\lambda(\xi))\lambda'(\xi)\,d\xi J_{\mathrm{VE}},
\end{align*}
where $\tilde \Gamma_\varepsilon:=\{\lambda(\xi) \ |\ \xi\in \T+\im \varepsilon\}$.
Therefore, taking the limit $\varepsilon\downarrow 0$, the kernel of $J_{\mathrm{VE}} f(U_0)P_+ J_{\mathrm{EV}}$ becomes
\begin{align}\label{25.2}
-\frac{1}{2\pi}\int_{\T }f(e^{\im \lambda(\xi)})e^{\im \xi|x-y|} g_{x,y}(\xi)\,d\xi,
\end{align}
where
\begin{align*}
g_{x,y}(\xi)=\frac{\lambda'(\xi)}{ W_{0,\xi}}\(\sigma_1{}^tQ_\xi \sigma_1\begin{pmatrix} 1_{\leq y}(x) & 0 \\ 0 & 1_{<y}(x)\end{pmatrix}+Q_\xi \begin{pmatrix} 1_{> y}(x) & 0 \\ 0 & 1_{\geq y}(x)\end{pmatrix}\).
\end{align*}

The following dispersive estimate is a direct consequence of van der Corput lemma (see section 8.1 of \cite{SteinHarmonic}).
This is also proved in \cite{ST12JFA} and \cite{MSSSS18DCDS} in the context of QWs.
\begin{proposition}\label{prop:9}
We have $\|U_0^t P_+\|_{\mathcal L(l^1,l^\infty)}\lesssim \<t\>^{-1/3}$.
\end{proposition}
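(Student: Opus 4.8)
The plan is to read off the integral kernel of $U_0^t P_+$ and estimate it by van der Corput's lemma. Taking $f(w)=w^t$ in \eqref{25.2}, the kernel of $J_{\mathrm{VE}}U_0^tP_+J_{\mathrm{EV}}$ equals $-\frac{1}{2\pi}\int_\T e^{\im t\lambda(\xi)}e^{\im \xi|x-y|}g_{x,y}(\xi)\,d\xi$. For any matrix kernel one has $\|A\|_{\mathcal L(l^1,l^\infty)}=\sup_{x,y}\|A(x,y)\|_{\mathcal L(\C^2)}$, and since $J_{\mathrm{VE}},J_{\mathrm{EV}}$ (index shifts composed with a component permutation) are bounded on $l^1$ and on $l^\infty$, it follows that $\|U_0^tP_+\|_{\mathcal L(l^1,l^\infty)}\lesssim\sup_{x,y}\|\,(J_{\mathrm{VE}}U_0^tP_+J_{\mathrm{EV}})(x,y)\,\|_{\mathcal L(\C^2)}$. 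Writing $n:=|x-y|\ge0$ and $s:=n/t$, the integral becomes $\int_\T e^{\im t(\lambda(\xi)+s\xi)}g_{x,y}(\xi)\,d\xi$, so the task reduces to bounding this by $\langle t\rangle^{-1/3}$ uniformly in $x,y$ (equivalently in $s$).

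First I would check that the amplitude is harmless. For fixed $x,y$ the indicator functions appearing in $g_{x,y}$ are constants determined by the sign of $x-y$, so $g_{x,y}$ is one of finitely many smooth matrix-valued functions times the scalar factor $\lambda'(\xi)/W_{0,\xi}$. By \eqref{25.1} the simple zeros of $W_{0,\xi}$ at $\xi=0,\pi$ coincide with the zeros of $\lambda'(\xi)=\rho_0\sin\xi/\sin\lambda(\xi)$ and cancel, while $\sin\lambda(\xi)$ stays bounded away from $0$ on $\T$ because $\lambda(\xi)=\arccos(\rho_0\cos\xi)$ remains inside $(\arccos\rho_0,\pi-\arccos\rho_0)$; hence $\lambda'/W_{0,\xi}\in C^\omega(\T)$. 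Consequently $g_{x,y}$ and its $\xi$-derivatives are bounded uniformly in $x,y$, and the problem reduces to a van der Corput estimate for the phase $\Phi_s(\xi):=\lambda(\xi)+s\xi$ against a fixed smooth amplitude, uniformly over $s\in\R$.

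The heart of the matter is the geometry of this phase. Since $\Phi_s''=\lambda''$ and $\Phi_s'''=\lambda'''$ are independent of $s$, it suffices to locate the zeros of $\lambda''$. From $\cos\lambda=\rho_0\cos\xi$ one computes $\lambda''(\xi)=\rho_0(1-\rho_0^2)\cos\xi\,(1-\rho_0^2\cos^2\xi)^{-3/2}$, which vanishes on $\T$ exactly at $\xi=\pm\pi/2$; there $\cos\xi$ has a simple zero and the denominator is nonzero, so $\lambda'''(\pm\pi/2)\ne0$. This nondegeneracy — that $\lambda$ has only nondegenerate inflection points, i.e. $\lambda''$ and $\lambda'''$ never vanish simultaneously — is precisely what yields the sharp $-1/3$ rate, and verifying it is the main point of the argument.

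Finally I would cover $\T$ by finitely many arcs and sum against a fixed smooth partition of unity, so that no boundary terms appear. On arcs bounded away from $\pm\pi/2$ we have $|\Phi_s''|=|\lambda''|\ge c_2>0$, and the second-derivative van der Corput lemma gives a bound $\lesssim|t|^{-1/2}$; on the two arcs around $\pm\pi/2$ we have $|\Phi_s'''|=|\lambda'''|\ge c_3>0$, and the third-derivative van der Corput lemma gives $\lesssim|t|^{-1/3}$. In each case the amplitude contributes only the uniformly bounded quantities $\sup_\xi\|g_{x,y}\|$ and $\int_\T\|(\chi_j g_{x,y})'\|\,d\xi$. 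Taking the worst exponent yields $\lesssim|t|^{-1/3}$ for $|t|\ge1$, while for $|t|\le1$ the kernel is trivially bounded; together these give $\|U_0^tP_+\|_{\mathcal L(l^1,l^\infty)}\lesssim\langle t\rangle^{-1/3}$.
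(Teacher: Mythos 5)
Your proposal is correct and follows essentially the same route as the paper: substitute $f(w)=w^t$ into \eqref{25.2}, check via \eqref{25.1} that $\lambda'(\xi)/W_{0,\xi}$ is analytic so $g_{x,y}$ is uniformly bounded in $C^1$, and split $\T$ into regions where $|\lambda''|\gtrsim 1$ or $|\lambda'''|\gtrsim 1$ before applying van der Corput. The only difference is cosmetic: where the paper cites \cite{MSSSS18DCDS} for $\min(|\lambda''(\xi)|,|\lambda'''(\xi)|)\gtrsim 1$, you verify it by the direct computation $\lambda''(\xi)=\rho_0(1-\rho_0^2)\cos\xi\,(1-\rho_0^2\cos^2\xi)^{-3/2}$, which is a welcome filling-in of a cited fact rather than a different argument.
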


\begin{proof}
First, recall that we have $\lambda(\xi)=\arccos\(\rho_0\cos \xi\)$ and
\begin{align}\label{25.3}
\lambda'(\xi) = \frac{|a|\sin \xi}{\sqrt{1-|a|^2\cos^2\xi}},
\end{align}
and $\min(|\lambda''(\xi)|, |\lambda'''(\xi)|)\gtrsim 1$ (see \cite{MSSSS18DCDS}).
Substituting $f(s)=s^t$ into \eqref{25.2}, we have
\begin{align*}
J_{\mathrm{VE}} f(U_0)P_+ J_{\mathrm{EV}}(x,y)=-\frac{1}{2\pi}\int_\T e^{\im t \(\lambda(\xi)+\frac{|x-y|}{t}\xi\)}g_{x,y}(\xi)\,d\xi.
\end{align*}

Now, by \eqref{25.1}, $g_{x,y}(\xi)$ is analytic w.r.t.\ $\xi$ (and in particular $C^1$) and uniformly (in $x,y$) bounded in $C^1$.
Therefore, dividing the integral in the region where $|\lambda''(\xi)|\gtrsim 1$ and $|\lambda'''(\xi)|\gtrsim 1$ and applying the van der Corput lemma, we have the conclusion.
\end{proof}

%
%

\section{Jost solutions}\label{sec:4}
In this section we seek for solutions of 
\begin{align}\label{26}
	\phi_\pm(x+1,\xi)=T_{\lambda(\xi)}(x)\phi_\pm(x,\xi)
,\quad \phi_\pm(x,\xi)-e^{\pm\im \xi x}\varphi_\pm(\xi)\to 0,\text{ as }x\to \pm \infty,
\end{align}
where $\varphi_\pm(\xi)$ are given in \eqref{23.2}.
We set $\phi_\pm(x,\xi)=e^{\pm\im \xi x}m_\pm(x,\xi)$.
Then, \eqref{26} can be rewritten as
\begin{align}\label{27}
m_{\pm}(x,\xi)=A_{\xi,\pm}m_{\pm}(x-1,\xi)+V_{\xi,\pm}(x-1)m_{\pm}(x-1,\xi),
\end{align}
where
$A_{\xi,\pm}=e^{\mp \im \xi}T_{0,\lambda(\xi)} $ and $V_{\xi,\pm}(x-1)=e^{\mp \im \xi}\(T_{\lambda(\xi)}(x)-T_{0,\lambda(\xi)} \)$.
Notice that $A_{\xi,\pm}$ have eigenvalues $1$ and $e^{\mp 2\im \xi}$.

\begin{lemma}\label{lem:9.1}
Let $\xi\in \T_{\delta_0,\geq 0}:=\{z\in \C/2\pi\Z\ |\ 0\leq \Im z <\delta_0\}$.
Then, for $\mp x\geq 0$, we have
\begin{align*}
\|A_{\xi,\pm}^x\|_{\mathcal L(\C^2)}\lesssim \min( |x|, |\sin \xi|^{-1}).
\end{align*}
Here, the implicit constant is independent of $\xi \in \T_{\delta_0,\geq 0}$.
\end{lemma}

\begin{proof}
First, if $\xi \neq 0,\pi$, we can diagonalize $A_{\xi,\pm}$ as in \eqref{24} with $P_\xi$ given by \eqref{24.1}.
Thus,
\begin{align*}
\|A_{\xi,\pm}^x \|_{\mathcal L(\C^2)}\leq \|P_\xi\|_{\mathcal L(\C^2)} \|e^{\mp \im x \xi}\begin{pmatrix} e^{\im x \xi} & 0 \\ 0 & e^{-\im x \xi} \end{pmatrix}\|_{\mathcal L(\C^2)} \|P_\xi^{-1}\|_{\mathcal L(\C^2)}\lesssim |\sin \xi|^{-1}.
\end{align*}
The factor $|\sin \xi|^{-1}$ comes from $P_\xi$, see \eqref{24.2}.

Next, we set $\tilde P_\xi:=(\varphi_+(\xi)\ \tilde \varphi(\xi))$, where $\tilde \varphi(\xi)=A_+(\xi)^{-1}\begin{pmatrix} 0 \\ -\rho_0 \end{pmatrix}$ is the solution of 
\begin{align*}
(T_{0,\lambda(\xi)}-e^{-\im \xi})\tilde \varphi(\xi)=\varphi_+(\xi).
\end{align*}
Recall $\varphi_+(\xi)$ is defined in \eqref{23.2} and $A_+(\xi)$ is defined in \eqref{23.1.1}.
Indeed,
\begin{align*}
(T_{0,\lambda(\xi)}-e^{-\im \xi})\tilde \varphi(\xi)&=
A_+(\xi)^{-1}\rho_0^{-1}\begin{pmatrix} e^{\im \lambda(\xi)}-\rho_0e^{-\im \xi} & -\overline{\alpha_0} \\ -\alpha_0 & e^{-\im \lambda(\xi)}-\rho_0 e^{-\im \xi}\end{pmatrix}\begin{pmatrix} 0 \\ -\rho_0\end{pmatrix}\\&=A_+(\xi)^{-1}\begin{pmatrix} \overline{\alpha_0} \\ -e^{-\im \lambda(\xi)}+\rho_0e^{-\im \xi}\end{pmatrix}=A_+(\xi)^{-1}\begin{pmatrix} \overline{\alpha_0} \\ e^{\im \lambda(\xi)}-\rho_0 e^{\im \xi}\end{pmatrix}=\varphi_+(\xi),
\end{align*}
where the 2nd identity in the 2nd line follows from \eqref{21} (or \eqref{21.6}).
Therefore, we have
\begin{align*}
\tilde P_\xi&=A_+(\xi)^{-1}\begin{pmatrix} \overline{\alpha_0} & 0 \\ e^{\im \lambda(\xi)}-\rho_0 e^{\im \xi} & -\rho_0\end{pmatrix},
\\
\tilde P_\xi^{-1}&=\frac{A_+(\xi)}{\overline{\alpha_0} \rho_0}\begin{pmatrix} \rho_0 & 0 \\ e^{\im \lambda(\xi)}-\rho_0 e^{\im \xi} & -\overline{\alpha_0} \end{pmatrix}~,
\end{align*}
and
\begin{align*}
\tilde P_\xi^{-1}T_{0,\lambda(\xi)}\tilde P_\xi=\begin{pmatrix} e^{\im \xi} & 1 \\ 0 & e^{-\im \xi} \end{pmatrix}.
\end{align*}
Since
\begin{align}\label{27.6}
\begin{pmatrix} e^{\im \xi} & 1 \\ 0 & e^{-\im \xi} \end{pmatrix}^x=
\begin{cases}
\begin{pmatrix} e^{\im \xi x} & \sum_{y=0}^{x-1}e^{\im \xi (x-1-2y)} \\ 0 & e^{-\im \xi x} \end{pmatrix} & x> 0,\\
\begin{pmatrix} e^{\im \xi x} & -\sum_{y=0}^{-x-1}e^{-\im \xi (x+1+2y)} \\ 0 & e^{-\im \xi x} \end{pmatrix} & x< 0,
\end{cases}
\end{align}
we have
\begin{align*}
\|A_{\xi,\pm}^x \|_{\mathcal L(\C^2)}\leq \|\tilde P_\xi\|_{\mathcal L(\C^2)} \|e^{\mp \im\xi x }\begin{pmatrix} e^{\im \xi x} & \sum_{y=0}^{x-1}e^{\im(x-1-2y)} \\ 0 & e^{-\im \xi x} \end{pmatrix}\|_{\mathcal L(\C^2)} \|\tilde P_\xi^{-1}\|_{\mathcal L(\C^2)}\lesssim |x|.
\end{align*}
Recall that $A_+(\xi)\neq 0$ for $\xi \in \T$.
\end{proof}

\begin{lemma}\label{lem:10}
Let $\alpha(\cdot)-\alpha,\ \theta \in l^{1,\sigma}$.
Then, $\|V_{\xi,\pm}\|_{l^{1,\sigma}}\lesssim \|\alpha(\cdot)-\alpha\|_{l^{1,\sigma}}+\|\theta\|_{l^{1,\sigma}}$.
In particular, the implicit constant is independent of $\xi \in \T_{\delta_0,\geq 0}$.
\end{lemma}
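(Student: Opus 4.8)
The plan is to reduce the weighted $l^{1,\sigma}$ estimate to a pointwise, uniform-in-$\xi$ bound on the matrix difference $T_{\lambda(\xi)}(x)-T_{0,\lambda(\xi)}$. Recall that $V_{\xi,\pm}(x-1)=e^{\mp\im\xi}\bigl(T_{\lambda(\xi)}(x)-T_{0,\lambda(\xi)}\bigr)$, and that $T_{0,\lambda(\xi)}$ is precisely the transfer matrix \eqref{def:tla} evaluated at the constant coin $\alpha(\cdot)\equiv\alpha_0$, $\theta(\cdot)\equiv0$. For $\xi\in\T_{\delta_0,\ge0}$ we have $|e^{\mp\im\xi}|=e^{\pm\Im\xi}\le e^{\delta_0}$, so the scalar prefactor is uniformly bounded. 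Writing $\|V_{\xi,\pm}\|_{l^{1,\sigma}}=\sum_x\langle x\rangle^\sigma\|V_{\xi,\pm}(x)\|_{\mathcal L(\C^2)}$ and shifting the index ($\langle x-1\rangle^\sigma\lesssim_\sigma\langle x\rangle^\sigma$ for $\sigma\ge0$), it is enough to prove
\begin{equation*}
\|T_{\lambda(\xi)}(x)-T_{0,\lambda(\xi)}\|_{\mathcal L(\C^2)}\lesssim |\alpha(x)-\alpha_0|+|\theta(x)|
\end{equation*}
with implicit constant independent of $\xi$; multiplying by $\langle x\rangle^\sigma$ and summing then yields $\|V_{\xi,\pm}\|_{l^{1,\sigma}}\lesssim\|\alpha(\cdot)-\alpha_0\|_{l^{1,\sigma}}+\|\theta\|_{l^{1,\sigma}}$.

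Before estimating the entries I would record the two facts that make the constant $\xi$-independent. The only place $\xi$ enters $T_{\lambda(\xi)}(x)-T_{0,\lambda(\xi)}$ is through the diagonal factors $e^{\pm\im\lambda(\xi)}$ (the off-diagonal entries $\rho(x)^{-1}\alpha(x)-\rho_0^{-1}\alpha_0$ and its conjugate carry no $\lambda$), so I need $\sup_{\xi\in\T_{\delta_0,\ge0}}|e^{\pm\im\lambda(\xi)}|<\infty$. This holds because $\lambda=\cos^{-1}_-(\rho_0\cos\xi)$ by \eqref{21.6}, and $\cos^{-1}_-$ from Lemma \ref{lem:4.2} extends continuously to the branch points $\pm1$ (with values $0$ and $\pi$); hence $\lambda(\cdot)$ extends continuously to the compact closure $\overline{\T_{\delta_0,\ge0}}$ and is bounded there. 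The second fact is $\rho_{\min}:=\inf_{x\in\Z}\rho(x)>0$: indeed $\rho(x)=\sqrt{1-|\alpha(x)|^2}>0$ for every $x$ by Assumption \ref{ass:1}, while $\alpha(\cdot)-\alpha_0\in l^{1,\sigma}\subset l^\infty$ forces $\rho(x)\to\rho_0>0$, and a convergent sequence of positive numbers with positive limit is bounded away from $0$.

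With these in hand the entrywise estimates are routine perturbative computations. For the diagonal I would factor $\rho(x)^{-1}e^{\mp\im\theta(x)}-\rho_0^{-1}=\rho(x)^{-1}(e^{\mp\im\theta(x)}-1)+(\rho(x)^{-1}-\rho_0^{-1})$ and use $|e^{\mp\im\theta(x)}-1|\le|\theta(x)|$, $\rho(x)^{-1}\le\rho_{\min}^{-1}$, together with
\begin{equation*}
|\rho(x)^{-1}-\rho_0^{-1}|=\frac{\bigl||\alpha(x)|^2-|\alpha_0|^2\bigr|}{\rho(x)\rho_0(\rho(x)+\rho_0)}\le\frac{2}{\rho_{\min}\rho_0^2}\,|\alpha(x)-\alpha_0|,
\end{equation*}
where the last step uses $\bigl||\alpha(x)|^2-|\alpha_0|^2\bigr|\le2|\alpha(x)-\alpha_0|$; multiplying by the bounded factor $|e^{\pm\im\lambda(\xi)}|$ keeps the bound uniform in $\xi$. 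For the off-diagonal I would write $\rho(x)^{-1}\alpha(x)-\rho_0^{-1}\alpha_0=\rho(x)^{-1}(\alpha(x)-\alpha_0)+(\rho(x)^{-1}-\rho_0^{-1})\alpha_0$ and bound it by $\lesssim|\alpha(x)-\alpha_0|$ in the same manner (and likewise its conjugate). Since the operator norm of a $2\times2$ matrix is controlled by its largest entry, these four estimates combine into the displayed pointwise bound. I expect the one genuinely delicate point to be the uniform boundedness of $e^{\pm\im\lambda(\xi)}$ near the four edge values of $\xi$, where $\rho_0\cos\xi\to\pm1$ as $\xi$ simultaneously approaches the top of the strip; making the continuity-up-to-the-branch-points argument precise is the main thing to get right, everything else being a direct computation.
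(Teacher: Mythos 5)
Your proof is correct and follows essentially the same route as the paper: a direct, entrywise perturbative bound on $e^{\mp\im\xi}\bigl(T_{\lambda(\xi)}(x)-T_{0,\lambda(\xi)}\bigr)$, using $|e^{\im\theta(x)}-1|\le|\theta(x)|$, $|\rho(x)^{-1}-\rho_0^{-1}|\lesssim|\alpha(x)-\alpha_0|$, and the uniform boundedness of the exponential factors on $\T_{\delta_0,\geq 0}$, followed by summation against $\<x\>^\sigma$. If anything, you are more explicit than the paper on the two points it leaves implicit — the uniform bound on $|e^{\pm\im\lambda(\xi)}|$ up to the branch points (which also follows quickly from $|\sinh\Im\lambda(\xi)|\le|\rho_0\cos\xi|\le\rho_0\cosh\delta_0=1$) and $\inf_x\rho(x)>0$ — so no gap remains.
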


\begin{proof}
By the definition of $V_{\xi,\pm}$, we have
\begin{align*}
\|V_{\xi,\pm}(x)\|_{\mathcal L(\C^2)}&\leq e^{|\Im \xi|} \rho_0^{-1}|\rho(x)-\rho_0| \|T_{\lambda(\xi)} (x)\|_{\mathcal L(\C^2)}\\&\quad +e^{|\Im \xi|} \rho_0^{-1}\left\|\begin{pmatrix} e^{\im \lambda(\xi)}(e^{-\im\theta(x)}-1) & -\overline{(\alpha(x)-\alpha)}\\ - (\alpha(x)-\alpha) & e^{-\im \lambda(\xi)}(e^{\im \theta(x)}-1) \end{pmatrix}\right\|_{\mathcal L(\C^2)}\\&\lesssim |\alpha(x)-\alpha|+|\theta(x)|.
\end{align*}
Notice that $|\Im \xi|$ is bounded for $\xi \in \T_{\delta_0,\geq 0} $.
Therefore, we have the conclusion.
\end{proof}

We use the discrete Duhamel formula.
That is, $u(t)=Au(t-1)+f(t-1)$ is equivalent to $u(t)=A^{t-t_0}u(t_0)+\sum_{s=t_0}^{t-1}A^{t-s-1}f(s)$.
\begin{lemma}\label{lem:11}
Let $\xi \in \T_{\delta_0,\geq 0} $.
Then, $m_{+}(\cdot,\xi)$ satisfies \eqref{27} with $m_{+}(x,\xi)\to \varphi_+(\xi)$ as $x\to \infty$ if and only if
\begin{align}\label{28}
m_{+}(x,\xi)=\varphi_{+}(\xi)-\sum_{y=x}^\infty A_{\xi ,+}^{-(y+1-x)} V_{\xi ,+}(y)m_{+}(y,\xi).
\end{align}
Similarly, $m_-(x,\xi)$ satisfies \eqref{27} with $m_-(x,\xi)\to \varphi_-(\xi)$ as $x\to -\infty$ if and only if
\begin{align}\label{29}
m_-(x,\xi)=\varphi_-(\xi)+\sum_{y=-\infty}^{x-1}A_{\xi ,-}^{x-y-1}V_{\xi ,-}(y)m_-(y,\xi).
\end{align}
\end{lemma}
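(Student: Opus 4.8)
The plan is to read the second-order-in-structure recursion \eqref{27} as the inhomogeneous linear recursion
\[
m_\pm(x,\xi)=A_{\xi,\pm}m_\pm(x-1,\xi)+f_\pm(x-1),\qquad f_\pm(y):=V_{\xi,\pm}(y)m_\pm(y,\xi),
\]
and to solve it by the discrete Duhamel formula quoted just before the statement. The one structural fact that drives everything is that $\varphi_+(\xi)$ (resp.\ $\varphi_-(\xi)$) is a fixed point of $A_{\xi,+}$ (resp.\ $A_{\xi,-}$): since $\varphi_\pm(\xi)$ from \eqref{23.2} is an eigenvector of $T_{0,\lambda(\xi)}$ for the eigenvalue $e^{\pm\im\xi}$ and $A_{\xi,\pm}=e^{\mp\im\xi}T_{0,\lambda(\xi)}$, we get $A_{\xi,+}\varphi_+(\xi)=\varphi_+(\xi)$ and $A_{\xi,-}\varphi_-(\xi)=\varphi_-(\xi)$, while on the complementary eigendirection $A_{\xi,+}$ acts by $e^{-2\im\xi}$ and $A_{\xi,-}$ by $e^{2\im\xi}$. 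For $\Im\xi\ge0$ this is exactly the sign needed so that the relevant (inverse) powers of $A_{\xi,\pm}$ are controlled by Lemma \ref{lem:9.1}. I will carry out the argument for $m_+$; the statement for $m_-$ is identical after the reflection $x\mapsto -x$.

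For the ``if'' direction I would substitute \eqref{28} directly into the right-hand side of \eqref{27}. Using $A_{\xi,+}\varphi_+(\xi)=\varphi_+(\xi)$ and the reindexing $A_{\xi,+}\cdot A_{\xi,+}^{-(y+1-(x-1))}=A_{\xi,+}^{-(y+1-x)}$, one finds
\[
A_{\xi,+}m_+(x-1,\xi)=\varphi_+(\xi)-\sum_{y=x-1}^\infty A_{\xi,+}^{-(y+1-x)}V_{\xi,+}(y)m_+(y,\xi).
\]
Splitting off the $y=x-1$ term, whose coefficient is $A_{\xi,+}^{0}=\sigma_0$, and recognizing the remaining tail as $\varphi_+(\xi)-m_+(x,\xi)$ by \eqref{28} again, gives $A_{\xi,+}m_+(x-1,\xi)=m_+(x,\xi)-V_{\xi,+}(x-1)m_+(x-1,\xi)$, which is \eqref{27}. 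The boundary condition $m_+(x,\xi)\to\varphi_+(\xi)$ then follows because the tail $\sum_{y=x}^\infty$ of the series in \eqref{28} tends to $0$ as $x\to\infty$; its convergence is guaranteed by the uniform bound on $\|A_{\xi,+}^{-(y+1-x)}\|$ from Lemma \ref{lem:9.1} together with $V_{\xi,+}\in l^{1}$ from Lemma \ref{lem:10}.

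For the ``only if'' direction I would iterate \eqref{27} from $x$ up to an arbitrary $N>x$ and then apply $A_{\xi,+}^{-(N-x)}$ to solve for $m_+(x,\xi)$, obtaining
\[
m_+(x,\xi)=A_{\xi,+}^{-(N-x)}m_+(N,\xi)-\sum_{y=x}^{N-1}A_{\xi,+}^{-(y+1-x)}V_{\xi,+}(y)m_+(y,\xi),
\]
and let $N\to\infty$; the series converges to the one in \eqref{28} by the same Lemma \ref{lem:9.1} and Lemma \ref{lem:10} estimates. The crux, and the step I expect to be the main obstacle, is the boundary term: I must show $A_{\xi,+}^{-(N-x)}m_+(N,\xi)\to\varphi_+(\xi)$. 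Writing $m_+(N,\xi)=\varphi_+(\xi)+(m_+(N,\xi)-\varphi_+(\xi))$ and using $A_{\xi,+}^{-(N-x)}\varphi_+(\xi)=\varphi_+(\xi)$, it remains to kill $A_{\xi,+}^{-(N-x)}(m_+(N,\xi)-\varphi_+(\xi))$; away from the edge (i.e.\ $\sin\xi\neq0$) this is immediate from the uniform bound $\|A_{\xi,+}^{-(N-x)}\|\lesssim|\sin\xi|^{-1}$ of Lemma \ref{lem:9.1} combined with $m_+(N,\xi)-\varphi_+(\xi)\to0$. The delicate case is the edge $\xi=0,\pi$, where $A_{\xi,+}$ is only triangularizable and $\|A_{\xi,+}^{-(N-x)}\|$ may grow linearly; there I would pass to the explicit triangular form \eqref{27.5}--\eqref{27.6}, resolve $m_+(N,\xi)-\varphi_+(\xi)$ into its $\varphi_+(\xi)$ and generalized-eigenvector components, and check that the linearly growing Jordan block acts only on the component that the decay of the source (quantified by the weighted $l^{1,k}$ hypotheses) forces to vanish fast enough in the limit.
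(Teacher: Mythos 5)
Your proposal is correct and follows essentially the same route as the paper's own proof: the paper likewise treats \eqref{27} as an inhomogeneous recursion for $\psi_{\xi,\pm}=m_\pm-\varphi_\pm$ (relying on exactly your fixed-point identity $A_{\xi,\pm}\varphi_\pm(\xi)=\varphi_\pm(\xi)$), writes the discrete Duhamel identity \eqref{30}, and obtains \eqref{28} by multiplying by $A_{\xi,+}^{-(x_1-x_2)}$ and letting $x_1\to\infty$, which is precisely your $N\to\infty$ step. The only substantive difference is that you make explicit the edge case $\xi\in\{0,\pi\}$, where $\|A_{\xi,\pm}^{\mp n}\|_{\mathcal L(\C^2)}$ grows linearly by Lemma \ref{lem:9.1}, so that both the convergence of the series in \eqref{28} and the vanishing of the boundary term $A_{\xi,+}^{-(N-x)}\bigl(m_+(N,\xi)-\varphi_+(\xi)\bigr)$ require the $l^{1,1}$ hypothesis rather than $l^1$ (the generalized-eigenvector component of $m_+(N,\xi)-\varphi_+(\xi)$ is $o(1/N)$ precisely because of that weight, so your Jordan-block argument does close) --- a point the paper's proof passes over without comment.
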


\begin{proof}
It is obvious that if $m_+$ (resp.\ $m_-$) satisfies \eqref{28} (\eqref{29}), then $m_+$ ($m_-$) satisfies \eqref{27} with $m_{+}(x,\xi)\to \varphi_+(\xi)$ as $x\to \infty$ ($m_{-}(x,\xi)\to \varphi_-(\xi)$ as $x\to - \infty$).

Next, let $m_\pm$ satisfying \eqref{27} and $m_{\pm}(x,\xi)\to \varphi_\pm(\xi)$ as $x\to \pm\infty$.
Set $\psi_{\xi,\pm}(x)=m_\pm(x,\xi)-\varphi_\pm(\xi)$.
Then, for $x_1>x_2$, we have
\begin{align}\label{30}
\psi_{\xi ,\pm}(x_1)=A_{\xi ,\pm}^{x_1-x_2}\psi_{\xi ,\pm}(x_2)+\sum_{y=x_2}^{x_1-1}A_{\xi ,\pm}^{x_1-y-1}V_{\xi ,\pm}(y)\(\varphi_{\xi ,\pm}+\psi_{\xi,\pm}(y)\).
\end{align}
Notice that \eqref{30} is equivalent to \eqref{27}.
Setting $x_1=x$, $x_2\to -\infty$ and adding $\varphi_-(\xi)$ to both sides, we have \eqref{29}.
To obtain \eqref{28}, we multiply $A_{\xi ,+}^{-(x_1-x_2)}$ to \eqref{30} and set $x_2=x$, $x_1\to \infty$ and finally add $\varphi_+(\xi)$ to both sides.
\end{proof}

\begin{proposition}\label{prop:12}
If $\alpha(\cdot)-\alpha, \theta(\cdot)\in l^1$, there exist solutions of \eqref{28} and  \eqref{29} for $\xi\in \T_{\delta_0,\geq 0}\setminus\{0,\pi\}$ and we have
\begin{align}\label{31.0}
\|m_\pm(x,\xi)\|_{\C^2}\leq C(|\sin \xi|^{-1}),
\end{align}
where $C(s)$ is an increasing function of $s$.
If $\alpha(\cdot)-\alpha, \theta(\cdot)_0\in l^{1,1}$, there exist solutions of \eqref{28} and \eqref{29} for $\xi\in \T_{\delta_0,\geq 0}$.
Further, we have
\begin{align}\label{32}
\|m_\pm (x,\xi)\|_{\C^2}\lesssim \max(1,\mp x),
\end{align}
where the implicit constant is independent of $\xi\in \T_{\delta_0,\geq 0}$.
\end{proposition}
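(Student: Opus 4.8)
The plan is to treat the summation equations \eqref{28} and \eqref{29} as fixed-point equations for $m_\pm(\cdot,\xi)$ in a space of bounded (or at most linearly growing) $\C^2$-valued sequences, and to solve them by successive approximation. The three ingredients are already in place: Lemma \ref{lem:11} identifies \eqref{28}, \eqref{29} with \eqref{27} together with the correct boundary behaviour, Lemma \ref{lem:9.1} controls the propagator via $\|A_{\xi,+}^{-(y+1-x)}\|\lesssim\min(y+1-x,|\sin\xi|^{-1})$, and Lemma \ref{lem:10} gives $\|V_{\xi,\pm}\|_{l^{1,\sigma}}\lesssim\|\alpha(\cdot)-\alpha_0\|_{l^{1,\sigma}}+\|\theta\|_{l^{1,\sigma}}$ uniformly in $\xi$. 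One subtlety must be dealt with first: the summand in \eqref{28} at $y=x$ already involves $m_+(x,\xi)$, so \eqref{28} is not purely Volterra. I would isolate this diagonal term and observe that $I+A_{\xi,+}^{-1}V_{\xi,+}(x)=T_{0,\lambda(\xi)}^{-1}T_{\lambda(\xi)}(x+1)$, which has determinant $1$ and, since $\inf_x\rho(x)>0$ (a consequence of $\alpha(x)\to\alpha_0$ with $|\alpha_0|<1$, so only finitely many $\rho(x)$ can be small), has uniformly bounded inverse. Multiplying through by this inverse recasts \eqref{28} as a strictly Volterra equation $m_+=g_++\mathcal K_+m_+$ with the sum running over $y>x$ only, and symmetrically for \eqref{29}.

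For the first assertion ($l^1$ perturbation, $\xi\in\T_{\delta_0,\geq 0}\setminus\{0,\pi\}$) I would run the Neumann series $m_+=\sum_{n\geq 0}\mathcal K_+^n g_+$. Because the iterated kernel is now strictly ordered, $x<y_1<\cdots<y_n$, the $n$-th term is bounded by $(C|\sin\xi|^{-1})^n$ times $\sum_{x<y_1<\cdots<y_n}\prod_j\|V_{\xi,+}(y_j)\|\leq\frac{1}{n!}\|V_{\xi,+}\|_{l^1}^n$, where I used $\|A_{\xi,+}^{-(y+1-x)}\|\lesssim|\sin\xi|^{-1}$ and the uniform boundedness of the prefactor $T_{\lambda(\xi)}(x+1)^{-1}T_{0,\lambda(\xi)}$. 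Summing the series gives absolute convergence and the bound $\|m_+(x,\xi)\|_{\C^2}\leq\|g_+\|_{l^\infty}\exp\!\big(C|\sin\xi|^{-1}\|V_{\xi,+}\|_{l^1}\big)$, which is of the form \eqref{31.0} with an increasing function of $|\sin\xi|^{-1}$; the same argument applied to \eqref{29} treats $m_-$.

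The second assertion ($l^{1,1}$ perturbation, all $\xi$ including $0,\pi$) is where the genuine difficulty lies, since at $\xi=0,\pi$ the factor $|\sin\xi|^{-1}$ blows up and $T_{0,\lambda(\xi)}$ ceases to be diagonalizable. Here I would discard the diagonalization entirely and use instead the linear propagator bound $\|A_{\xi,+}^{-(y+1-x)}\|\lesssim y+1-x$ coming from the triangular normal form \eqref{27.5} in Lemma \ref{lem:9.1}. For $x\geq 0$ and $y\geq x$ one has $y+1-y_{\mathrm{prev}}\lesssim\langle y\rangle$ along the ordered chain, so each propagator factor is absorbed by one power of the weight and the strictly ordered sum is controlled by $\|V_{\xi,+}\|_{l^{1,1}}$ in place of $\|V_{\xi,+}\|_{l^1}$; the Neumann series then converges with an $\xi$-independent bound, giving $\|m_+(x,\xi)\|_{\C^2}\lesssim 1$ on the side where $m_+\to\varphi_+$. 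On the complementary side the gap obeys $y+1-x\lesssim\langle y\rangle+|x|$, and carrying the extra $|x|$ through the same estimate produces the at-most-linear growth recorded in \eqref{32}; since all constants depend only on $\|V_{\xi,\pm}\|_{l^{1,1}}$, the solutions extend continuously up to $\xi=0,\pi$. The symmetry \eqref{16} (equivalently Lemma \ref{lem:6.3}) then lets me deduce the statements for $m_-$ from those for $m_+$.

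I expect the truly delicate point to be the uniformity up to the band edges $\xi=0,\pi$: one must verify that the diagonal isolation, the invertibility of $T_{0,\lambda(\xi)}^{-1}T_{\lambda(\xi)}(x+1)$, and the linear propagator bound from \eqref{27.5} all survive with constants independent of $\xi$, so that the fixed point persists precisely at the points where the unperturbed diagonalizing construction degenerates.
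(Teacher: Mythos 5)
Your proposal is correct in substance but reaches the conclusion by a genuinely different mechanism than the paper. The paper solves \eqref{28} by a contraction mapping, but only on a far half-line $\Z_{\geq x_{+,\xi}}$ chosen so that the tail of $\|V_{\xi,+}(\cdot)\|_{\mathcal L(\C^2)}$ is small compared with $|\sin\xi|$ (in the $l^{1,1}$ case with weight $|x|$ and with $x_\pm$ independent of $\xi$); it then extends the solution leftwards in finitely many steps by the recursion \eqref{27}, obtaining the crude bound \eqref{35.0}, and closes the estimate in the far negative region by an absorbing argument applied to the Duhamel identity \eqref{35.1}, dividing by $\langle x\rangle$ to get \eqref{32}. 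You instead solve the equation globally by a Neumann series, extracting the $1/n!$ simplex gain from the strictly ordered iterated kernels --- the classical Deift--Trubowitz-style treatment of Jost solutions --- so no tail-smallness, no $\xi$-dependent base point, and no separate extension step are needed, and you even get the explicit bound $\exp\bigl(C|\sin\xi|^{-1}\|V_{\xi,+}\|_{l^1}\bigr)$ realizing \eqref{31.0}. Your observation that \eqref{28} is not strictly Volterra because of the $y=x$ term is a real issue \emph{for your route} (a repeated diagonal index would destroy the factorial gain), and your fix is correct: indeed $I+A_{\xi,+}^{-1}V_{\xi,+}(x)=T_{0,\lambda(\xi)}^{-1}T_{\lambda(\xi)}(x+1)$, which has determinant $1$ and uniformly bounded inverse since $\inf_x\rho(x)>0$; note the paper's contraction needs no such step, as the diagonal term is simply absorbed into the contraction estimate. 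At the band edges both arguments rest on the same quantitative input, the linear propagator bound from the triangularization \eqref{27.5} in Lemma \ref{lem:9.1} paired with the $l^{1,1}$ weight of Lemma \ref{lem:10}; in your weighted simplex estimate you should make the telescoping explicit, e.g.\ $(1+y_j-y_{j-1})\leq(1+y_j^{+})(1+y_{j-1}^{-})$ with $y^{\pm}:=\max(\pm y,0)$, which is what guarantees a single factor $\langle x\rangle$ (rather than $\langle x\rangle^{n}$) and hence exactly \eqref{32} (whose $\min$ is evidently a typo for $\max$, as \eqref{Abound} confirms). Two small blemishes: \eqref{29} needs no diagonal isolation at all, since its sum stops at $y=x-1$ and is already strictly Volterra; and your closing appeal to the symmetry \eqref{16}/Lemma \ref{lem:6.3} to deduce $m_-$ from $m_+$ does not work --- that symmetry sends $\xi\mapsto-\xi$ while preserving the $\pm$ label (cf.\ \eqref{43}), so it relates solutions normalized at the \emph{same} spatial infinity --- but this is harmless, because the mirrored Volterra argument you already invoked for \eqref{29}, with $A_{\xi,-}^{x-y-1}$ controlled by Lemma \ref{lem:9.1}, yields $m_-$ directly.
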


\begin{proof}
We will only prove the Proposition for $m_+$.
First, we assume $\alpha(\cdot)-\alpha_0, \theta(\cdot)\in l^1$.
Then, from Lemma \ref{lem:10} we have $V_{\xi,+} \in l^1$.
Now, take $x_{\pm,\xi} \in \N$ so that we have 
\begin{align}\label{32.1}
\sum_{ \pm x\geq x_{\pm,\xi}} \|V_{\xi,+}(x)\|_{\mathcal L(\C^2)}\ll |\sin \xi|.
\end{align}
Set
\begin{align*}
\Phi_{\xi,+}(m)(x):=\varphi_{+}(\xi)-\sum_{y=x}^\infty A_{\xi ,+}^{-(y+1-x)} V_{\xi ,+}(y)m(y).
\end{align*}
We show $\Phi_{\xi,+}$ is a contraction mapping in $B_1:=B_{l^\infty(\Z_{\geq x_{+,\xi}};\C^2)}\(2\|\varphi_+(\xi)\|_{\C^2}\)$.
Indeed, we have
\begin{align*}
\|\Phi_{\xi,+}(0)\|_{l^\infty(\Z_{\geq x_{+,\xi}};\C^2)}=\|\varphi_+(\xi)\|_{\C^2}.
\end{align*}
Further, by Lemmas \ref{lem:9.1} and \eqref{32.1}, we have
\begin{align*}
&\|\Phi_{\xi,+}(m_1)-\Phi_{\xi,+}(m_2)\|_{l^\infty(\Z_{\geq x_{+,\xi}};\C^2)}\\&\leq \sup_{x\geq x_{+,\xi}}\sum_{y=x}^\infty \|A_{\xi ,+}^{-(y+1-x)} \|_{\mathcal L(\C^2)} \|V_{\xi ,+}(y)\|_{\mathcal L(\C^2)} \|m_1-m_2\|_{l^\infty(\Z_{\geq x_{+,\xi}};\C^2)}\nonumber\\&
\leq   \frac 1 2  \|m_1-m_2\|_{l^\infty(\Z_{\geq x_{+,\xi}};\C^2)}.
\nonumber
\end{align*}
Thus, we see that $\Phi_{\xi,+}$ is a contraction mapping in $B_1$.
Next, we extend $m_+(x,\xi)$ defined for $x\geq x_{+,\xi}$ to $x\geq -x_{-,\xi}$ by \eqref{27}.
Then, since $m_+(x_{+,\xi},\xi)=e^{-\im \xi(x_{+,\xi}-x)}T_{\lambda(\xi)}(x_{+,\xi})\cdots T_{\lambda(\xi)}(x+1)m_+(x,\xi)$, by Lemma \ref{lem:9.1}, we have a crude bound
\begin{align}
\sup_{x\geq -x_{-,\xi}} \|m_+(x,\xi)\|_{\C^2}&\leq 2 \(\max_{-x_{-,\xi}\leq x\leq x_{+,\xi}}\|T_{\lambda(\xi)}(x)^{-1}\|_{\mathcal L(\C^2)}\)^{x_{+,\xi}+x_{-,\xi}}\|\varphi_+(\xi)\|_{\C^2}\nonumber\\
&\leq \tilde C(|\sin \xi|^{-1}),\label{35.0}
\end{align}
where $\tilde C(s)$ is a increasing function of $s$.
By \eqref{30}, for $x<-x_{-,\xi}$ we have
\begin{align}\label{35.1}
m_+(x,\xi)=\varphi_+(\xi)+A_{\xi,+}^{x_{-,\xi}+x}(m_+(x_{-,\xi},\xi)-\varphi_+(\xi))-\sum_{y=x}^{-x_{-,\xi} -1} A_{\xi,+}^{x-y-1} V_{\xi,+}(y) m_{+}(y,\xi).
\end{align}
By Lemma \ref{lem:9.1} and \eqref{35.0}, we have
\begin{align*}
\|m_+(x,\xi)\|_{\C^2}\leq &\|\varphi_+(\xi)\|_{\C^2}+C|\sin \xi|^{-1}(\tilde C(|\sin \xi|^{-1})+\|\varphi_+(\xi)\|_{\C^2})\\&+C\sup_{x\leq y<x_{-,\xi}} \|m_{\xi}(y,\xi)\|_{\C^2}\sum_{y=x}^{-x_{-,\xi} -1} |\sin \xi|^{-1} \|V_{\xi,+}(y)\|_{\mathcal L(\C^2)} .
\end{align*}
By \eqref{32.1}, we have we have $C\sum_{y=x}^{-x_{-,\xi} -1} |\sin \xi|^{-1} \|V_{\xi,+}(y)\|_{\mathcal L(\C^2)} \leq \frac 1 2 $.
Therefore, we have for any $x_0<x_{-,\xi}$
\begin{align*}
\sup_{x_0\leq x<x_{-,\xi}}\|m_+(x,\xi)\|_{\C^2}\leq & C\(|\sin \xi|^{-1}+1\)\|\varphi_+(\xi)\|_{\C^2}+C|\sin \xi|^{-1}\tilde C(\sin \xi) \\&+ \frac 1 2 \sup_{x_0\leq x<x_{-,\xi}}\|m_+(x,\xi)\|_{\C^2}.
\end{align*}
Therefore, combined with  \eqref{35.0}, we have \eqref{31.0}.

We next assume $\alpha(\cdot)-\alpha_0\in l^{1,1}$.
Then we have $V_{\xi,+}\in l^{1,1}$.
Take $x_\pm>0$ (independent of $\xi$) so that $\sum_{\pm x\geq x_\pm} |x|\|V_{\xi,+}(x)\|_{\mathcal L(\C^2)}\ll1$.
Then, we can show $\Phi_{\xi,+}$ is a contraction mapping in $B_2=B_{l^\infty(\Z_{\geq x_+};\C^2)}\(2\|\varphi_{+,\xi}\|_{\C^2}\)$.
Indeed,
\begin{align*}
&\|\Phi_{\xi,+}(m_1)-\Phi_{\xi,+}(m_2)\|_{l^\infty(\Z_{\geq x_{+}};\C^2)}\\&\lesssim \sup_{x\geq x_{+,\xi}}\sum_{y=x}^\infty \<y-x+1\> \|V_{\xi ,+}(y)\|_{\mathcal L(\C^2)} \|m_1-m_2\|_{l^\infty(\Z_{\geq x_{+}};\C^2)}\nonumber\\&
\lesssim \|V\|_{l^{1,1}_{x_+}}\sup_{x\geq x_{+,\xi}}\|m_1-m_2\|_{l^\infty(\Z_{\geq x_{+}};\C^2)}
\leq   \frac 1 2  \|m_1-m_2\|_{l^\infty(\Z_{\geq x_{+,\xi}};\C^2)}.
\end{align*}
Next, we extend $m_+(x,\xi)$ to $x=-x_-$ by \eqref{27}.
Then, by \eqref{35.0} we have $\sup_{x\geq x_-}\|m_+(\cdot,\xi)\|_{l^\infty}\leq C$ where $C$ is independent of $\xi$ because $x_\pm$ are now independent of $\xi$ and $\|T_{\lambda(\xi)}(x)^{-1}\|_{\mathcal L(\C^2)}$ is uniformly bounded.
For $x\leq -x_-$ we have \eqref{35.1} with $x_{-,\xi}$ replaced by $x_-$.
Thus, dividing by $\<x\>$, we have
\begin{align*}
\sup_{x\geq x_0}\frac{\|m_+(x,\xi)\|_{\C^2}}{\<x\>}\lesssim 1 + \|m_+(-x_-,\xi)\|_{\C^2}+\frac 1 2 \sup_{x\geq x_0}\frac{\|m_+(x,\xi)\|_{\C^2}}{\<x\>}.
\end{align*}
Therefore, we obtain the conclusion.
\end{proof}

We now introduce the Wiener algebra $\mathcal A$.
For $u \in L^1(\T)$, we set
\begin{align*}
\|u\|_{\mathcal A}:=\sum_{n\in \Z}| \hat u(n)|,\quad \hat u(n):=\frac{1}{2\pi}\int_\T  u(\xi)e^{-\im n \xi}\,d\xi,
\end{align*}
and $\mathcal A:=\{ u\in L^1(\T)\ |\ \|u\|_{\mathcal A}<\infty\}$.
We further set $\mathcal A^1\subset \mathcal A$ by the norm $\|u\|_{\mathcal A^1}:=\|\hat u\|_{l^{1,1}}$.

\begin{lemma}\label{lem:18}
Let $\alpha(\cdot)-\alpha_0, \theta(\cdot)\in l^{1,\sigma}$, $\sigma=1,2$.
Then, for $0\leq \delta<\delta_0$ and $\mp x\geq 0$, we have
\begin{align}\label{WienerEst1}
\|A_{\cdot+\im \delta,\pm}^x\|_{\mathcal A^{\sigma-1}}\lesssim \<x\>^\sigma,\quad \sum_{y\in \Z}\<x\>^\sigma\|V_{\cdot+\im \delta,\pm}(x)\|_{\mathcal A^{\sigma-1}}<\infty,
\end{align}
where we have set $\mathcal A^0:=\mathcal A$.
\end{lemma}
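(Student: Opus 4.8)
I would first isolate two structural facts about the Wiener algebras and then feed the explicit formulas of Section~\ref{sec:4} into them. First, $\mathcal A$ is a Banach algebra, and since $\<n\>\le\<m\>+\<n-m\>$ one has the product estimate $\|fg\|_{\mathcal A^1}\lesssim\|f\|_{\mathcal A^1}\|g\|_{\mathcal A}+\|f\|_{\mathcal A}\|g\|_{\mathcal A^1}$; in both cases $s=0,1$ the norm $\|\cdot\|_{\mathcal A^s}$ is submultiplicative up to a constant on $2\times2$ matrix-valued symbols. Second, if $f$ extends holomorphically to the strip $\T_{\delta_0}$ and is bounded on its interior substrips, then $|\hat f(n)|\lesssim_{\delta'} e^{-\delta'|n|}$ for every $\delta'<\delta_0$; consequently $f(\cdot+\im\delta)$, whose $n$th Fourier coefficient equals $e^{-n\delta}\hat f(n)$, belongs to $\mathcal A^s$ for every $s$ and every $0\le\delta<\delta_0$ (choose $\delta<\delta'<\delta_0$), with a bound depending only on $\delta,s$ and a sup over a slightly larger strip. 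I will apply this to $e^{\mp\im\xi}$, $e^{\pm\im\lambda(\xi)}$ and, crucially, to the triangularizing matrices $\tilde P_\xi,\tilde P_\xi^{-1}$ of \eqref{27.3}--\eqref{27.4}.

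\textbf{The estimate for $A^x_{\cdot+\im\delta,\pm}$.} I treat $\pm=+$ (so $x\le0$); the other sign is symmetric. By \eqref{27.5} the matrix $T_{0,\lambda(\xi)}$ is triangularized by $\tilde P_\xi$, so I would write
\begin{equation*}
A_{\xi,+}^x=e^{-\im x\xi}T_{0,\lambda(\xi)}^x=e^{-\im x\xi}\,\tilde P_\xi\,N(\xi)^x\,\tilde P_\xi^{-1},\qquad N(\xi):=\begin{pmatrix} e^{\im\xi}&1\\ 0&e^{-\im\xi}\end{pmatrix},
\end{equation*}
and read off $e^{-\im x\xi}N(\xi)^x$ from \eqref{27.6}: its diagonal entries are $1$ and $e^{-2\im x\xi}$, i.e. single Fourier modes of order $\le2|x|$, while its off-diagonal entry is a trigonometric polynomial with exactly $|x|$ modes, all of order $\le2|x|-1$. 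Summing $\<n\>^{\sigma-1}$ over these modes gives
\begin{equation*}
\big\|e^{-\im x\xi}N(\xi)^x\big\|_{\mathcal A^{\sigma-1}}\lesssim |x|\,\<x\>^{\sigma-1}\lesssim\<x\>^{\sigma}.
\end{equation*}
Passing to the shifted line scales the mode of frequency $n$ by $e^{-n\delta}$; for $x\le0$ every frequency occurring above is $\ge0$ (this is exactly where $\mp x\ge0$ enters), so $e^{-n\delta}\le1$ and the bound is unaffected. I then combine this with the strip-analyticity of $\tilde P_{\cdot+\im\delta},\tilde P_{\cdot+\im\delta}^{-1}$ via the algebra/product estimate of the first paragraph to conclude $\|A^x_{\cdot+\im\delta,+}\|_{\mathcal A^{\sigma-1}}\lesssim\<x\>^{\sigma}$.

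\textbf{The estimate for $V_{\cdot+\im\delta,\pm}$.} Here there is no $x$-growth to produce; the task is only to separate variables. From $V_{\xi,\pm}(x-1)=e^{\mp\im\xi}\big(T_{\lambda(\xi)}(x)-T_{0,\lambda(\xi)}\big)$ and \eqref{def:tla}, each entry of $V_{\xi,\pm}(x)$ factors as a strip-analytic symbol (one of $e^{\mp\im\xi}$ or $e^{\mp\im\xi}e^{\pm\im\lambda(\xi)}$) times a $\xi$-independent scalar coefficient $c_\pm(x)$ that vanishes as $\alpha(x{+}1)\to\alpha_0,\ \theta(x{+}1)\to0$. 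By the first paragraph the symbols have $\mathcal A^{\sigma-1}$ norm bounded uniformly in $x$ (depending on $\delta$), so by the same computation as in Lemma~\ref{lem:10},
\begin{equation*}
\|V_{\cdot+\im\delta,\pm}(x)\|_{\mathcal A^{\sigma-1}}\lesssim_\delta |\alpha(x{+}1)-\alpha_0|+|\theta(x{+}1)|+|\rho(x{+}1)-\rho_0|.
\end{equation*}
Since $|\rho(x)-\rho_0|\lesssim|\alpha(x)-\alpha_0|$ yields $\rho(\cdot)-\rho_0\in l^{1,\sigma}$, multiplying by $\<x\>^{\sigma}$ and summing over $x$ is finite by the hypothesis $\alpha(\cdot)-\alpha_0,\theta(\cdot)\in l^{1,\sigma}$, which is the second claim.

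\textbf{Main obstacle.} The genuinely delicate point will be the uniform control of $A_{\xi,+}^x$ across the frequencies $\xi=0,\pi$ where the eigenvalues $1,e^{-2\im\xi}$ of $A_{\xi,+}$ collide: the diagonalization $P_\xi$ of \eqref{24.1} carries a factor $|\sin\xi|^{-1}$ and is useless there, so the scheme hinges on the triangularization $\tilde P_\xi$, which by \eqref{27.3}--\eqref{27.4} is holomorphic and invertible on a full strip around $\T$ (this is precisely where $A_\pm(\xi)\neq0$, \eqref{23.1.1}, is used), leaving all $x$-dependence in the elementary factor $e^{-\im x\xi}N(\xi)^x$ whose Fourier content is computed by hand. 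I would also record carefully that the implicit constants are uniform in $x$ (the content of the $\<x\>^\sigma$ bound) even though they may degenerate as $\delta\uparrow\delta_0$ through the shrinking width $\delta_0-\delta$ of the available analyticity strip.
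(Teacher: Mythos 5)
Your proposal is correct and takes essentially the same route as the paper: the paper's own (one-line) proof likewise reduces the bound on $A_{\cdot+\im\delta,\pm}^x$ to the explicit triangular power formula \eqref{27.6} (i.e.\ conjugation by $\tilde P_\xi,\tilde P_\xi^{-1}$ of \eqref{27.3}--\eqref{27.4}) and bounds $\|V_{\cdot+\im\delta,\pm}(x)\|_{\mathcal A^{\sigma-1}}$ ``in the same manner as Lemma \ref{lem:10}''. Your write-up merely supplies the details the paper leaves implicit (Wiener-algebra submultiplicativity, exponential decay of Fourier coefficients from strip analyticity, and the observation that $\mp x\geq 0$ makes all occurring frequencies nonnegative so that the shift by $\im\delta$ only damps), all of which check out.
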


\begin{proof}
The proof follows immediately from \eqref{27.6} for $A_{\cdot,\pm}$.
Also we can bound $\|V_{\cdot,\pm}\|_{\mathcal A^{\sigma-1}}$ by the same manner as Lemma \ref{lem:10}.
\end{proof}

\begin{proposition}\label{prop:19}
Let $\alpha(\cdot)-\alpha_0, \theta(\cdot)\in l^{1,\sigma}$, $\sigma=1,2$.
Then, for $0\leq \delta<\delta_0$, we have
\begin{align}\label{Abound}
\|m_\pm(x,\cdot+\im \delta)\|_{\mathcal A^{\sigma-1}}\lesssim \max(1,\mp x|x|^{\sigma-1}).
\end{align}
Moreover, for each $x\in \Z$, we have $m_\pm(x,\cdot+\im \delta)\to m_\pm(x,\cdot)$ in $\mathcal A^{\sigma-1}$.

\end{proposition}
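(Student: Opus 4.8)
The plan is to upgrade the contraction argument of Proposition \ref{prop:12} from the pointwise ($l^\infty$ in $x$, fixed $\xi$) setting to the Wiener-algebra-valued setting, treating $x\mapsto m_\pm(x,\cdot+\im\delta)$ as a sequence taking values in $\mathcal A^{\sigma-1}$. The two structural facts that make this work are that $\mathcal A$ is a Banach algebra under pointwise multiplication, $\|fg\|_{\mathcal A}\le\|f\|_{\mathcal A}\|g\|_{\mathcal A}$ (since multiplication corresponds to convolution of Fourier coefficients, which preserves $l^1$), together with the Leibniz-type estimate $\|fg\|_{\mathcal A^1}\lesssim\|f\|_{\mathcal A^1}\|g\|_{\mathcal A}+\|f\|_{\mathcal A}\|g\|_{\mathcal A^1}$, which follows from $\<n\>\le\<n-m\>+\<m\>$ applied to the convolution. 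With these, every product $A_{\cdot,\pm}^{\bullet}V_{\cdot,\pm}(y)m(y,\cdot)$ appearing in \eqref{28}, \eqref{29} can be estimated in $\mathcal A^{\sigma-1}$ via the building-block bounds of Lemma \ref{lem:18}. As in the proof of Proposition \ref{prop:12}, I would treat $m_+$ in detail, the case of $m_-$ being analogous using \eqref{29}.

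First I would treat the Wiener algebra case $\sigma=1$. As in \eqref{33}--\eqref{35}, define $\Phi_{\xi+\im\delta,+}$ by \eqref{33} and regard it as acting on $\mathcal A$-valued sequences on $\Z_{\ge x_+}$. Applying the algebra property and Lemma \ref{lem:18} (so $\|A^{-(y+1-x)}_{\cdot,+}\|_{\mathcal A}\lesssim\<y+1-x\>$), the Lipschitz constant is bounded by $\sum_{y\ge x}\<y+1-x\>\|V_{\cdot,+}(y)\|_{\mathcal A}\le\sum_{y\ge x}\<y\>\|V_{\cdot,+}(y)\|_{\mathcal A}$ for $x\ge 0$, which is the tail of a convergent series; choosing $x_+$ large makes it $\le\tfrac12$, so $\Phi_{\xi+\im\delta,+}$ is a contraction and $\|m_+(x,\cdot+\im\delta)\|_{\mathcal A}\lesssim1$ on $\Z_{\ge x_+}$. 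I would then extend to all $x$ through the recursion \eqref{27} and the Duhamel identity \eqref{35.1}, exactly mirroring \eqref{35.0}--\eqref{35.1} but with $\|\cdot\|_{\mathcal A}$ in place of $\|\cdot\|_{\C^2}$; the factors $\|A^{x}_{\cdot,+}\|_{\mathcal A}\lesssim\<x\>$ produce the linear growth on the negative axis, giving $\max(1,-x)=\max(1,\mp x|x|^{\sigma-1})$ for $\sigma=1$. The crucial point is that the Wiener norm is controlled by the \emph{polynomial} bound \eqref{27.6} rather than by $|\sin\xi|^{-1}$, so no singularity at $\xi=0,\pi$ enters.

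For $\sigma=2$ I would bootstrap: the $\mathcal A$-bound is already in hand, so I run the same contraction in $\mathcal A^1$, using the Leibniz estimate to write $\|A^{-(y+1-x)}_{\cdot,+}V_{\cdot,+}(y)m(y)\|_{\mathcal A^1}$ as a sum of three terms in which the $\mathcal A^1$-weight lands on exactly one factor and the $\mathcal A$-norm controls the other two. The terms where the weight lands on $A$ or $V$ are inhomogeneous and summable by Lemma \ref{lem:18} (now $\|A^x\|_{\mathcal A^1}\lesssim\<x\>^2$ and $\sum_y\<y\>^2\|V(y)\|_{\mathcal A^1}<\infty$) together with the previously established $\|m\|_{\mathcal A}\lesssim1$; the term where the weight lands on $m$ again gives a contraction for $x_+$ large. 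Extending downward as before, the extra weight produces the quadratic growth $\max(1,|x|^2)=\max(1,\mp x|x|^{\sigma-1})$.

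Finally, for $m_\pm(x,\cdot+\im\delta)\to m_\pm(x,\cdot)$ in $\mathcal A^{\sigma-1}$ as $\delta\downarrow0$, I would note that the contraction constants and a priori bounds above are uniform in $\delta\in[0,\delta_0)$, while $\delta\mapsto A_{\cdot+\im\delta,\pm}^x$ and $\delta\mapsto V_{\cdot+\im\delta,\pm}(y)$ are continuous in $\mathcal A^{\sigma-1}$ (again via \eqref{27.6} and the proof of Lemma \ref{lem:18}). A uniform contraction with continuously varying data yields continuity of the fixed point in $\delta$, and the extension step \eqref{35.1} preserves it. I expect the main obstacle to be the $\sigma=2$ bootstrap: one must organize the Leibniz expansion so that the weighted terms are genuinely absorbed by the inhomogeneity (controlled by the $\mathcal A$-bound) and only the unweighted term feeds the contraction, all with constants uniform in $\delta$; keeping the coupling between the $\mathcal A$ and $\mathcal A^1$ norms closed is the delicate bookkeeping.
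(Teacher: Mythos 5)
Your proposal is correct and is essentially the paper's proof: the paper simply reruns the contraction argument of Proposition \ref{prop:12} in the $\mathcal A^{\sigma-1}$-valued setting, replacing the bounds of Lemmas \ref{lem:9.1} and \ref{lem:10} by \eqref{WienerEst1} of Lemma \ref{lem:18}, and obtains the convergence as $\delta\downarrow 0$ from the continuity of $\varphi_\pm(\cdot+\im\delta)$, $A_{\cdot+\im\delta,\pm}$ and $V_{\cdot+\im\delta,\pm}(x)$ in $\mathcal A^{\sigma-1}$, exactly as you argue. Your only deviation is the Leibniz bootstrap for $\sigma=2$, which works but is unnecessary: since $\<n\>\lesssim\<n-m\>\<m\>$, the weighted space $\mathcal A^1$ is itself a Banach algebra, so the $\sigma=2$ contraction runs verbatim as in the $\sigma=1$ case.
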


\begin{proof}
The proof of \eqref{Abound} is similar to the proof of \eqref{32} of Proposition \ref{prop:12}.
Indeed, we only have to replace the bound given in Lemmas \ref{lem:9.1} and \ref{lem:10} by the bound \eqref{WienerEst1} given by Lemma \ref{lem:18}.
The second claim follows from the continuity of $\varphi(\cdot+\im \delta)$, $A_{\cdot+\im \delta,\pm}$ and $V_{\cdot+\im \delta,\pm}(x)$ in $\mathcal A^{\sigma-1}$.
\end{proof}
%
%

\section{Scattering theory}\label{sec:scat}

By the Jost solutions, we can perform the scattering theory for QWs.
We start from considering the zero of 
\begin{align*}
W_\xi:=\det(\phi_+(\cdot,\xi),\phi_-(\cdot,\xi)).
\end{align*}
Recall that by Proposition \ref{prop:Wronsky}, $W_\xi$ is independent of $x$.

\begin{lemma}\label{lem:14}
Let $\alpha(\cdot)-\alpha_0,\ \theta(\cdot)\in l^1$.
Let $\xi\in \T\setminus\{0,\pi\}$.
Then, we have $W_\xi\neq 0$.
\end{lemma}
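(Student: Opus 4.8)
The plan is to prove that for $\xi\in\T\setminus\{0,\pi\}$, the Wronskian $W_\xi=\det(\phi_+(\cdot,\xi),\phi_-(\cdot,\xi))$ is nonzero. Since $W_\xi$ is $x$-independent by Proposition \ref{prop:Wronsky}, vanishing of $W_\xi$ would mean $\phi_+(\cdot,\xi)$ and $\phi_-(\cdot,\xi)$ are linearly dependent at every $x$. The strategy is to argue by contradiction: suppose $W_\xi=0$, extract the consequences for the asymptotic behavior of the common solution at $\pm\infty$, and derive a contradiction with unitarity (equivalently, with a Wronskian-type orthogonality relation coming from the symmetry \eqref{16}).

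**Key steps.** First I would record the asymptotics from \eqref{26}: $\phi_+(x,\xi)=e^{\im\xi x}m_+(x,\xi)$ with $m_+(x,\xi)\to\varphi_+(\xi)$ as $x\to+\infty$, and $\phi_-(x,\xi)=e^{-\im\xi x}m_-(x,\xi)$ with $m_-(x,\xi)\to\varphi_-(\xi)$ as $x\to-\infty$. Next, for $\xi\in\T$ real, I would use Lemma \ref{lem:6.3}: the conjugation symmetry \eqref{16} of the transfer matrix shows that $\gamma(\xi)\sigma_1\overline{\phi_\pm(\cdot,\xi)}$ again solves \eqref{eq:equiv}, and by matching its behavior at infinity (using $\varphi_\pm(-\xi)=\varphi_\mp(\xi)$ from Lemma \ref{lem:6} together with $\gamma(\xi)\sigma_1\overline{\varphi_\pm(\xi)}=\varphi_\pm(-\xi)$) I would identify $\gamma(\xi)\sigma_1\overline{\phi_+(\cdot,\xi)}$ as the Jost solution $\phi_+(\cdot,-\xi)$ (and similarly for $\phi_-$). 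The decisive point is that for $\xi\in\T\setminus\{0,\pi\}$ the pair $\{\phi_+(\cdot,\xi),\gamma(\xi)\sigma_1\overline{\phi_+(\cdot,\xi)}\}$ is linearly independent: their leading behaviors $e^{\im\xi x}\varphi_+$ and $e^{-\im\xi x}\varphi_+(-\xi)$ oscillate with opposite frequencies (here $\sin\xi\neq0$ is exactly what is used), so they form a genuine pair of oscillatory solutions. Computing the Wronskian of this pair and showing it is a nonzero constant multiple of $\overline{W_{0,\xi}}$ or of $\sin\xi$ — using the explicit normalization \eqref{25.1} and that $\|\varphi_\pm(\xi)\|_{\C^2}=1$ — then gives a nonvanishing determinant.

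**Deriving the contradiction.** If $W_\xi=0$, then $\phi_+(\cdot,\xi)=c\,\phi_-(\cdot,\xi)$ for some constant $c\neq0$. I would feed this into the Wronskian of the conjugate pair: since $\phi_+$ is proportional to $\phi_-$, its conjugate-reflected partner $\gamma(\xi)\sigma_1\overline{\phi_+(\cdot,\xi)}$ is proportional to $\gamma(\xi)\sigma_1\overline{\phi_-(\cdot,\xi)}=\phi_-(\cdot,-\xi)$, and one checks that $\phi_+$, $\phi_-(\cdot,-\xi)$, and $\phi_-(\cdot,\xi)$ then all lie in a one-dimensional solution space at each $x$. Evaluating the conjugate-pair Wronskian, which I claimed is a nonzero multiple of $\sin\xi$, against this degeneracy forces $\sin\xi=0$, i.e.\ $\xi\in\{0,\pi\}$, contradicting the hypothesis.

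**Main obstacle.** The routine parts — the asymptotics and the symmetry bookkeeping — are straightforward given Lemmas \ref{lem:6} and \ref{lem:6.3}. The delicate step is correctly identifying $\gamma(\xi)\sigma_1\overline{\phi_+(\cdot,\xi)}$ as a genuine second Jost solution with the \emph{opposite} exponential behavior and verifying that its Wronskian against $\phi_+$ is an explicit nonzero quantity proportional to $\sin\xi$ (equivalently to $W_{0,\xi}$ via \eqref{25.1}); this is where the exclusion of the edge points $\xi\in\{0,\pi\}$ enters, since there $e^{\im\xi x}$ and $e^{-\im\xi x}$ coincide and the transfer matrix degenerates to a Jordan block (cf.\ \eqref{27.5}), collapsing the two-dimensional solution space argument. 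I expect the bulk of the work to be confirming this nondegeneracy and keeping track of the unimodular factor $\gamma(\xi)$ and the normalizations so that the final Wronskian identity is manifestly nonzero on $\T\setminus\{0,\pi\}$.
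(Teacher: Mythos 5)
Your plan is sound through its first two stages, and it is a genuinely different route from the paper's: the paper does not argue by contradiction but writes the scattering expansions \eqref{40}--\eqref{41} of $\phi_\pm(\cdot,-\xi)$ in terms of $\phi_\pm(\cdot,\xi)$, takes determinants against $\phi_\pm(\cdot,\xi)$, and reads off $W_{0,\xi}=t_\pm(\xi)W_\xi$ as in \eqref{42}, whence $W_\xi\neq 0$ because $W_{0,\xi}\neq 0$ off $\{0,\pi\}$. However, your final step contains a genuine gap. From $W_\xi=0$, i.e.\ $\phi_+(\cdot,\xi)=c\,\phi_-(\cdot,\xi)$ with $c\neq 0$, it does \emph{not} follow that $\phi_+(\cdot,\xi)$, $\phi_-(\cdot,\xi)$ and $\phi_-(\cdot,-\xi)$ ``all lie in a one-dimensional solution space'': in fact this configuration is impossible unconditionally, since $\det\(\phi_-(\cdot,-\xi)\ \phi_-(\cdot,\xi)\)$ is $x$-independent and, evaluated as $x\to-\infty$ using $\varphi_-(-\xi)=\varphi_+(\xi)$ (Lemma \ref{lem:6}), equals $\det\(\varphi_+(\xi)\ \varphi_-(\xi)\)=W_{0,\xi}\neq 0$. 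So the degeneracy you propose to ``check'' is false regardless of whether $W_\xi$ vanishes, and the contradiction as you structured it never materializes; the hypothesis $W_\xi=0$ only makes the two lines $\mathrm{span}\{\phi_+(\cdot,\xi)\}=\mathrm{span}\{\phi_-(\cdot,\xi)\}$ and $\mathrm{span}\{\phi_+(\cdot,-\xi)\}=\mathrm{span}\{\phi_-(\cdot,-\xi)\}$ coincide pairwise, not all four.

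The repair is close by and uses exactly the objects you introduced, but the mechanism is a sign mismatch between two asymptotic evaluations, not a dimension count. Applying the antilinear symmetry \eqref{43} to $\phi_+(\cdot,\xi)=c\,\phi_-(\cdot,\xi)$ gives $\phi_+(\cdot,-\xi)=\bar c\,\phi_-(\cdot,-\xi)$. Now evaluate the constant Wronskian of your conjugate pair in two ways: as $x\to+\infty$,
\begin{align*}
\det\(\phi_+(\cdot,-\xi)\ \phi_+(\cdot,\xi)\)=\det\(\varphi_+(-\xi)\ \varphi_+(\xi)\)=\det\(\varphi_-(\xi)\ \varphi_+(\xi)\)=-W_{0,\xi},
\end{align*}
while substituting the two proportionalities and evaluating as $x\to-\infty$ yields
\begin{align*}
\det\(\phi_+(\cdot,-\xi)\ \phi_+(\cdot,\xi)\)=|c|^2\det\(\phi_-(\cdot,-\xi)\ \phi_-(\cdot,\xi)\)=|c|^2\,W_{0,\xi}.
\end{align*}
Hence $(1+|c|^2)W_{0,\xi}=0$, contradicting $W_{0,\xi}\neq 0$ for $\xi\in\T\setminus\{0,\pi\}$ (cf.\ \eqref{25.1}, where $W_{0,\xi}$ is a nonvanishing multiple of $\sin\xi$). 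With this replacement your argument closes, and it even has a small merit over the paper's write-up: it does not presuppose that $\{\phi_+(\cdot,\xi),\phi_-(\cdot,\xi)\}$ spans the solution space, which is implicitly assumed when the paper posits the expansions \eqref{40}--\eqref{41}.
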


\begin{proof}
For $\xi\in \T\setminus\{0,\pi\}$, recall that we have $\lambda(-\xi)=\lambda(\xi)$.
Thus, $\phi_\pm(\cdot,-\xi)$ satisfies \eqref{eq:equiv}. 
Thereore, there exists $r_\pm(\xi)$ and $t_\pm(\xi)$ s.t.
\begin{align}
\phi_+(\cdot,-\xi)&=-r_+(\xi)\phi_+(\cdot,\xi)+t_+(\xi)\phi_-(\cdot,\xi),\label{40}\\
\phi_-(\cdot,-\xi)&=t_-(\xi)\phi_+(\cdot,\xi)-r_-(\xi)\phi_-(\cdot, \xi).\label{41}
\end{align}
Taking $\det(\phi_+(\cdot,-\xi)\ \phi_+(\cdot,\xi))$ and $\det(\phi_-(\cdot,-\xi)\ \phi_-(\cdot,\xi))$, we have
\begin{align*}
W_{0,\xi}=t_\pm(\xi) W_\xi.
\end{align*}
Since $W_{0,\xi}\neq 0$ for $\xi\in \T\setminus\{0,\pi\}$, we have $W_\xi\neq 0$ and $t(\xi):=\frac{W_{0,\xi}}{W_\xi}=t_+(\xi)=t_-(\xi)$.
\end{proof}

\begin{definition}
We call
\begin{align}\label{42.2}
t(\xi)=\frac{W_{0,\xi}}{W_\xi},\quad r_\pm(\xi):=-\frac{\det(\phi_+(\cdot,\mp \xi), \phi_-(\cdot,\pm \xi))}{W_\xi}
\end{align}
the transmission and reflection coefficients respectively.
\end{definition}

\begin{lemma}\label{lem:15}
Let $\alpha(\cdot)-\alpha_0,\ \theta(\cdot)\in l^1$ and $\xi \in \T\setminus\{0,\pi\}$.
Then, we have $|t(\xi)|^2+|r_\pm(\xi)|^2=1$.
\end{lemma}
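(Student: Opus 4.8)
The plan is to recognize that the identity $|t(\xi)|^2+|r_\pm(\xi)|^2=1$ is exactly the unitarity of the scattering matrix
$$
M(\xi):=\begin{pmatrix} -r_+(\xi) & t(\xi) \\ t(\xi) & -r_-(\xi) \end{pmatrix},
$$
which records \eqref{40}--\eqref{41} together with $t_+=t_-=t$ from Lemma \ref{lem:14}, and then to read off the diagonal entries of $\overline{M(\xi)}M(\xi)=I$. Thus the real work is to produce two independent relations for $M(\xi)$: one from iterating the scattering relations, and one from a conjugation symmetry of the Jost solutions.

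First I would establish the pointwise symmetry
$$
\sigma_1\overline{\phi_\pm(x,\xi)}=\overline{\gamma(\xi)}\,\phi_\pm(x,-\xi),\qquad \xi\in\T\setminus\{0,\pi\}.
$$
Since $\xi\in\T$ forces $\lambda(\xi)\in\R$ by \eqref{23.1}, the symmetry \eqref{16} becomes $T_{\lambda(\xi)}(x)=\overline{\sigma_1 T_{\lambda(\xi)}(x)\sigma_1}$, so if $v$ solves \eqref{eq:equiv} then so does $\sigma_1\overline{v}$. Applying this to $v=\phi_\pm(\cdot,\xi)$ and using $e^{\mp\im\xi x}\phi_\pm(x,\xi)\to\varphi_\pm(\xi)$ together with Lemma \ref{lem:6.3} in the form $\sigma_1\overline{\varphi_\pm(\xi)}=\overline{\gamma(\xi)}\,\varphi_\pm(-\xi)$, I find that $\overline{\gamma(\xi)}^{-1}\sigma_1\overline{\phi_\pm(x,\xi)}$ and $\phi_\pm(x,-\xi)$ are two solutions of the transfer equation with the same limiting profile $\varphi_\pm(-\xi)$ as $x\to\pm\infty$ (here $|e^{\mp\im\xi x}|=1$ keeps the $o(1)$ error under control). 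The uniqueness of the Jost solution determined by its asymptotics then forces them to coincide.

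With the symmetry in hand the rest is linear algebra. Writing \eqref{40}--\eqref{41} as $\Phi(\cdot,-\xi)=M(\xi)\Phi(\cdot,\xi)$ with $\Phi=(\phi_+,\phi_-)^\top$, replacing $\xi$ by $-\xi$ and substituting gives $M(-\xi)M(\xi)=I$ by linear independence of $\phi_\pm(\cdot,\xi)$ (valid since $W_\xi\neq0$ by Lemma \ref{lem:14}). On the other hand, conjugating \eqref{40}--\eqref{41}, multiplying by $\sigma_1$, and using the symmetry to turn $\sigma_1\overline{\phi_\pm(\cdot,\xi)}$ into $\overline{\gamma(\xi)}\,\phi_\pm(\cdot,-\xi)$ expresses $\Phi(\cdot,\xi)=\overline{M(\xi)}\,\Phi(\cdot,-\xi)$; comparing with $\Phi(\cdot,\xi)=M(-\xi)\Phi(\cdot,-\xi)$ yields $M(-\xi)=\overline{M(\xi)}$. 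Combining the two relations gives $\overline{M(\xi)}M(\xi)=I$, whose $(1,1)$ and $(2,2)$ entries are precisely $|r_+(\xi)|^2+|t(\xi)|^2=1$ and $|r_-(\xi)|^2+|t(\xi)|^2=1$.

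The step I expect to be the main obstacle is the uniqueness used to pass from ``same asymptotic profile'' to ``equal'' in the symmetry relation: for real $\xi\neq0,\pi$ both transfer eigenvalues have modulus one, so the defining condition $e^{\mp\im\xi x}\phi_\pm(x,\xi)\to\varphi_\pm(\xi)$ controls the oscillating mode only in the limit, and one must check that the difference of two such solutions --- which tends to $0$ along the relevant half-line --- is forced to vanish identically (the limiting dynamics is norm-preserving, so a solution decaying at infinity must be trivial). This is exactly the characterization implicit in the contraction-mapping construction of Proposition \ref{prop:12}, so I would anchor the uniqueness there; once it is granted, every remaining manipulation is routine.
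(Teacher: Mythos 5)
Your proposal is correct and takes essentially the same route as the paper: you establish the conjugation symmetry $\gamma(\xi)\sigma_1\overline{\phi_\pm(x,\xi)}=\phi_\pm(x,-\xi)$ (the paper's \eqref{43}) from \eqref{16} and Lemma \ref{lem:6.3}, and then apply \eqref{40}--\eqref{41} twice, your relations $M(-\xi)M(\xi)=I$ and $M(-\xi)=\overline{M(\xi)}$ being a repackaging of the coefficient identities $\frac{1}{|t(\xi)|^2}+\frac{r_\pm(\xi)r_\mp(\xi)}{t(\xi)^2}=1$ and $\frac{\bar r_\pm(\xi)}{|t(\xi)|^2}+\frac{r_\mp(\xi)}{t^2(\xi)}=0$ that the paper obtains by direct double substitution. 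The Jost-solution uniqueness you flag is indeed used silently in the paper's derivation of \eqref{43} and is correctly anchored in Lemma \ref{lem:11} and the contraction of Proposition \ref{prop:12}; just note that the transfer dynamics preserves the indefinite $\sigma_3$-form rather than the Euclidean norm, so the precise point is that for $\xi\in\T\setminus\{0,\pi\}$ the powers $A_{\xi,\pm}^{x}$ are uniformly bounded in both directions (diagonalizable with unimodular eigenvalues), which forces a decaying solution to vanish.
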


\begin{proof}
First, since $\xi \in \T\setminus\{0,\pi\}$ we have $\lambda(\xi)\in \R$.
By the symmetry \eqref{16} we see that $\sigma_1\overline{\phi_\pm(\cdot,\xi)}$ satisfies $\phi(x)=T_{\lambda(\xi)}(x)\phi(x-1)$.
Further by Lemma \ref{lem:6.3}, we have 
\begin{align*}
\sigma_1\overline{\phi_{\pm}(x,\xi)}-\gamma(\xi)^{-1}e^{\pm \im x (-\xi)}\varphi_\pm(-\xi)=\sigma_1\overline{\phi_{\pm}(x,\xi)}-\sigma_1\overline{e^{\pm \im x\xi}\varphi_\pm(\xi)}\to 0,\quad x\to \pm \infty.
\end{align*}
Therefore, we have
\begin{align}\label{43}
\gamma(\xi)\sigma_1\overline{\phi_\pm(x,\xi)}= \phi_\pm(x,-\xi).
\end{align}
Now, by \eqref{40}, \eqref{41} and \eqref{43}, we have
\begin{align*}
&\phi_\pm(\cdot,\xi)=\frac{1}{t(\xi)}\( \gamma_\xi\sigma_1\overline{\phi_\mp(\cdot,\xi)}+r_\mp(\xi)\phi_\mp(\cdot,\xi)\)\\&=
\frac{1}{t(\xi)}\( \gamma_\xi\sigma_1\overline{\frac{1}{t(\xi)}\( \gamma_\xi\sigma_1\overline{\phi_\pm(\cdot,\xi)}+r_\pm(\xi)\phi_\pm(\cdot,\xi)\)}+r_\mp(\xi)\frac{1}{t(\xi)}\( \gamma_\xi\sigma_1\overline{\phi_\pm(\cdot,\xi)}+r_\pm(\xi)\phi_\pm(\cdot,\xi)\)\)\\&=
\(\frac{1}{|t(\xi)|^2}+\frac{r_\pm(\xi)r_\mp(\xi)}{t(\xi)^2}\)\phi_\pm(\cdot,\xi)+\(\frac{\bar r_\pm(\xi)}{|t(\xi)|^2}+\frac{r_\mp(\xi)}{t^2(\xi)}\)\gamma_\xi \sigma_1 \overline{\phi_\pm(\cdot,\xi)}.
\end{align*}
Therefore, we have
\begin{align*}
\frac{1}{|t(\xi)|^2}+\frac{r_\pm(\xi)r_\mp(\xi)}{t(\xi)^2}=1,\quad \frac{\bar r_\pm(\xi)}{|t(\xi)|^2}+\frac{r_\mp(\xi)}{t^2(\xi)}=0.
\end{align*}
This implies $|t(\xi)|^2+|r_\pm(\xi)|^2 =1$.
\end{proof}

The nonexistence of embedded eigenvalues can be proved by the existence of Jost solutions.
We set $E$ to be the edge of the intervals $I_1,I_2$ given in \eqref{21.2}.
\begin{proposition}[Proposition \ref{prop:spec} (ii)]\label{prop:16}
Let $\alpha(\cdot)-\alpha_0,\  \theta(\cdot)\in l^1$.
Then, for $\lambda\in I_0\setminus E$, $e^{\im \lambda}$ is not an eigenvalue.
Further, if $\alpha(\cdot)-\alpha_0,\  \theta(\cdot) \in l^{1,1}$, for $\lambda\in I_0$, $e^{\im \lambda}\in I_0$ is not an eigenvalue.
\end{proposition}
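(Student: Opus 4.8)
The plan is to exclude $l^2$ eigenfunctions by comparing any putative eigenfunction with the Jost solution $\phi_+$ from Section \ref{sec:4}, exploiting that the Wronskian is $x$-independent through Corollary \ref{cor:Wron}. The key observation is that for $\lambda\in I_0$ (real, with $|\cos\lambda|\le \rho_0$) the asymptotic transfer matrix $T_{0,\lambda}$ has eigenvalues $e^{\pm\im\xi}$ with $\xi$ \emph{real}, so neither mode decays; hence the solution bounded at $+\infty$ cannot be square summable. I would first reduce to one arc: writing $\lambda=\lambda(\xi)=\lambda_-(\xi)$ with $\xi\in\T$ covers the upper arc $\sigma_+$, the interior $I_0\setminus E$ corresponding to $\xi\in\T\setminus\{0,\pi\}$ and the edge $E$ to $\xi\in\{0,\pi\}$; the lower arc $\sigma_-$ is then handled identically via the symmetry \eqref{16} together with $\lambda_\pm(-\xi)=\lambda_\pm(\xi)$.

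For the interior statement, assume $\alpha(\cdot)-\alpha_0,\theta(\cdot)\in l^1$ and fix $\xi\in\T\setminus\{0,\pi\}$. Proposition \ref{prop:12} supplies the Jost solution $\phi_+(x,\xi)=e^{\im\xi x}m_+(x,\xi)$ with $m_+(x,\xi)\to\varphi_+(\xi)$ as $x\to+\infty$. Since $\xi$ is real, $|e^{\im\xi x}|=1$, so $\phi_+(\cdot,\xi)$ is bounded on $\Z_{\ge 0}$, and because $\|\varphi_+(\xi)\|_{\C^2}=1$ by Lemma \ref{lem:6} we get $\|\phi_+(x,\xi)\|_{\C^2}\to 1$; in particular $\phi_+(\cdot,\xi)\notin\mathcal H$. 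Now suppose $u\in\mathcal H$ solves $(U-e^{\im\lambda})u=0$. Then both $J_{\mathrm{EV}}\phi_+$ and $u$ satisfy \eqref{ev:prob}, the former bounded on $\Z_{\ge 0}$ and the latter in $\mathcal H$, so Corollary \ref{cor:Wron} forces linear dependence, $u=c\,J_{\mathrm{EV}}\phi_+$. As $u\in l^2$ while $J_{\mathrm{EV}}\phi_+\notin l^2$, necessarily $c=0$ and $u=0$. This proves that $e^{\im\lambda}$ is not an eigenvalue for $\lambda\in I_0\setminus E$. Note that only the single solution $\phi_+$ is needed; neither $\phi_-$ nor $W_\xi\ne 0$ enters.

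For the edge, let $\xi\in\{0,\pi\}$, where $T_{0,\lambda(\xi)}$ has the double eigenvalue $e^{\im\xi}=\pm1$ and is no longer diagonalizable; this degeneration into a Jordan block is precisely why the stronger $l^{1,1}$ decay is required to construct a bounded Jost solution. Under $\alpha(\cdot)-\alpha_0,\theta(\cdot)\in l^{1,1}$, the second part of Proposition \ref{prop:12} still yields $\phi_+(\cdot,\xi)$, bounded on $\Z_{\ge 0}$ and with $m_+(x,\xi)\to\varphi_+(\xi)$. Since $A_\pm(\xi)\ge|\alpha_0|^2>0$ also at $\xi=0,\pi$, the vector $\varphi_+(\xi)$ remains well defined and of unit norm, so again $\phi_+(\cdot,\xi)\notin\mathcal H$. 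The identical Wronskian argument via Corollary \ref{cor:Wron} then rules out eigenfunctions on $E$, giving the second claim.

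The step I expect to be the main obstacle is verifying that the bounded Jost solution genuinely fails to be square summable at the band edge: this rests on $\|\varphi_+(\xi)\|_{\C^2}$ staying bounded away from $0$ (equivalently $A_+(\xi)\ne 0$) exactly where the asymptotic transfer matrix degenerates, together with invoking the correct stronger existence statement of Proposition \ref{prop:12}. Once the non-decaying bounded solution is in hand, the comparison through the constant Wronskian is immediate.
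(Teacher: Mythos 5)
Your proof is correct and takes essentially the same route as the paper's: Proposition \ref{prop:12} supplies the Jost solution $\phi_+(\cdot,\xi)$ bounded on $\Z_{\geq 0}$ (with $l^{1,1}$ needed precisely at the edge $\xi\in\{0,\pi\}$), and Corollary \ref{cor:Wron} forces any putative $l^2$ eigenfunction to be a scalar multiple of $J_{\mathrm{EV}}\phi_+$, which is impossible since $\|m_+(x,\xi)\|_{\C^2}\to\|\varphi_+(\xi)\|_{\C^2}=1$ as $x\to+\infty$. You merely spell out the non-square-summability of $\phi_+$ that the paper leaves implicit in its ``immediately follows'' step.
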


\begin{proof}
We only consider $\lambda\in I_1$.
By the definition of $I_1$ and $\lambda(\xi)=\lambda_+(\xi)$, we have $I_1=\{\lambda(\xi)\ |\ \xi\in \T$ and $I_1\cap E=\{\lambda(\xi)\ |\ \xi=0,\pi\}$.
Set $\xi$ be such that $\lambda(\xi)=\lambda$.
In both cases, by Proposition \ref{prop:12}, there exists $\phi_+(x,\xi)$ which is a solution of \eqref{eq:equiv} bounded in $x\geq 0$.
Therefore, the coclusion immediately follows from Corollary \eqref{cor:Wron}.
\end{proof}

By Proposition \ref{prop:kernel}, we have the expression of the kernel of $R(\lambda)=(\mathcal C-e^{\im \lambda})^{-1}$.
\begin{lemma}\label{lem:13.1}
Let $\xi\in \T_{\delta_0,\geq 0}\setminus\{0,\pi\}$.
Then we have
\begin{align*}
&R(\lambda(\xi))(x,y)
=e^{\im \xi|x-y|}e^{-\im \lambda(\xi)}W_{\xi}^{-1}\\&\quad\times\(m_-(x,\xi)m_+(y,\xi)^\top \sigma_1 \begin{pmatrix} 1_{\leq y}(x) & 0 \\ 0 & 1_{<y}(x)\end{pmatrix}+m_+(x,\xi)m_-(y,\xi)^\top  \sigma_1 \begin{pmatrix} 1_{> y}(x) & 0 \\ 0 & 1_{\geq y}(x)\end{pmatrix}\).\nonumber
\end{align*}
Moreover, for $\sigma>1/2$, $\<x\>^{-\sigma}R(\lambda(\xi))(x,y)\<y\>^{-\sigma} \in L^2_{x,y}(\Z^2;\mathcal L(\C^2))$.
Further, it is $L^2_{x,y}(\Z^2;\mathcal L(\C^2))$ valued holomorphic function in $\T_{\delta_0,>0}$ and continuous in $\T_{\delta_0,\geq 0}\setminus\{0,\pi\}$. 
\end{lemma}

\begin{proof}
We only prove the continuity.
Notice that for $\xi\in \T\setminus \{0,\pi\}$, $m_\pm(\cdot,\xi)$ are bounded from Proposition \ref{prop:12}.
Therefore, since we have the pointwise limit and a uniform bound for the $L^2_{x,y}$ norm, the continuity follows from Lebegue dominated convergence theorem.
\end{proof}

\begin{remark}
Limiting absorption principle implies the nonexistence of singular continuous spectrum (Proposition \ref{prop:spec} (i)).
\end{remark}

For $\lambda \in E$, we say $\lambda$ is a resonance iff there exists a bounded solution of $(\cC-e^{\im \lambda})\phi=0$ (Definition \ref{def:resonance}).
The following lemma gives an necessary and sufficient condition for the existence of resonance.
\begin{lemma}\label{lem:17}
Let $\alpha(\cdot)-\alpha_0,\ \theta(\cdot)\in l^{1,1}$.
Let $\xi=0,\pi$.
Then, $W_\xi=0$ if and only if there exists a nontrivial bounded solution of \eqref{eq:equiv}.
\end{lemma}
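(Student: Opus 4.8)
The plan is to deduce both implications from a single structural fact: at the edge points $\xi=0,\pi$ one has $\sin\xi=0$ and the frozen transfer matrix $A_{\xi,\pm}=e^{\mp\im\xi}T_{0,\lambda(\xi)}$ is \emph{not} diagonalizable, carrying the $2\times2$ Jordan block exhibited in \eqref{27.5}. Consequently the space of solutions of \eqref{eq:equiv} that stay bounded as $x\to+\infty$ should be exactly one-dimensional, spanned by $\phi_+(\cdot,\xi)$, and symmetrically the bounded-as-$x\to-\infty$ solutions should be exactly $\C\phi_-(\cdot,\xi)$. Throughout I use that, under the $l^{1,1}$ hypothesis, Proposition \ref{prop:12} produces the Jost solutions $\phi_\pm(x,\xi)=e^{\pm\im\xi x}m_\pm(x,\xi)$ at $\xi=0,\pi$, with $\phi_+$ bounded on $\Z_{\ge0}$ and $\phi_-$ bounded on $\Z_{\le0}$; that $\phi_\pm$ are nowhere zero (since $\|m_\pm(x,\xi)\|_{\C^2}\to\|\varphi_\pm(\xi)\|_{\C^2}=1$ and each transfer matrix is invertible); and that $W_\xi=\det(\phi_+,\phi_-)$ is independent of $x$ by Proposition \ref{prop:Wronsky}.

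The implication ``$W_\xi=0\Rightarrow$ bounded solution'' is then immediate. Since $\det T_{\lambda(\xi)}(x)=1$, each transfer matrix is invertible, so the vanishing of the ($x$-independent) determinant $W_\xi$ forces $\phi_+(\cdot,\xi)$ and $\phi_-(\cdot,\xi)$ to be proportional, say $\phi_-=\kappa\phi_+$ with $\kappa\neq0$. This common solution is bounded as $x\to+\infty$ because it equals $\phi_+$, and bounded as $x\to-\infty$ because it equals $\kappa^{-1}\phi_-$; hence $u=J_{\mathrm{EV}}\phi_+$ is a nontrivial solution of \eqref{ev:prob} bounded on all of $\Z$.

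The converse is where the real work lies, and the main obstacle is precisely the one-dimensionality claim, i.e.\ that the linear (Jordan) growth of the frozen problem survives the $l^{1,1}$ perturbation. I would argue by contradiction: if the space of solutions of \eqref{eq:equiv} bounded as $x\to+\infty$ had dimension two, then \emph{every} solution would be bounded there; in particular the solution $\tilde w$ determined by $\tilde w(x_*)=\tilde\varphi(\xi)$ at a large $x_*$, with $\tilde\varphi(\xi)$ the generalized eigenvector of \eqref{27.2}, would satisfy $M:=\sup_{x\ge x_*}\|\tilde w(x)\|_{\C^2}<\infty$. Writing the discrete variation-of-constants identity (as in \eqref{30}) in the $(\varphi_+,\tilde\varphi)$ frame where the frozen matrix is the Jordan block $\begin{pmatrix}\pm1 & 1\\ 0 & \pm1\end{pmatrix}$ of \eqref{27.5}, the unperturbed term $A_{\xi,+}^{\,x-x_*}\tilde\varphi(\xi)$ has norm growing like $|x-x_*|$, while the perturbative sum is controlled by $\|A_{\xi,+}^{x}\|_{\mathcal L(\C^2)}\lesssim|x|$ (Lemma \ref{lem:9.1}) and $V_{\xi,+}\in l^{1,1}$ (Lemma \ref{lem:10}), giving $M\sum_y|x-y|\,\|V_{\xi,+}(y)\|_{\mathcal L(\C^2)}\lesssim M\big(|x|\,\|V_{\xi,+}\|_{l^{1}(\{y\ge x_*\})}+\|V_{\xi,+}\|_{l^{1,1}}\big)$. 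Choosing $x_*$ so that the tail $l^1$-norm is small, the remainder is at most $\tfrac12|x|+O(1)$, so $\|\tilde w(x)\|_{\C^2}\gtrsim|x|\to\infty$, contradicting $M<\infty$. Hence the bounded-at-$+\infty$ solutions form a space of dimension at most one; as $\phi_+$ is a nonzero such solution it spans it, and symmetrically $\C\phi_-$ is the bounded-at-$-\infty$ space.

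Granting this, ``bounded solution $\Rightarrow W_\xi=0$'' is quick: if $u$ is a nontrivial bounded solution and $v=J_{\mathrm{VE}}u$, then $v$ is bounded as $x\to+\infty$, so $v=c\,\phi_+$ with $c\neq0$ by one-dimensionality at $+\infty$; but $v$ is also bounded as $x\to-\infty$, so likewise $v=c'\phi_-$ with $c'\neq0$. Therefore $\phi_+$ and $\phi_-$ are proportional and $W_\xi=\det(\phi_+,\phi_-)=0$. The delicate point to get right is exactly the Gronwall estimate of the third paragraph: one must verify that the weighted summability $l^{1,1}$, rather than mere $l^1$, is what renders the $\sum_y|x-y|\,\|V_{\xi,+}(y)\|$ remainder lower order than the linear term, so that the Jordan growth is not destroyed by the perturbation.
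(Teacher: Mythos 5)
Your first implication coincides with the paper's, but your converse takes a genuinely different route. The paper (following Cuccagna--Tarulli) assumes $W_\xi\neq 0$, extracts two linearly independent solutions bounded on a half-line, and then builds a Volterra-type map $\Phi_\phi(\psi)(x)=\phi(x)+\sum_{y\ge x}K(x,y)(\cC-\cC_0)\psi(y)$ whose fixed point injects these perturbed bounded solutions into bounded solutions of the \emph{free} problem on $\Z_{\ge x_0}$, contradicting the one-dimensionality of the free bounded space. You instead try to prove directly that the \emph{perturbed} bounded-at-$+\infty$ space is one-dimensional, by showing the Jordan growth of the frozen matrix survives the $l^{1,1}$ perturbation via the Duhamel identity \eqref{30}. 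Both arguments ultimately rest on the same structural fact (the Jordan block in \eqref{27.5} at $\xi=0,\pi$), and your route is a legitimate, more classical ODE-style alternative --- but as written it has a quantifier gap at exactly the step you flag as delicate.

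The gap is a circularity in the choice of $x_*$. Your solution $\tilde w$ is \emph{defined} by its data $\tilde\varphi(\xi)$ at $x_*$, so $M=\sup_{x\ge x_*}\|\tilde w(x)\|_{\C^2}$ is a function of $x_*$; the contradiction requires the perturbative term to be dominated by the linear growth, i.e.\ roughly $CM(x_*)\,\|V_{\xi,+}\|_{l^1(\{y\ge x_*\})}\le \tfrac12$, and nothing in the argument prevents $M(x_*)$ from growing (a priori exponentially, through products of inverse transfer matrices relating data at $x_*$ to a fixed basis) faster than the tail $l^1$-norm decays. The hole is repairable within your framework: under the contradiction hypothesis choose a bounded solution $w_2$ linearly independent of $\phi_+(\cdot,\xi)$; any solution $w$ with $\|w(x_*)\|_{\C^2}=1$ decomposes as $w=c_1\phi_+ +c_2w_2$ with $|c_1|,|c_2|$ bounded \emph{uniformly in $x_*$}, because by Cramer's rule the denominator is the constant nonzero Wronskian $\det(\phi_+\ w_2)$ (Proposition \ref{prop:Wronsky}), whence $M(x_*)\le K$ uniformly and the smallness choice of $x_*$ becomes legitimate. (Alternatively, project the Duhamel identity onto the generalized-eigenvector component: boundedness of $w$ forces that component to tend to zero, so $\det(\phi_+(x,\xi)\ w(x))\to 0$, i.e.\ the constant Wronskian vanishes --- no choice of $x_*$ needed.) A minor additional point: Lemma \ref{lem:9.1} states the bound $\|A_{\xi,+}^x\|_{\mathcal L(\C^2)}\lesssim |x|$ only for $x\le 0$, whereas your sum uses positive powers; at the real edge points $\xi=0,\pi$ this is harmless since \eqref{27.6} gives the linear bound for both signs, but it should be said.
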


For the proof, we follow \cite{CT09SIAM}.

\begin{proof}
If $W_\xi=0$. Then $\phi_+(\cdot,\xi)$ and $\phi_-(\cdot,\xi)$ are linearly dependent.
By Proposition \ref{prop:12}$, \phi_\pm(\cdot,\xi)$ is bounded in $\pm x\geq 0$.
Thus, we see that $\phi_+(\cdot,\xi)$ is bounded.

Suppose that there exists nontrivial bounded solution $\phi$ of $(\cC-e^{\im \lambda(\xi)})\phi=0$.
For contradiction, we assume $W_\xi\neq 0$.
Then, we can express $\phi=c_+\phi_++c_-\phi_-$ by some constants $c_\pm$.
If $c_+=0$, we can take $\phi=\phi_-$, which means $\phi_\pm$ are both bounded in $\Z_{\geq 0}$.
If $c_+\neq 0$, $\phi$ and $\phi_-$ will be linear independent solutions bounded in $\Z_{\leq 0}$.
Without loss of generality, we can assume there exists two linear independent solutions $\phi_j$ ($j=1,2$) bounded in $\Z_{\geq 0}$.
Further, we can assume $\|\phi_j\|_{l^\infty(\Z_{\geq 0})}=1$.

Now, we set $K$ by Proposition \ref{prop:kernel} with $\varphi_j$ replaced by $\phi_j$.
For $x_0>0$ (chosen later) and given $\phi\in l^\infty(\Z_{\geq x_0})$, we set $\Phi_\phi:l^\infty(\Z_{\geq x_0})\to l^\infty(\Z_{\geq x_0})$ by
\begin{align}\label{53}
\Phi_\phi(\psi)(x):=\phi(x)+\sum_{y\geq x} K(x,y) \(\mathcal C -\mathcal C_0\)\psi (y) .
\end{align}
Here, we have extended $\psi$ to $\Z_{\geq x_0-1}$ by setting $\psi(x_0-1)=0$ (notice that because of Lemma \ref{lem:cmv} for $x=x_0$, we need $\psi(x_0-1)$ in the r.h.s.\ of \eqref{53}).

Now, by Lemma \ref{lem:cmv} and assumption $\alpha(\cdot)-\alpha_0,\ \theta(\cdot)\in l^{1,1}$, there exists $v\in l^{1,1}$ s.t.\ $$\|\(\cC-\cC_0\)\psi(y)\|_{\C^2}\leq v(y)\max_{a=0,\pm 1}\|\psi(y+a)\|_{\C^2}.$$
Therefore, we have
\begin{align*}
\|\Phi_\phi(\psi_1)-\Phi_\phi(\psi_2)\|_{l^\infty(\Z_{\geq x_0})}\leq \sup_{x,y\geq x_0}\|K(x,y)\|_{\mathcal L(\C^2)} \sum_{y\geq x_0}v(y)\|\psi_1-\psi_2\|_{l^\infty(\Z_{\geq x_0})}
\end{align*}
Since, $\sup_{x,y\geq x_0}\|K(x,y)\|_{\mathcal L(\C^2)}<\infty$, there exists $x_0$ sufficiently large (independent of $\phi$) s.t.
\begin{align*}
\|\Phi_\phi(\psi_1)-\Phi_\phi(\psi_2)\|_{l^\infty(\Z_{\geq x_0})}\leq \frac 1 2 \|\psi_1-\psi_2\|_{l^\infty(\Z_{\geq x_0})}.
\end{align*}
Therefore, we have a fixed point of $\Phi_\phi$.
For $\phi\in l^\infty(\Z_{\geq x_0})$, we denote the fixed point of $\Phi_\phi$ by $\psi(\phi)$.
Then, $\psi$ is a linear mapping and moreover because
\begin{align*}
\|\psi(\phi)\|_{l^\infty(\Z_{\geq x_0})}\geq \|\phi\|_{l^\infty(\Z_{\geq x_0})}-\frac 1 2 \|\psi(\phi)\|_{l^\infty(\Z_{\geq x_0})},
\end{align*}
we have
\begin{align*}
\|\psi(\phi)\|_{l^\infty(\Z_{\geq x_0})}\geq \frac 2 3\|\phi\|_{l^\infty(\Z_{\geq x_0})}
\end{align*}
and in particular, it is an injection.

Now, if we restrict $\psi$ to the $2$ dimensional space which is spanned by $\phi_j$ ($j=1,2$), we have
$\(\cC_0-e^{\im \lambda(\xi)}\)\psi(\phi)(x)=0$ ($x\geq x_0+1$) for $\phi=c_1\phi_1+c_2\phi_2$.
Indeed,
\begin{align*}
&\(\cC_0-e^{\im \lambda( \xi)}\)\psi(\phi)(x)=(\cC-e^{\im \lambda(\xi)})\phi(x)-(\cC-\cC_0)\phi(x)\\&\quad+\(\cC-e^{\im \lambda(\xi)}\)\sum_{y\geq x}K_\lambda(x,y)\((\cC-\cC_0)\psi(\phi)\)(y)
-\(\cC-\cC_0\)\sum_{y\geq x}K_\lambda(x,y)\((\cC-\cC_0)\psi(\phi)\)(y)\\&=
(\cC-\cC_0)\psi(\phi)(x)-\(\cC-\cC_0\)\(\phi(x)+\sum_{y\geq x} K(x,y) \(\mathcal C -\mathcal C_0\)\psi(\phi) (y) \)=0,
\end{align*}
where we have used $(\cC-e^{\im \lambda(\xi)})\phi(x)=0$ and the fact that $K$ is the kernel of $(\cC-e^{\im \lambda(\xi)})$ in the second equality.
Therefore, $\psi$ maps the solutions of $(\cC-e^{\im \lambda(\xi)})u=0$ to the solutions of $(\cC_0-e^{\im \lambda(\xi)})u=0$ (in the region $x>x_0$).
Further, $\psi$ is an injection so $(\cC_0-e^{\im \lambda(\xi)})u=0$ has to have $2$ dimensional solutions in $l^\infty(\Z_{x\geq x_0})$.
However, since it is $1$ dimensional, we have a contradiction.
\end{proof}

\begin{proposition}\label{prop:20}
Let $\alpha(\cdot)-\alpha_0,\ \theta(\cdot)\in l^{1,1}$ for the generic case and $\alpha(\cdot)-\alpha_0,\ \theta(\cdot)\in l^{1,2}$ for the nongeneric case. 
Then, we have
$t,r_\pm \in \mathcal A$.
\end{proposition}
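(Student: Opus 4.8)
The plan is to reduce $t$ and $r_\pm$ to quotients of Wronskians of Jost solutions and then to control the reciprocal of the Wronskian $W_\xi$ inside the Wiener algebra; the two cases differ only in how the band-edge zeros of $W_\xi$ are handled. First I would rewrite everything at the single site $x=0$. Since $\phi_\pm(x,\xi)=e^{\pm\im\xi x}m_\pm(x,\xi)$, the exponential prefactors cancel in each determinant, and because the relevant pairs solve \eqref{eq:equiv} with the same spectral parameter (recall $\lambda(-\xi)=\lambda(\xi)$ from \eqref{21.7}), Proposition~\ref{prop:Wronsky} makes the Wronskians independent of $x$. Evaluating at $x=0$ gives
\begin{align*}
W_\xi=\det\(m_+(0,\xi)\ m_-(0,\xi)\),\qquad N_\pm(\xi):=\det\(\phi_+(\cdot,\mp\xi)\ \phi_-(\cdot,\pm\xi)\)=\det\(m_+(0,\mp\xi)\ m_-(0,\pm\xi)\).
\end{align*}
By Proposition~\ref{prop:19} the quantities $m_\pm(0,\cdot)$ are bounded at $x=0$, so $m_\pm(0,\cdot)\in\mathcal A^{\sigma-1}$; since $\xi\mapsto-\xi$ preserves this norm and both $\mathcal A$ and $\mathcal A^1$ are Banach algebras (the latter because the weight $\<n\>$ is submultiplicative up to a constant), I obtain $W_\xi,N_\pm\in\mathcal A$ in the generic case ($\sigma=1$) and $W_\xi,N_\pm\in\mathcal A^1$ in the exceptional case ($\sigma=2$). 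Finally, \eqref{25.1} yields $W_{0,\xi}=\sin\xi\, q(\xi)$ with $q$ real analytic and nonvanishing on $\T$, so $q,q^{-1}\in\mathcal A$.

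In the generic case $W_\xi$ is nonvanishing on all of $\T$: off the edges by Lemma~\ref{lem:14}, and at $\xi=0,\pi$ by Definition~\ref{def:resonance} of genericity together with the characterization in Lemma~\ref{lem:17}. Hence Wiener's $1/f$ theorem gives $W_\xi^{-1}\in\mathcal A$, and $t=W_{0,\xi}W_\xi^{-1}$, $r_\pm=-N_\pm W_\xi^{-1}$ (see \eqref{42.2}) are products of elements of $\mathcal A$. This finishes the generic case using only $l^{1,1}$.

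The exceptional case is the heart of the matter and is where the extra weight is spent. Let $R\subseteq\{0,\pi\}$ be the nonempty set of edges at which $W_\xi=0$ (Lemma~\ref{lem:17}), and set $p(\xi):=\prod_{\xi_*\in R}\(e^{\im\xi}-e^{\im\xi_*}\)$. The key technical tool is a division estimate: if $f\in\mathcal A^1$ and $f(\xi_*)=0$, then $f/(e^{\im\xi}-e^{\im\xi_*})\in\mathcal A$, which follows by writing $f(\xi)=\sum_n\hat f(n)\(e^{\im n\xi}-e^{\im n\xi_*}\)$ and expressing each $\frac{e^{\im n\xi}-e^{\im n\xi_*}}{e^{\im\xi}-e^{\im\xi_*}}$ as a geometric sum of at most $|n|$ unimodular exponentials, so that the $\mathcal A$-norm is bounded by $\sum_n|n|\,|\hat f(n)|=\|f\|_{\mathcal A^1}$. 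At a resonance $\xi_*$ one has $-\xi_*\equiv\xi_*$, so $N_\pm(\xi_*)=W_{\xi_*}=0$; combining a partial-fraction decomposition of $1/p$ with the division estimate then gives $W_\xi/p,\ N_\pm/p\in\mathcal A$. Moreover the zeros of $W_\xi$ at $R$ are simple: $W_\xi\in\mathcal A^1\subset C^1$ vanishes there, while $|t|\le1$ from Lemma~\ref{lem:15} forces the vanishing order of $W_\xi$ not to exceed that of $W_{0,\xi}\sim\sin\xi$, whence $W_\xi'(\xi_*)\neq0$. Therefore $W_\xi/p$ is nonvanishing on $\T$, and Wiener's theorem yields $p/W_\xi\in\mathcal A$. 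Since $\sin\xi/p$ is a trigonometric polynomial (every zero of $\sin$ lies in $\{0,\pi\}\supseteq R$), I get $W_{0,\xi}/p=(\sin\xi/p)\,q\in\mathcal A$, and finally
\begin{align*}
t=\frac{W_{0,\xi}}{p}\cdot\frac{p}{W_\xi}\in\mathcal A,\qquad r_\pm=-\frac{N_\pm}{p}\cdot\frac{p}{W_\xi}\in\mathcal A.
\end{align*}

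The main obstacle is exactly this last case: one must establish that the band-edge zeros of $W_\xi$ are simple and are matched by zeros of the numerators $N_\pm$ and of $W_{0,\xi}$, and one must absorb the loss of one factor of $\<n\>$ incurred by dividing these zeros out. This loss is precisely what the strengthening from $l^{1,1}$ to $l^{1,2}$ — i.e. the upgrade from $\mathcal A$ to $\mathcal A^1$ regularity of the Jost data in Proposition~\ref{prop:19} — is designed to supply.
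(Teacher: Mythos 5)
Your proposal is correct, and its key ingredients coincide with the paper's: both proofs run on the $\mathcal A^{\sigma-1}$ regularity of the Jost data from Proposition \ref{prop:19}, both remove the band-edge zero of $W_\xi$ by spending exactly one factor of $\langle n\rangle$ (your division estimate $\|f/(e^{\im\xi}-e^{\im\xi_*})\|_{\mathcal A}\le\|f\|_{\mathcal A^1}$ is precisely the paper's construction of $\Psi_\pm$ via the geometric-sum rearrangement of Fourier coefficients), and both rule out a higher-order zero of $W_\xi$ using $|t(\xi)|\le 1$ from Lemma \ref{lem:15} together with $W_{0,\xi}\sim\sin\xi$ (your $C^1$ simplicity argument is the quantitative counterpart of the paper's conclusion $\tilde\Psi(0)\neq 0$ in \eqref{57}). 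Where you genuinely diverge is the endgame. The paper never inverts a globally reduced Wronskian: it localizes with smooth cutoffs $\chi,\tilde\chi,\tilde{\tilde\chi}$, inverts the modified function $\tilde\chi\tilde\Psi+\tilde{\tilde\chi}$ (nonvanishing on all of $\T$ by construction), obtains $\tilde\chi\, t\in\mathcal A$ near each resonant edge separately, and patches. You instead divide out the explicit polynomial $p(\xi)=\prod_{\xi_*\in R}(e^{\im\xi}-e^{\im\xi_*})$ globally, and your partial-fraction decomposition of $1/p$ is the right move to avoid the trap of iterated division (dividing twice would need $\mathcal A^1$ regularity of the first quotient, which the single weight does not supply); you correctly check that the numerators satisfy $N_\pm(\xi_*)=W_{\xi_*}=0$ via $-\xi_*\equiv\xi_*$, that $\sin\xi/p$ is a trigonometric polynomial so $W_{0,\xi}/p\in\mathcal A$, and that $W_\xi/p$ is continuous and nonvanishing on $\T$, so one global application of Wiener's theorem finishes. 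Your route buys a single clean factorization $t=(W_{0,\xi}/p)(p/W_\xi)$ with an explicit statement of where the extra weight is spent; the paper's cutoff route buys the freedom to treat $\xi=0$ and $\xi=\pi$ entirely independently without discussing global nonvanishing of a quotient or simplicity of zeros in the $C^1$ sense. One cosmetic caveat: when the walk is exceptional only due to a resonance on the arc not under consideration, your set $R$ may be empty, in which case $p\equiv 1$ and your argument silently reduces to the generic one -- harmless, but worth saying.
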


\begin{proof}
We divide the proof for the cases $W_0 W_\pi \neq 0$ and $W_0W_\pi=0$.
The first case is simple.
Indeed, by Proposition \ref{prop:19}, we have $W_\xi \in \mathcal A$.
Further, from Lemma \ref{lem:14} and the assumption $W_0 W_\pi \neq 0$, we have $W_\xi\neq 0$ for all $\xi \in \T$.
Therefore, we have $W_\xi^{-1}\in \mathcal A$ (for the properties of Wiener algebra, see \cite{KatznelsonBook}).
By \eqref{42.2}, we have the conclusion in this case.

We now consider the nongeneric case
and assume $W_0=0$.
By Fourier expansion with respect to the $\xi$ variable, we have
\begin{align*}
\phi_\pm(x,\xi)=\sum_{n\in \Z} \hat \phi_\pm(x,n) e^{\im n \xi},\quad \hat \phi_\pm(x,n)=\frac{1}{2\pi}\int_\T \phi(x,\xi)e^{-\im n \xi}\,d\xi.
\end{align*}
By Proposition \ref{prop:19} we have $\{\hat \phi_\pm(x,n)\}_{n\in\Z}\in l^{1,1}$.
subtracting $\phi_\pm(x,0)$ we have
\begin{align*}
\phi_\pm(x,\xi)-\phi_\pm(x,0)&=\sum_{n\neq 0}\hat \phi_\pm(x,n)\(e^{\im n\xi}-1\)\\&=(e^{\im \xi}-1)\(\sum_{n>0}\hat \phi_\pm(x,n)\sum_{m=0}^{n-1}e^{\im m \xi}+\sum_{n<0}\hat \phi_\pm(x,n)\sum_{m=n+1}^{0}e^{\im m \xi}\)\\&
=(e^{\im \xi}-1)\(\sum_{m=0}^\infty \(\sum_{n>m}\hat \phi_\pm(x,n)\)e^{\im m \xi}+\sum_{m=-\infty}^0\(\sum_{n<m}\hat \phi_\pm(x,n)\)e^{\im m \xi}\)
\end{align*}
Now set
\begin{align*}
\psi_\pm(x,m)=\begin{cases} \sum_{n>m}\hat \phi_\pm(x,n) ,& m>0,\\
\sum_{n\neq 0} \hat \phi_\pm(x,n), & m=0,\\
\sum_{n< m} \hat \phi_\pm(x,n), & m<0.
\end{cases}
\end{align*}
Then, we have $\{\psi_\pm(x,m)\}_{m\in \Z}\in l^1$.
Indeed, by $\{\hat \phi_\pm(x,n)\}_{n\in\Z}\in l^{1,1}$,
\begin{align*}
\sum_{m>0}|\psi_\pm(x,m)|\leq \sum_{m>0}\sum_{n>m}|\hat \phi(x,n)|=\sum_{m>0}(m-1)|\hat \phi(x,m)|<\infty.
\end{align*}
Therefore, we have
\begin{align*}
\phi_\pm(x,\xi)-\phi_\pm(x,0)=(e^{\im \xi}-1)\Psi_\pm(x,\xi),
\end{align*}
where $\Psi_\pm(x,\cdot)\in \mathcal A$.

Now, since $W_0=0$, we have
\begin{align}
W_\xi&=\det(\phi_+(x,\xi)\ \phi_-(x,\xi))\nonumber\\&=(e^{\im \xi}-1)\(\det(\phi_+(x,\xi)\ \Psi_-(x,\xi))+\det(\Psi_+(x,\xi)\ \phi_-(x,0))\),\nonumber\\
&=(e^{\im \xi} -1)\tilde \Psi(\xi),\label{54}\\
\det(\phi_+(x,\mp\xi)\ \phi_-(x,\pm\xi))&=(e^{\im \xi}-1)\(\det(\phi_+(x,\mp\xi)\ \Psi_-(x,\pm\xi))+\det(\Psi_+(x,\mp\xi)\ \phi_-(x,0))\)\nonumber\\
&=(e^{\im \xi} -1)\tilde \Psi_\pm(\xi).\nonumber
\end{align}
Now, suppose that $\tilde \Psi(0)=0$.
Then, since
\begin{align*}
t(\xi)=\frac{W_{0,\xi}}{e^{\im \xi}-1} \frac{1}{\tilde \Psi(\xi)},
\end{align*}
and $|t(\xi)|\leq 1$, $|\frac{W_{0,\xi}}{e^{\im \xi}-1}|$ is bounded in $\xi\in \T$, we have a contradiction.
Therefore, 
\begin{align}\label{57}
\tilde \Psi(0)\neq 0.
\end{align}
Thus, let $\chi,  \tilde \chi, \tilde{\tilde \chi}\in C^\infty(\T)$ s.t.\ $\mathrm{supp}\chi\subset (-\pi/4,\pi/4)$, $\mathrm{supp}\tilde \chi\subset (-\pi/2,\pi/2)$, $\tilde \chi(\xi)=1$ in $\tilde\xi \in (-\pi/4,\pi/4)$ and $\mathrm{supp} \tilde{\tilde \chi}\subset \T\setminus (-\pi/4,\pi/4)$.
We choose $\tilde {\tilde }\chi$ so that $\tilde \chi(\xi)\tilde \Psi(\xi)+\tilde{\tilde \chi}(\xi)\neq 0$ for all $\xi \in \T$.
Then, we have $(\tilde \chi(\cdot)\tilde \Psi(\cdot)+\tilde{\tilde \chi}(\cdot))^{-1}\in \mathcal A$ and
\begin{align*}
\tilde \chi(\xi) t(\xi)=\frac{W_{0,\xi}}{e^{\im \xi}-1}  \chi(\xi)\frac{1}{\tilde \chi(\xi)\tilde \Psi(\xi)+\tilde{\tilde \chi}(\xi)} \in \mathcal A.
\end{align*}
We can do similar argument around $\xi=\pi$ if $W_\pi=0$.
Therefore, we have $t\in \mathcal A$.
Similarly, we have $r_\pm \in \mathcal A$.
\end{proof}

By the regularity of Jost solutions, we can show the finiteness of eigenvalues.
\begin{proof}[Proof of Proposition \ref{prop:finite}]
Since discrete spectrum can only accumulate at the edge of the essential spectrum, it suffices to show this does not happen.
Thus, it suffices to show $W_{\im \delta}\neq 0$ and $W_{\pi+\im \delta}\neq 0$ for $0<\delta<\delta_0$ for sufficiently small $\delta_0$ (we also have to consider the edge of $\sigma_-$ but it will be the same).

To show $W_{\im \delta}\neq 0$ for the generic case, it is enough to show $W_\cdot$ is continuous in $\T_{\delta_0,\geq 0}$ for some $\delta_0$ because $W_0W_\pi\neq 0$ in this case.
However, the continuity follows from the definition of $W_\xi$, Proposition \ref{prop:19} and the fact that $\mathcal A\hookrightarrow L^\infty(\T)$.

For the exceptional case, without loss of generality, it suffices to show $W_\xi$ is not zero near $\xi=0$.
By \eqref{54} and \eqref{57}, we see $W_\xi \neq 0$ near $0$.
\end{proof}

\section{Proof of Theorem \ref{thm:main}}\label{sec:6}

Before the proof of Theorem \ref{thm:main}, we introduce the van der Corput lemma by Egorova, Kopylova and Teschl \cite{EKT15JST}.

\begin{lemma}[Lemma 5.1 of \cite{EKT15JST}]\label{lem:21}
Let $-\pi\leq a< b\leq\pi$.
Set $I(t):=\int_a^b e^{\im t \phi(\xi)}g(\xi)\,d\xi$.
Assume $|\phi^{(m)}(\xi)|\gtrsim 1$ for some $m\geq 2$ where $\phi^{(m)}$ is the $m$th derivative of $\phi$.
Then, we have
\begin{align*}
|I(t)|\lesssim t^{-1/m}\|g\|_{\mathcal A}.
\end{align*}
\end{lemma}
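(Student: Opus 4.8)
The plan is to reduce the statement to the classical van der Corput lemma (as in Section 8.1 of \cite{SteinHarmonic}), applied mode-by-mode after expanding $g$ in its Fourier series. Recall the amplitude-free form of that lemma: if $\psi$ is real with $|\psi^{(m)}(\xi)|\geq c_0>0$ on $[a,b]$ for some $m\geq 2$, then $\left|\int_a^b e^{\im t\psi(\xi)}\,d\xi\right|\leq C_m c_0^{-1/m}\,t^{-1/m}$, where $C_m$ depends only on $m$ and in particular is independent of $\psi$, of the endpoints $a,b$, and of $t$; the dependence on the lower bound $c_0$ is obtained by rescaling $t\psi=(tc_0)(\psi/c_0)$.

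Since $g\in\mathcal A$ means $\|g\|_{\mathcal A}=\sum_{n\in\Z}|\hat g(n)|<\infty$, the Fourier series $g(\xi)=\sum_{n\in\Z}\hat g(n)e^{\im n\xi}$ converges absolutely and uniformly. First I would insert this expansion into $I(t)$ and interchange the (absolutely convergent) sum with the integral, obtaining
\begin{align*}
I(t)=\sum_{n\in\Z}\hat g(n)\int_a^b e^{\im t\phi(\xi)+\im n\xi}\,d\xi=\sum_{n\in\Z}\hat g(n)\int_a^b e^{\im t\psi_n(\xi)}\,d\xi,\qquad \psi_n(\xi):=\phi(\xi)+\tfrac{n}{t}\xi.
\end{align*}

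The key observation is that the linear term $\tfrac{n}{t}\xi$ is killed by differentiating twice, so $\psi_n^{(m)}=\phi^{(m)}$ for every $m\geq 2$. Hence the hypothesis $|\phi^{(m)}|\gtrsim 1$ yields $|\psi_n^{(m)}(\xi)|\gtrsim 1$ with a bound uniform in $n$, and the classical lemma applies to each term with one and the same constant, giving $\left|\int_a^b e^{\im t\psi_n(\xi)}\,d\xi\right|\lesssim t^{-1/m}$ uniformly in $n\in\Z$. Summing and using the triangle inequality then yields
\begin{align*}
|I(t)|\leq \sum_{n\in\Z}|\hat g(n)|\left|\int_a^b e^{\im t\psi_n(\xi)}\,d\xi\right|\lesssim t^{-1/m}\sum_{n\in\Z}|\hat g(n)|=t^{-1/m}\|g\|_{\mathcal A},
\end{align*}
which is the claim.

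There is no hard analytic obstacle here; the entire content is the reduction to constant-amplitude oscillatory integrals, and the crux I would flag is precisely the uniformity in the Fourier mode $n$. The point that a linear phase shift does not alter derivatives of order $\geq 2$ is exactly what lets van der Corput be invoked with a constant independent of $n$, and thereby lets us replace the total-variation (or $C^1$) control of the amplitude in the classical statement by the cleaner Wiener-algebra norm $\|g\|_{\mathcal A}$ — the form needed to feed the uniform Jost-solution bounds of Proposition \ref{prop:19} into the dispersive estimate. The only two lines requiring care are the justification of term-by-term integration (immediate from absolute convergence of $\sum_n|\hat g(n)|$) and the restriction $m\geq 2$, which is what makes the linear-perturbation trick legitimate: for $m=1$ one would instead need monotonicity of the phase, which a shift by $\tfrac{n}{t}\xi$ could destroy.
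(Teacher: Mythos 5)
Your proof is correct and is essentially the same argument as the one behind the cited result: the paper states this lemma without proof, quoting Lemma 5.1 of \cite{EKT15JST}, whose proof is precisely your reduction --- expand $g$ in its absolutely convergent Fourier series, absorb each mode $e^{\im n\xi}$ into a linear phase shift $\psi_n(\xi)=\phi(\xi)+\tfrac{n}{t}\xi$ that leaves all derivatives of order $\geq 2$ unchanged, and apply the classical amplitude-free van der Corput lemma (section 8.1 of \cite{SteinHarmonic}) with a constant uniform in $n$ before summing against $\|g\|_{\mathcal A}$. Nothing further is needed; your flagged points (uniformity in $n$, term-by-term integration, and the role of $m\geq 2$) are exactly the right ones.
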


We are now in the position to prove the main theorem of this paper.

\begin{proof}[Proof of Theorem \ref{thm:main}]
First, as \eqref{25.2} by Lemma \ref{lem:13.1}, we have
\begin{align*}
&J_{\mathrm{VE}}U^t P_+(U)J_{\mathrm{EV}}(x,y)=-\frac{1}{2\pi}\int_\T e^{\im t \(\lambda(\xi)+\frac{|x-y|}{t}\xi\)} \frac{\lambda'(\xi)}{W_{\xi}}\\&\quad 
\times\(m_-(x,\xi){}^tm_+(y,\xi)\sigma_1\begin{pmatrix} 1_{\leq y}(x) & 0 \\ 0 & 1_{<y}(x)\end{pmatrix}+m_+(x,\xi){}^tm_-(y,\xi) \sigma_1\begin{pmatrix} 1_{> y}(x) & 0 \\ 0 & 1_{\geq y}(x)\end{pmatrix}\).
\end{align*}
Without loss of generality, we can assume $x\geq y$. Then we have 
\begin{align*}
&J_{\mathrm{VE}}U^t P_+(U)J_{\mathrm{EV}}(x,y)\\&=-\frac{1}{2\pi}
\int_\T e^{\im t \(\lambda(\xi)+\frac{|x-y|}{t}\xi\)} \frac{\lambda'(\xi)}{W_{0,\xi}}t(\xi)\(m_-(x,\xi){}^tm_+(x,\xi)\sigma_1\begin{pmatrix}\delta_{x,y} & 0\\ 0 & 0 \end{pmatrix}	+m_+(x,\xi){}^tm_-(y,\xi)\sigma_1\) \,d\xi,
\end{align*}
where, we have substituted the relation $W_\xi^{-1}=t(\xi)W_{0,\xi}^{-1}$ given in \eqref{42.2}.

For the case $y\leq 0\leq x$, by Proposition \ref{prop:19}, $$m_-(x,\xi){}^tm_+(x,\xi)\sigma_1\begin{pmatrix}\delta_{x,y} & 0\\ 0 & 0 \end{pmatrix}	+m_+(x,\xi){}^tm_-(y,\xi)\sigma_1$$ is uniformly bounded in $\mathcal A$
(Notice that for the first term is $0$ expect $x=y=0$).
Further, for $\phi(\xi)=\lambda(\xi)+\frac{|x-y|}{t}\xi$, we have $\min(|\phi''(\xi)|,|\phi'''(\xi)|)\gtrsim 1$ by \eqref{25.3} (see \cite{MSSSS18DCDS}), and $\lambda'(\xi)W_{0,\xi}^{-1}\in C^\omega(\T;\R)\subset \mathcal A$ by \eqref{25.1} and \eqref{25.3}, $t\in \mathcal A$ by Proposition \ref{prop:20}.
Therefore, we have the estimate from Lemma \ref{lem:21} in this case.

For the case $0\leq y\leq x$, $m_-(y,\xi)$ may not be bounded.
However, by \eqref{40}, we have
\begin{align*}
t(\xi) m_-(y,\xi)=m_+(y,-\xi)+r_+(\xi)e^{2\im \xi x}m_+(y,\xi).
\end{align*}
Thus, we have the estimate from Lemma \ref{lem:21} combined with Propositions \ref{prop:19} and  \ref{prop:20}.
The case $y\leq x\leq 0$ can also be handle by the same argument.
\end{proof}

\appendix

\section{Proof of Lemma \ref{lem:4.2}}\label{proof:tri}

We prove Lemma \ref{lem:4.2}. 
\begin{proof}[Proof of Lemma \ref{lem:4.2}]
We first show $\cos:\T_+\to \C\setminus[-1,1]$ is a biholomorphism.
We track the orbit $C_\delta=\{\cos (s+\im \delta)\ |\ s\in\R\}$ for $\delta>0$.
Since
\begin{align}
\cos (s+\im \delta)&=\cosh \delta \cos s  - \im \sinh \delta \sin s,\label{a22}
\end{align}
we see that this orbit is an ellipse which encircles around $[-1,1]$ in clockwise direction (when the pameter $s$ is increasing).
It is easy to see that $\cos$ is a surjection from $\Tp $ to $\C\setminus[-1,1]$.
To show that it is an injection, we first claim that for $0<\delta_1<\delta_2$, two orbits $C_{\delta_1}$, $C_{\delta_2}$ do not intersect.
Suppose $\cos(s_1+\im\delta_1)=\cos(s_2+\im\delta_2)$ for $0<\delta_1<\delta_2$.
Then,
\begin{align*}
\cos s_1=\frac{\cosh \delta_2}{\cosh \delta_1}\cos s_2,\quad \sin s_1=\frac{\sinh \delta_2}{\sinh \delta_1}\sin s_2.
\end{align*}
Thus,
\begin{align*}
1=\cos^2 s_1+\sin^2 s_1=1+\(\frac{\cosh^2 \delta_2}{\cosh^2 \delta_1}-1\)\cos^2 s_2+\(\frac{\sinh^2 \delta_2}{\sinh^2 \delta_1}-1\)\sin^2 s_2.
\end{align*}
However, r.h.s.\ is always larger than $1$, which is absurd.
Finally, $\partial_s\cos(s+\im \delta)\neq 0$ for all $s$, so we have the claim, since $\cos$ is holomorphic, it becomes automatically biholomorphic if it is an injection.

We next consider the extension of $\cos^{-1}$ through the cut $(-1,1)$. 
Notice that $\cos$ maps $\T_{+,\delta}:=\{z\in \Tp\ |\ 0<\Im z<\delta\}$ into $D_\delta$, where $D_\delta$ is the open set with the boundary consisting of $[-1,1]$ and $C_\delta$.
Therefore, if $\xi_n\to \xi\in (-1,1)$, for arbitrary $\delta>0$ we have $\xi_n\in D_\delta$ for sufficiently large $n$ and therefore $\cos^{-1}\xi_n \in \T_{+,\delta}$.
This implies $\Im \cos^{-1}\xi_n\to 0$ as $n\to \infty$.
Now, assume $\xi_n\to \xi\in (-1,1)$ and $\Im \xi_n>0$.
Set $\cos^{-1}\xi_n=s_n+\im \delta_n$.
Then, we have $\delta_n\to 0$.
Since $\Im \xi_n>0$, by \eqref{a22}, we see $-\pi<s_n<0$.
By \eqref{a22}, we have
\begin{align*}
|\Re \xi_n-\xi|=|\cosh \delta_n \cos s_n-\xi|.
\end{align*}
Thus, $|\cos s_n-\xi|\to 0$ as $n\to \infty$ and we have $s_n\to -\mathrm{arccos}\xi$.
Therefore we see that $\cos^{-1}$ can be continuously extended to $(-1,1)$ from above and it becomes $\R$ valued in $(-1,1)$.
By a similar manner we can also continuously extend $\cos^{-1}$ from below. 
Thus, we can extend $\cos^{-1}$ though the cut $[-1,1]$ from above by setting 
\begin{align*}
\cos^{-1}_+ z:=\begin{cases} \cos^{-1} z & \Re z\in (-1,1),\ \Im z\geq 0,\\
\overline{\cos^{-1}\bar z}&\Re z\in (-1,1),\ \Im z< 0,
\end{cases}
\end{align*}
and from below by setting
\begin{align*}
\cos^{-1}_- z:=\begin{cases} \overline{\cos^{-1}\bar z}  & \Re z\in [-1,1],\ \Im z> 0,\\
   \cos^{-1} z&\Re z\in [-1,1],\ \Im z\leq 0.
\end{cases}
\end{align*}
By the Schwarz reflection principle (see, Theorem 5.6 of \cite{SSComplex}), $\cos^{-1}_\pm$ is holomorphic in
$\{ z\in \C |\ |\Im z|<\delta_0\}$ and further $\cos\(\cos^{-1}_\pm z\)=z$.
\end{proof}

\section{Decoupling when $\beta=0$ and gauge transformation}\label{sec:decoup}
When $\beta=0$ ($\Leftrightarrow |{\alpha}|=1$), the system decouples.
Set $\chi_{x_0,+}\in\mathcal L(\mathcal H)$ by
\begin{align*}
\(\chi_{x_0,+}u\)(x)=\begin{cases} u(x) & x>x_0\\ \begin{pmatrix}0 \\  u_{\downarrow}(x_0)\end{pmatrix} & x=x_0\\ 0 & x<x_0 \end{cases}\quad \text{where}\quad u(x)=\begin{pmatrix} u_{\uparrow}(x) \\  u_{\downarrow}(x)\end{pmatrix}.
\end{align*}
Obviously, $\chi_{x_0,+}$ is an orthogonal projection on $\mathcal H$.

\begin{proposition}\label{prop:decouplingbyref}
	Let $\beta(x_0)=0$. Then, $[U,\chi_{x_0,+}]=0$.
\end{proposition}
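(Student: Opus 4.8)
The plan is to prove the stronger statement that the orthogonal splitting $\mathcal H=\mathrm{Ran}(\chi_{x_0,+})\oplus\mathrm{Ran}(1-\chi_{x_0,+})$ reduces $U$, i.e.\ that $U$ maps each summand into itself. Once both inclusions $U\,\mathrm{Ran}(\chi_{x_0,+})\subseteq \mathrm{Ran}(\chi_{x_0,+})$ and $U\,\mathrm{Ran}(1-\chi_{x_0,+})\subseteq \mathrm{Ran}(1-\chi_{x_0,+})$ are in hand, the commutation $[U,\chi_{x_0,+}]=0$ follows in one line: writing $u=\chi_{x_0,+}u+(1-\chi_{x_0,+})u$ and comparing $U\chi_{x_0,+}u$ with $\chi_{x_0,+}Uu$, both equal $U\chi_{x_0,+}u$. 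No unitarity is needed for this deduction, only the two invariances. Explicitly, $\mathrm{Ran}(\chi_{x_0,+})=\{u:\ u_\uparrow(x)=0\ (x\le x_0),\ u_\downarrow(x)=0\ (x\le x_0-1)\}$, with complement $\{u:\ u_\downarrow(x)=0\ (x\ge x_0),\ u_\uparrow(x)=0\ (x\ge x_0+1)\}$.

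The conceptual reason, and the cleanest route, is to pass to the edge picture via $J_{\mathrm{VE}}$. Since $J_{\mathrm{VE}}u(x)=(u_\downarrow(x-1),u_\uparrow(x))^\top$, the subspace $\mathrm{Ran}(\chi_{x_0,+})$ corresponds exactly to $\{v:\ v(x)=0\ \text{for } x\le x_0\}$, so $J_{\mathrm{VE}}\chi_{x_0,+}J_{\mathrm{EV}}$ is the projection onto the edges lying to the right of the cut between $x_0$ and $x_0+1$. The point is that $\mathcal C=J_{\mathrm{VE}}UJ_{\mathrm{EV}}$ is nearest-neighbour (tridiagonal) by Lemma \ref{lem:cmv}, and the only entries bridging the cut are $V_{x_0,x_0+1}$ and $V_{x_0+1,x_0}$, both of which carry the factor $e^{\im\theta(x_0)}\rho(x_0)$ with $\rho(x_0)=|\beta(x_0)|=0$. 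Hence $V_{x_0,x_0+1}=V_{x_0+1,x_0}=0$, $\mathcal C$ is block diagonal with respect to $\{x\le x_0\}\oplus\{x\ge x_0+1\}$, and both blocks are invariant; conjugating back by $J_{\mathrm{EV}}$ and $J_{\mathrm{VE}}$ yields the two invariances above.

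For a self-contained computation avoiding the CMV formalism, one verifies the two invariances directly from $(Uu)_\uparrow(x)=e^{\im\theta(x-1)}(\beta(x-1)u_\uparrow(x-1)+\overline{\alpha}(x-1)u_\downarrow(x-1))$ and $(Uu)_\downarrow(x)=e^{\im\theta(x+1)}(-\alpha(x+1)u_\uparrow(x+1)+\overline{\beta}(x+1)u_\downarrow(x+1))$. For $u\in\mathrm{Ran}(\chi_{x_0,+})$ every term entering $(Uu)_\uparrow(x)$ with $x\le x_0$ and $(Uu)_\downarrow(x)$ with $x\le x_0-1$ vanishes, the only nontrivial case being $(Uu)_\downarrow(x_0-1)=e^{\im\theta(x_0)}(-\alpha(x_0)u_\uparrow(x_0)+\overline{\beta}(x_0)u_\downarrow(x_0))$, which is $0$ because $u_\uparrow(x_0)=0$ and $\beta(x_0)=0$. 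The complementary inclusion is symmetric, its decisive term being $(Uu)_\uparrow(x_0+1)=e^{\im\theta(x_0)}(\beta(x_0)u_\uparrow(x_0)+\overline{\alpha}(x_0)u_\downarrow(x_0))$, zero for the same reason. This version works for general $\beta$ and uses only $\beta(x_0)=0$.

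I do not expect a genuine obstacle. The only points needing care are the asymmetric bookkeeping in the mixed up/down coordinates — in particular that $\chi_{x_0,+}$ retains $u_\downarrow(x_0)$ but discards $u_\uparrow(x_0)$, which is precisely what matches the cut in the edge picture — and the reminder to check invariance of \emph{both} complementary half-spaces rather than of $\mathrm{Ran}(\chi_{x_0,+})$ alone, since for a unitary operator invariance of a single subspace does not by itself force commutation with the projection.
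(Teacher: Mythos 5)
Your proof is correct, but it is organized differently from the paper's. The paper verifies $([U,\chi_{x_0,+}]u)(x)=0$ directly for arbitrary $u\in\mathcal H$, splitting into the cases $x+1\le x_0$, $x-1\ge x_0$ and $x=x_0$ after recording in \eqref{eq:decoup2} how $\beta(x_0)=0$ makes $(C\chi_{x_0,+}u)_\uparrow$ and $(C\chi_{x_0,+}u)_\downarrow$ agree with $(Cu)_\uparrow$, $(Cu)_\downarrow$ on the appropriate half-lines; you instead prove that both $\mathrm{Ran}(\chi_{x_0,+})$ and its orthocomplement are $U$-invariant and deduce commutation abstractly. The decisive cancellations are identical in both arguments --- your boundary terms $(Uu)_\downarrow(x_0-1)$ and $(Uu)_\uparrow(x_0+1)$, each carrying a factor $\beta(x_0)$ or $\overline{\beta}(x_0)$, are exactly what the paper's condition \eqref{eq:decoup2} encodes --- so your direct verification is essentially a repackaging of the paper's computation with the two half-space ranges made explicit. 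What is genuinely different is your edge-picture argument: conjugating by $J_{\mathrm{VE}}$ turns $\chi_{x_0,+}$ into the sharp spatial cut-off $1_{\ge x_0+1}$, and Lemma \ref{lem:cmv} shows the only entries of $\mathcal C$ bridging the cut are $V_{x_0,x_0+1}$ and $V_{x_0+1,x_0}$, both proportional to $\rho(x_0)=|\beta(x_0)|=0$, so $\mathcal C$ is block diagonal; this explains structurally why the asymmetric definition of $\chi_{x_0,+}$ (retaining $u_\downarrow(x_0)$ but discarding $u_\uparrow(x_0)$) is the correct one, a point the paper's case check verifies without illuminating. One small caveat: Lemma \ref{lem:cmv} is stated after the gauge reduction $\beta(x)=\rho(x)\ge 0$ of Lemma \ref{lem:Gauge}, so to invoke it for general $\beta$ you should note that the gauge transformation, being diagonal in position and chirality, commutes with $\chi_{x_0,+}$ and preserves $|\beta(x_0)|$ --- or simply rely on your general-$\beta$ direct computation, which needs no reduction. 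Your closing caution is also well taken: for a unitary, invariance of a single subspace does not imply commutation with its projection (the bilateral shift on $l^2(\Z)$ with the subspace supported on $\Z_{\geq 0}$ is the standard counterexample), so checking both halves, as you do, is genuinely necessary.
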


\begin{proof}
	Take any $u\in \mathcal H$.
	It suffices to show that $\([U,\chi_{x_0,+}]u\)(x)=0$ for all $x\in \Z$.
	We have
	\begin{align}\label{eq:decoup1}
	\([U,\chi_{x_0,+}]u\)(x)=\begin{pmatrix} (C\chi_{x_0,+}u)_{\uparrow}(x-1)\\ (C\chi_{x_0,+}u)_{\downarrow}(x+1)
	\end{pmatrix}-\(\chi_{x_0,+} \begin{pmatrix} (Cu)_{\uparrow}(\cdot-1)\\ (Cu)_{\downarrow}(\cdot+1)
	\end{pmatrix} \)(x),
	\end{align}
	and
	\begin{align}\label{eq:decoup2}
	(C\chi_{x_0,+}u)_{\uparrow}(x)=\begin{cases}(Cu)_{\uparrow}(x) & x\geq x_0\\ 0 & x<x_0 \end{cases},\quad (C\chi_{x_0,+}u)_{\downarrow}(x)=\begin{cases}(Cu)_{\downarrow}(x) & x> x_0\\ 0 & x \leq x_0 \end{cases},
	\end{align}
	where we have used $\beta(x_0)=0$.
	
	First, if $x+1\leq x_0$, both term of the r.h.s.\ of \eqref{eq:decoup1} are $0$.
	Next, if $x-1\geq x_0$, then the two terms are the same, so they cancel each other out.
	If $x=x_0$, by \eqref{eq:decoup2} we have
	\begin{align*}
	\([U,\chi_{x_0,+}]u\)(x_0)&=\begin{pmatrix} (C\chi_{x_0,+}u)_{\uparrow}(x_0-1)\\ (C\chi_{x_0,+}u)_{\downarrow}(x_0+1)
	\end{pmatrix}-\(\begin{pmatrix} 0 \\ (Cu)_{\downarrow}(x_0+1) \end{pmatrix}\)=0.
	\end{align*}
	Therefore, we have the conclusion.
\end{proof}

By Proposition \ref{prop:decouplingbyref}, we see that the component of $\mathrm{supp}\chi_{x_0,+}$ never interact with the component of $\mathrm{supp}\chi_{x_0,-}$ where $\chi_{x_0,-}=1-\chi_{x_0,+}$.
That is,
\begin{align*}
\chi_{x_0,\mp}U^t \chi_{x_0,\pm}u_0=0.
\end{align*}
Therefore, one can consider the dynamics of quantum walks separately in $\mathrm{supp}\chi_{x_0,\pm}$.

\section*{Acknowledgments}  
M.M.\ was supported by the JSPS KAKENHI Grant Numbers JP15K17568, JP17H02851 and\\ JP17H02853, 19K03579, G19KK0066A,.
H.S. was supported by JSPS KAKENHI Grant Number JP17K05311.
E.S. acknowledges financial support from JSPS 
the Grant-in-Aid of Scientific Research (C) Grant No. 19K03616 and Reserch Origin for Dressed Photon.
A. S. was supported by JSPS KAKENHI Grant Number JP26800054\\ and JP18K03327. 
K.S acknowledges JSPS the Grant-in-Aid for Scientific Research (C) 26400156 and 18K03354.

\medskip

Masaya Maeda, Hironobu Sasaki

Graduate School of Science,
Chiba University,
Chiba 263-8522, Japan

{\it E-mail Address}: {\tt maeda@math.s.chiba-u.ac.jp, sasaki@math.s.chiba-u.ac.jp}

\medskip

Etsuo Segawa

Graduate School of Information Sciences, 
Tohoku University,
Sendai 980-8579, Japan

{\it E-mail Address}: {\tt e-segawa@m.tohoku.ac.jp}

\medskip

Akito Suzuki

Division of Mathematics and Physics,
Faculty of Engineering,
Shinshu University,
Nagano 380-8553, Japan

{\it E-mail Address}: {\tt akito@shinshu-u.ac.jp}

\medskip

Kanako Suzuki

College of Science, Ibaraki University,
2-1-1 Bunkyo, Mito 310-8512, Japan

{\it E-mail Address}: {\tt kanako.suzuki.sci2@vc.ibaraki.ac.jp}


\begin{thebibliography}{10}


\bibitem{ABNVW01}
A.~Ambainis, E.~Bach, A.~Nayak, A.~Vishwanath, and J.~Watrous,
  \emph{One-dimensional quantum walks}, Proceedings of the {T}hirty-{T}hird
  {A}nnual {ACM} {S}ymposium on {T}heory of {C}omputing, ACM, New York, 2001,
  pp.~37--49. 


\bibitem{AKR05Proc}
A.~Ambainis, J.~Kempe, and A.~Rivosh, \emph{Coins make quantum
  walks faster}, Proceedings of the {S}ixteenth {A}nnual {ACM}-{SIAM}
  {S}ymposium on {D}iscrete {A}lgorithms, ACM, New York, 2005, pp.~1099--1108.


\bibitem{AO13PRB}
J.~K.~Asb\'oth and H.~Obuse, \emph{Bulk-boundary correspondence for
  chiral symmetric quantum walks}, Phys. Rev. B \textbf{88} (2013), 121406.


\bibitem{ABJ15RMP}
J.~Asch, O.~Bourget, and A.~Joye, \emph{Spectral stability of
  unitary network models}, Rev. Math. Phys. \textbf{27} (2015), no.~7, 1530004,
  22. 


\bibitem{Bambusi13CMPdb}
D.~Bambusi, \emph{Asymptotic stability of breathers in some {H}amiltonian
  networks of weakly coupled oscillators}, Comm. Math. Phys. \textbf{324}
  (2013), no.~2, 515--547. 




\bibitem{CGMV10CPAM}
M.~J.~Cantero, F.~A.~Gr\"unbaum, L.~Moral and L.~Vel\'azquez, \emph{Matrix-valued {S}zeg\H o polynomials and quantum random walks}, Comm. Pure Appl. Math.\ \textbf{63} (2010), no.~4, 464--507.

\bibitem{CMV03LAA}
M.~J.~Cantero, L.~Moral and L.~Vel\'azquez, \emph{Five-diagonal matrices and zeros of orthogonal polynomials on
              the unit circle}, Linear Algebra Appl.\ \textbf{362} (2003), 29--56.


\bibitem{CGSVWW16JPA}
C.~Cedzich, F.~A.~ Gr\"unbaum, C.~Stahl, L.~Vel\'azquez, A.~H.~Werner, and R.~F.~Werner,
  \emph{Bulk-edge correspondence of one-dimensional quantum walks}, Journal of
  Physics A: Mathematical and Theoretical \textbf{49} (2016), no.~21, 21LT01.

\bibitem{Childs09PRL}
A.~M.~Childs, \emph{Universal computation by quantum walk}, Phys. Rev.
  Lett. \textbf{102} (2009), no.~18, 180501, 4. 


\bibitem{CT09SIAM}
S.~Cuccagna and M.~Tarulli,  \emph{On asymptotic stability of standing waves of discrete {S}chr\"odinger equation in {$\Bbb Z$}}, SIAM J. Math. Anal.\ \textbf{41} (2009), no.~3, 861--885.



%
%






\bibitem{EKT15JST}
I.~Egorova, E.~Kopylova and G.~Teschl, \emph{Dispersion estimates for one-dimensional discrete
              {S}chr\"odinger and wave equations}, J. Spectr. Theory \textbf{5} (2015), no.~4, 663--696.
  (2016).

\bibitem{Endo16IIS}
S.~Endo, T.~Endo, N.~Konno, E.~Segawa, M.~Takei, 
\emph{Weak limit theorem of a two-phase quantum walk with one defect}, 
Interdiscip. Inf. Sci. \textbf{22} (2016), no.~1, 17--29. 

\bibitem{EKOS17JPA}
T.~Endo, N.~Konno, H.~Obuse, and E.~Segawa, \emph{Sensitivity of
  quantum walks to a boundary of two-dimensional lattices: approaches based on
  the cgmv method and topological phases}, Journal of Physics A: Mathematical
  and Theoretical \textbf{50} (2017), no.~45, 455302.


\bibitem{FH10Book}
R.~P.~Feynman and A.~R.~Hibbs, \emph{Quantum mechanics and path
  integrals}, emended ed., Dover Publications, Inc., Mineola, NY, 2010, Emended
  and with a preface by Daniel F. Styer.



\bibitem{FO17JFA}
J.~Fillman and D.~C. Ong, \emph{Purely singular continuous spectrum for
  limit-periodic cmv operators with applications to quantum walks}, J. Funct. Anal.\ \textbf{272} (2017), no.~12, 5107--5143.

\bibitem{FFS18JMP}
T.~Fuda, D.~Funakawa, A.~Suzuki, 
\emph{Localization for a one-dimensional split-step quantum walk with bound states robust against perturbations},
J. Math. Phys. \textbf{59} (2018), 082201. 
  
\bibitem{FFS18}
T.~Fuda, D.~Funakawa, A.~Suzuki, 
\emph{Weak limit theorem for a one-dimensional split-step quantum walk}, Rev. Roumaine Math. Pures Appl. \textbf{64} (2019), no.~2-3, 157--165.
  
  
\bibitem{GJS04}
G.~Grimmett, S.,~Janson, P.~Scudo,  
\emph{Weak limits for quantum random walks}, 
Phys. Rev. E \textbf{69} (2004), 026119.  

\bibitem{GTB16PRA}
Y.~Gerasimenko, B.~Tarasinski, and C.~W.~J. Beenakker, \emph{Attractor-repeller
  pair of topological zero modes in a nonlinear quantum walk}, Phys. Rev. A
  \textbf{93} (2016), 022329.


\bibitem{GZ06JAT}
F.~Gesztesy and M.~Zinchenko, \emph{Weyl-{T}itchmarsh theory for {CMV}
  operators associated with orthogonal polynomials on the unit circle}, J.
  Approx. Theory \textbf{139} (2006), no.~1-2, 172--213. 


\bibitem{GohbergBookI}
I.~
Gohberg, 
S.~
Goldberg, and 
M.
A.~Kaashoek, \emph{Classes of
  linear operators. {V}ol. {I}}, Operator Theory: Advances and Applications,
  vol.~49, Birkh\"auser Verlag, Basel, 1990. 



\bibitem{GS04CMP}
M. Goldberg and W. Schlag, \emph{Dispersive estimates for {S}chr\"odinger operators in dimensions one and three}, Comm. Math. Phys.
  \textbf{251} (2004), no.~1, 157--178. 

\bibitem{GNVW12CMP}
D.~Gross, V.~Nesme, H.~Vogts, and R.~F. Werner, \emph{Index theory of one
  dimensional quantum walks and cellular automata}, Communications in
  Mathematical Physics \textbf{310} (2012), no.~2, 419--454.

\bibitem{Gudder88Book}
S.~P.~Gudder, \emph{Quantum probability}, Probability and Mathematical
  Statistics, Academic Press, Inc., Boston, MA, 1988. 


\bibitem{JSS91CPAM}
J.-L. Journ\'e, A.~Soffer, and C.~D. Sogge, \emph{Decay estimates for
  {S}chr\"odinger operators}, Comm. Pure Appl. Math. \textbf{44} (1991), no.~5,
  573--604. 

\bibitem{Karski09Science}
M.~Karski, L.~F{\"o}rster, J.-M.~Choi, A.~Steffen, W.~Alt,
  D.~Meschede, and A.~Widera, \emph{Quantum walk in position space with
  single optically trapped atoms}, Science \textbf{325} (2009), no.~5937,
  174--177.

\bibitem{KatznelsonBook}
Y.~Katznelson, \emph{An introduction to harmonic analysis}, third ed.,
  Cambridge Mathematical Library, Cambridge University Press, Cambridge, 2004.


\bibitem{KT98AJM}
M.~Keel and T.~Tao, \emph{Endpoint {S}trichartz estimates}, Amer. J.
  Math. \textbf{120} (1998), no.~5, 955--980. 


\bibitem{KPS09SIAM}
P.~G. Kevrekidis, D.~E. Pelinovsky, and A.~Stefanov, \emph{Asymptotic stability
  of small bound states in the discrete nonlinear {S}chr\"odinger equation},
  SIAM J. Math. Anal. \textbf{41} (2009), no.~5, 2010--2030. 


\bibitem{Kitaev06AP}
A.~Kitaev, \emph{Anyons in an exactly solved model and beyond}, Annals of
  Physics \textbf{321} (2006), no.~1, 2 -- 111, January Special Issue.


\bibitem{Kitagawa12QIP}
T.~Kitagawa, \emph{Topological phenomena in quantum walks: elementary
  introduction to the physics of topological phases}, Quantum Information
  Processing \textbf{11} (2012), no.~5, 1107--1148.

\bibitem{KRBD10PRA}
T.~Kitagawa, M.~~S.~Rudner, E.~Berg, and E.~Demler, \emph{Exploring
  topological phases with quantum walks}, Phys. Rev. A \textbf{82} (2010),
  033429.


\bibitem{Konno02QIP}
N.~Konno, \emph{Quantum random walks in one dimension}, Quantum Inf.
  Process. \textbf{1} (2002), no.~5, 345--354 (2003). 


\bibitem{LKN15PRA}
C.~-W.~Lee, P.~Kurzy\ifmmode~\acute{n}\else \'{n}\fi{}ski, and
  H.~Nha, \emph{Quantum walk as a simulator of nonlinear dynamics:
  Nonlinear dirac equation and solitons}, Phys. Rev. A \textbf{92} (2015),
  052336.


\bibitem{MSSSS18DCDS}
M.~Maeda, H.~Sasaki, E.~Segawa, A.~Suzuki, and K.~Suzuki,
  \emph{Scattering and inverse scattering for nonlinear quantum walks},
  Discrete Contin. Dyn. Syst. \textbf{38} (2018), no.~7, 3687--3703.

\bibitem{MSSSS18QIP}
M.~Maeda, H.~Sasaki, E.~Segawa, A.~Suzuki, and K.~Suzuki, \emph{Weak limit theorem for a nonlinear quantum walk}, Quantum Inf.
  Process. \textbf{17} (2018), no.~9, 17:215. 




\bibitem{MSSSS19JPC}
Masaya Maeda, Hironobu Sasaki, Etsuo Segawa, Akito Suzuki, and Kanako Suzuki,
  \emph{Dynamics of solitons for nonlinear quantum walks}, Journal of Physics
  Communications \textbf{3} (2019), no.~7, 075002.

\bibitem{MW14Book}
K.~Manouchehri and J.~Wang, \emph{Physical implementation of quantum
  walks}, Quantum Science and Technology, Springer, Heidelberg, 2014.

\bibitem{Meyer96JSP}
D.~A.~Meyer, \emph{From quantum cellular automata to quantum lattice gases},
  J. Statist. Phys. \textbf{85} (1996), no.~5-6, 551--574. 


%
%
%
%
%
%
\bibitem{MP12DCDS}
T.~Mizumachi and D.~Pelinovsky, \emph{On the asymptotic stability of
  localized modes in the discrete nonlinear {S}chr\"odinger equation}, Discrete
  Contin. Dyn. Syst. Ser. S \textbf{5} (2012), no.~5, 971--987. 
%
\bibitem{NPR07PRA}
C.~Navarrete-Benlloch, A.~P\'erez, and Eugenio Rold\'an, \emph{Nonlinear
  optical galton board}, Phys. Rev. A \textbf{75} (2007), 062333.
%
\bibitem{PS08JMP}
D. E. Pelinovsky and A. Stefanov, \emph{On the spectral theory and dispersive estimates for a discrete {S}chr\"odinger equation in one dimension}, J. Math. Phys. \textbf{49} (2008), no.~11, 113501, 17.


\bibitem{Portugal13Book}
R.~Portugal, \emph{Quantum walks and search algorithms}, Quantum Science
  and Technology, Springer, New York, 2013. 


\bibitem{SCPGJS11}
A.~Schreiber, K.~N. Cassemiro, V.~Potocek, A.~G\'abris, I.~Jex, and Ch.
  Silberhorn, \emph{Photonic quantum walks in a fiber based recursion loop},
  AIP Conference Proceedings \textbf{1363} (2011), no.~1, 155--158.

\bibitem{Schreiberetal12Science}
A.~Schreiber, A.~G{\'a}bris, P.~P.~Rohde, K.~Laiho, M.~{\v S}tefa{\v n}{\'a}k, V.~Poto{\v c}ek, C.~Hamilton, I.~Jex,
  and C.~Silberhorn, \emph{A 2d quantum walk simulation of two-particle
  dynamics}, Science \textbf{336} (2012), no.~6077, 55--58.



\bibitem{SS16QS}
E.~Segawa and A.~Suzuki, 
\emph{Generator of an abstract quantum walk},
Quantum Stud. Math. Found. \textbf{3} (2016), no.~1, 11--30. 


\bibitem{SST17LMP}
S.~Richard, A.~Suzuki, R.~Tiedra de Aldecoa, \emph{Quantum walks with an anisotropic coin I: spectral theory}, Lett. Math. Phys. \textbf{108},  (2018), 331--357.

\bibitem{SST18LMP}
S.~Richard, A.~Suzuki, R.~Tiedra de Aldecoa, \emph{Quantum walks with an anisotropic coin II: scattering theory}, Lett. Math. Phys. \textbf{109}, (2019) 61--88.

%




\bibitem{SW92JDE}
A.~Soffer and M.~I. Weinstein, \emph{Multichannel nonlinear scattering for
  nonintegrable equations. {II}. {T}he case of anisotropic potentials and
  data}, J. Differential Equations \textbf{98} (1992), no.~2, 376--390.



\bibitem{SK05N}
A.~Stefanov and P.~G.~Kevrekidis, \emph{Asymptotic behaviour of
  small solutions for the discrete nonlinear {S}chr\"odinger and
  {K}lein-{G}ordon equations}, Nonlinearity \textbf{18} (2005), no.~4,
  1841--1857. 

\bibitem{SSComplex}
E.~M.~Stein and R.~Shakarchi, \emph{Complex analysis}, Princeton Lectures in Analysis, vol.~2, Princeton University Press, Princeton, NJ, 2003.


\bibitem{SteinHarmonic}
E.~M.~Stein, \emph{Harmonic analysis: real-variable methods, orthogonality,
  and oscillatory integrals}, Princeton Mathematical Series, vol.~43, Princeton
  University Press, Princeton, NJ, 1993, With the assistance of Timothy S.
  Murphy, Monographs in Harmonic Analysis, III. 



\bibitem{ST12JFA}
T.~Sunada and T.~Tate, \emph{Asymptotic behavior of quantum walks
  on the line}, J. Funct. Anal. \textbf{262} (2012), no.~6, 2608--2645.


\bibitem{Suzuki16QIP}
A.~Suzuki, \emph{Asymptotic velocity of a position-dependent quantum walk},
  Quantum Inf. Process. \textbf{15} (2016), no.~1, 103--119. 


\bibitem{Weder00JFA}
R.~Weder, \emph{{$L^p$}-{$L^{\dot p}$} estimates for the {S}chr\"odinger
              equation on the line and inverse scattering for the nonlinear
              {S}chr\"odinger equation with a potential}, J. Funct. Anal. \textbf{170} (2000), no.~1, 37--68.






\bibitem{ZKGSBR10PRL}
F.~Z\"ahringer, G.~Kirchmair, R.~Gerritsma, E.~Solano, R.~Blatt, and C.~F.
  Roos, \emph{Realization of a quantum walk with one and two trapped ions},
  Phys. Rev. Lett. \textbf{104} (2010), 100503.

\end{thebibliography}
\end{document}